\theoremstyle{remark}
\newtheorem{definition}{\textbf{Definition}}
\newtheorem{theorem}{\textbf{Theorem}}
\newtheorem{remark}{\textbf{Remark}}
\begin{document}

\title{\Large \bf Differential Private Discrete Noise Adding Mechanism:\\
    Conditions, Properties and Optimization}

\author{
    Shuying Qin$^{\dag}$, Jianping He$^{\dag}$, Chongrong Fang$^{\dag}$, and James Lam$^{\ddag}$
        \thanks{This work was supported in part by the NSF of China under Grants 61973218 and 62103266, and in part by the General Research Fund 17200918.}
        \thanks{${\dag}$: The Dept. of Automation, Shanghai Jiao Tong University, the Key Laboratory of System Control and Information Processing, Ministry of Education of China, and Shanghai Engineering Research Center of Intelligent Control and Management, Shanghai, China. E-mail address: \{QSY-5208, jphe, crfang\}@sjtu.edu.cn. 
        ${\ddag}$: The Dept. of Mechanical Engineering, The University of Hong Kong, Pokfulam Road, Hong Kong E-mail address: james.lam@hku.hk.
        Preliminary results have been accepted by 2022 American Control Conference \cite{my}.}
}

\maketitle

% Tips: 先确定标题，再写摘要hhh
\begin{abstract}
% ---------- 背景
% 差分隐私是一种标准隐私框架，衡量数据在匿名化过程中的隐私损失。
Differential privacy is a standard framework to quantify the privacy loss in the data anonymization process.
% 为了保护隐私，通常采用随机噪声添加机制，继而引发针对机制的数据隐私和数据可用性之间的权衡问题。
To preserve differential privacy, a random noise adding mechanism is widely adopted, where the trade-off between data privacy level and data utility is of great concern.
% ---------- 当前工作
% 对于连续的性质，很好的被well studied, however, 对于离散的场景还没有很好的结果。
The privacy and utility properties for the continuous noise adding mechanism have been well studied.
However, the related works are insufficient for the discrete random mechanism on discretely distributed data, e.g., traffic data, health records.
% ---------- 本文整体一句话概括
% 本文focus on 离散过程，得到研究基本条件+一般机制的性质+考虑图和优化
This paper focuses on the discrete random noise adding mechanisms.
We study the basic differential privacy conditions and properties for the general discrete random mechanisms, as well as the trade-off between data privacy and data utility.
% ---------- 本文具体
% 具体而言，我们推导对于保护离散数据的加噪机制的一般性差分隐私条件。
Specifically, we derive a sufficient and necessary condition for discrete $\epsilon$-differential privacy and a sufficient condition for discrete $(\epsilon,\delta)$-differential privacy, with the numerical estimation of differential privacy parameters.
% 这些条件可以用于加入任意离散噪声的机制的差分隐私性质。
These conditions can be applied to analyze the differential privacy properties for the discrete noise adding mechanisms with various kinds of noises.
% 在隐私保证下，我们最大化数据效用的离散机制，其中有效性由机制输入输出间统计特性的相似程度(WD距离)刻画。
Then, with the differential privacy guarantees, we propose an optimal discrete $\epsilon$-differential private noise adding mechanism under the utility-maximization framework, where the utility is characterized by the similarity of the statistical properties between the mechanism's input and output.
% 最优机制的离散随机噪声概率分布是类楼梯形的，分布的最优性通过仿真加以验证。
For this setup, we find that the class of the discrete noise probability distributions in the optimal mechanism is Staircase-shaped.

\end{abstract}

\begin{IEEEkeywords}
  Differential privacy, Discrete random mechanism, Noise adding process, Wasserstein distance.
\end{IEEEkeywords}

\section{Introduction}

\subsection{Background}
% Introduction --Dp介绍
% 对于数据发布，即防止信息的重识别，数据匿名化是保护隐私的一个重要途径。
Data anonymization, namely, preventing information from being re-identified \cite{Sam01}, is an important approach to protect data privacy in data publishing.
% 随机加噪是数据匿名化的经典方法之一。
Random noise adding mechanism is a classic method to achieve data anonymization.
% 通常，很少存在能完全保护隐私的机制。
Usually, few random mechanisms can fully protect privacy, i.e., the privacy loss is non-negligible.
% 为进一步量化这个机制带来的隐私损失，(过渡到DP) 一些隐私框架。
To quantify the privacy loss, many privacy frameworks are proposed, include information-theoretic privacy\cite{moulin2003information}, differential privacy\cite{dwork2006calibrating}, and privacy based on secure
multiparty computation\cite{lindell2005secure}, etc.
% 大家的差异
The frameworks differ mostly in the privacy guarantee strength.
% 引入DP
Due to the strong privacy guarantee brought by differential privacy, this  brand-new privacy framework has received wide attention.
% DWork引入差分隐私这一全新的隐私框架。
It is introduced by Dwork \textit{et al.}\cite{dwork01}, where the idea is inspired by the probabilistic encryption.
% 差分隐私概念的创新点在于这是一种面向匿名过程(噪声添加机制)，而非数据本身的性质，这是受到SM84启发。
The innovation lies in that it is a property towards the anonymization process (i.e., a random noise adding mechanism) rather than the datasets.
% 多亏了这个有用的性质，该隐私框架在许多领域中也得以广泛应用。
Thanks to such a useful property, the differential privacy framework is widely employed in many areas, such as distributed optimization\cite{optimization1, optimization2}, control and network systems\cite{cortes2016differential, katewa2015protecting, han2018privacy, WANG202187}, filtering\cite{field-03, le2020differentially} and others\cite{guo2021topology, field-01, wei2020federated, wang2022consensus}, etc.
% 能够获得隐私保护，以分析任意信息如何泄露。
Meanwhile, it is able to obtain privacy guarantees and analyze how much information is leaked, e.g., when processing the telemetry data\cite{DKY17}, census data\cite{GAP18}, and medical data\cite{lv2021security}, etc.
% 与此同时，通过加噪机制实现的隐私保护，对数据有效性有必然影响。
Note that the differential privacy preservation achieved by the random noise adding mechanisms is at the cost of the data utility.
Many scholars are dedicated to studying the trade-off between the privacy level and utility for the random noise adding mechanisms.

\subsection{Motivations}
% Motivations --Conditions
% 有很多离散数据，通过加噪机制以实现隐私的保护。
There is a large amount of discrete data in practice, e.g., census records, traffic data, etc.
The data privacy is to be preserved by the random noise adding mechanisms.
% 在为保证最终发布数据的可解释性，机制中所加的噪声也必须离散。
To ensure the interpretability of the protected numerical discrete data, the random added noises in the mechanisms should be discretely distributed, i.e., discrete noise adding mechanisms.
% 研究学者已在随机噪声满足连续利普希茨连续可微时面向连续数据展开了一系列privacy condition研究。
When the random noises satisfy the continuous Lipschitz and continuous differentiability, researchers have carried out a series of differential privacy condition studies on the continuous data.
% 由于离散数据没有连续/可导等条件，如果直接将连续噪声添加机制的条件用于离散机制，会出现什么问题。
However, the conditions of the Lipschitz continuity and the differentiability are not guaranteed under the discrete scenarios.
It is unclear what would be the problem if the continuous differential privacy conditions were directly applied to the discrete noise adding mechanisms.
% Motivations --Properties
% 对于连续噪声添加机制保证的差分隐私性，离散化后仍否仍然能保证差分隐私性质，也不清楚。
Besides, it remains unknown whether the properties for the well-known continuous noise adding mechanisms (e.g., the \textit{Laplacian} and the \textit{Gaussian} mechanisms) can be maintained for the discrete ones, and whether more differential privacy properties are available.
These issues call for discrete privacy-critical studies to ensure the deployment of discrete differential private mechanisms.

% Motivations --Utility
% 进一步为了提高数据隐私保护过程中的可用性，考虑权衡问题。
Moreover, to improve the utility for the published data, the trade-off between the differential privacy level and utility for the discrete random noise adding mechanisms should be considered.
% 现有的研究将有效性考虑为代价函数。
Most of the existing studies \cite{ghosh2012universally, gupte2010universally, staircase} model the utility function as a general function depending on the noise added to the query output.
% 这个有效性测度是合理的，但是间接的，从最小化噪声的角度考虑最大化效用。
This utility measure is reasonable but indirect, since the utility is maximized in terms of minimizing the noises (e.g., the magnitude, the variance).
% 为了是有效性刻画更加直接，一个挑战在于有无更直观的除了噪声性质之外的有效性测度。
To make the utility metric more intuitive, one challenge is whether there exists a utility function on the level of distortion after noise addition to the data.
% 一个新的观点在于从机制输入输出的统计特性的相似程度进行衡量。
A new insight is given by the similarity degree of the statistical properties between the original data and the noise added data (i.e., the mechanism's input and output).
% % 因此，有必要找到一个合适的距离函数去衡量相似性（满足基本定义+保持几何特性），使得对机制有效性有更直观的解释。
% Therefore, it is worth to find a distance function to measure this similarity, where the function should satisfy the basic distance properties (non-negativity, identity of indiscernible, symmetry and triangle inequality), and had better take into consideration the geometric properties between the input and output probability distributions.
% 鉴于输入输出都是随机变量，相似程度可以由他们的概率分布体现，比起方差、期望等统计信息能够更全面的表征随机变量特性。
Given that the inputs and outputs are random variables, the similarity degree can be captured by the probability distributions, a more comprehensive characterisation than the statistical information such as variance and expectation, etc.
% 具体而言，常常通过概率分布之间的距离函数去量化两个分布间的相似程度
Specifically, the degree of similarity is usually quantified by the distance function between the probability distributions \cite{chung1989measures}.
% 常见的距离指标有KL散度，JS散度，WD距离。
The commonly used distance functions of interest include Kullback-Leibler (KL) divergence\cite{van2014renyi}, Jensen–Shannon (JS) divergence\cite{lin1991divergence}, and Wasserstein distance\cite{panaretos2019statistical}.
% [为什么要用WD距离刻画]
% 本文，我们采用Wasserstein距离作为有效性指标，相比起KL散度满足距离的基本性质（非负性，同一性，对称性，三角不等式），相比起JS散度考虑了概率分布之间的几何特性。
In this paper, we adopt the Wasserstein distance as the utility metric.
In contrast to the KL divergence, it satisfies the basic properties of distance (non-negativity, identity of indiscernible, symmetry and triangle inequality).
Compared with the JS divergence, it takes into account the geometric properties between two probability distributions.

% Motivations --Optimal
% 更为重要的是，有了有效性模型后，最好能够提供一种可实现的最优离散噪声添加机制，在差分隐私下最优化utility。
With the innovative utility model, it would be desirable to provide an implementable discrete random noise adding mechanism for the utility optimization problem.
% WD距离随之而来的问题
However, the explicit expressions for the general form of the Wasserstein distance are rare, except the one-dimensional Gaussian cases.
% 直接将其作为优化模型，会是一个非凸优化问题。
Besides, even adopting the one-dimensional case directly as an optimization objective, it is still a non-convex optimization problem.
% 所以有必要找到问题的等价形式，使非常规的优化问题转化为可解的。
Therefore, it is necessary to find an equivalent form of the primal problem to transform the unconventional optimization problem into a solvable one.

\subsection{Contributions}
% 基于上述分析，我们研究离散噪声添加机制的差分隐私性和有效性。
Motivated by the above observations, in this paper, we study the differential privacy conditions, properties and utility optimization for the discrete random noise adding mechanisms.
% 从离散数据定义出发，我们首先提出离散机制。
Beginning with the definition of the discrete data, we first clarify the discrete random noise adding mechanism.
% --对于差分隐私性质，和现有工作的比较
As for the differential privacy analysis, we find that the conditions for the discrete differential private mechanism are further simplified compared with the conditions for the continuous mechanism\cite{he2020differential}.
The differential privacy parameters estimation results remain a certain similarity.
Also, compared with the continuous random noise adding mechanism, the differential privacy properties hold well in most discrete scenarios, e.g., the discrete \textit{Gaussian}, \textit{Laplacian} and \textit{Exponential} mechanisms.
Especially, our results for the discrete Gaussian noise adding mechanism are consistent with the literature\cite{canonne2020discrete} to some extent.
More concretely, we obtain the same differential privacy properties and similar differential privacy parameters estimation.

% --对于有效性，我们的创新点在于从机制的输入输出这一角度出发构建有效性测度。
Moreover, as for the trade-off between the privacy level and utility, we select the Wasserstein distance as a new utility measure.
The innovation lies in that the utility model measures the distance between the input and output of the proposed mechanism, by taking the geometric properties of these two discrete distributions into account.
% 于是，我们提出原问题的等价形式，将非凸优化问题转化为线性规划问题。
Then, we propose an equivalent form of the primal problem that transforms the non-convex optimization problem into a linear programming problem.
Finally, we obtain the optimal discrete $\epsilon$-differential private mechanism with the Simplex Method \cite{nelder1965simplex}.
% 比较不同工作对于机制隐私性和有效性的分析。
In Table \ref{table:work comparison}, we compare various works on the differential privacy and the utility properties for the random noise adding mechanisms.
\begin{table}[t]
\centering
    \caption{Comparison of Works on Mechanism Privacy and Utility}
    \label{table:work comparison}
    \begin{tabular}{c|cccc}
    \toprule
    \multirow{3}{*}{\textbf{\begin{tabular}[c]{@{}c@{}}Differential\\ Privacy\\ Properties\end{tabular}}}
    & \textbf{Works}
        & \textbf{\cite{he2020differential}} & \textbf{\cite{canonne2020discrete}} & \textbf{This work} \\ \cmidrule(l){2-5} 
        & \textbf{Scenario}
            & Continuous & Discrete & Discrete \\ \cmidrule(l){2-5}
        & \textbf{Scope}
            & General & Specific & General \\ \midrule
    \multirow{3}{*}{\textbf{\begin{tabular}[c]{@{}c@{}}Utility\\ Properties\end{tabular}}}
    & \textbf{Works}
        & \textbf{\cite{gupte2010universally}}
            & \textbf{\cite{staircase}}
            & \textbf{This work} \\ \cmidrule(l){2-5} 
        & \textbf{\begin{tabular}[c]{@{}c@{}}Utility\\ Metric\end{tabular}}
            & \multicolumn{2}{c}{\begin{tabular}[c]{@{}c@{}} The minimum \\ noise magnitude/variance \end{tabular}}
            & \begin{tabular}[c]{@{}c@{}}Wasserstein\\ distance\end{tabular}  \\ \cmidrule(l){2-5} 
        & \textbf{\begin{tabular}[c]{@{}c@{}}Optimal\\ mechanism\end{tabular}}
            & \begin{tabular}[c]{@{}c@{}}\textit{Geometric}\\ (Noises related) \end{tabular}
            & \begin{tabular}[c]{@{}c@{}}\textit{Staircase}\\ (Noises.)\end{tabular}
            & \begin{tabular}[c]{@{}c@{}}\textit{Staircase}\\ (Inputs.)\end{tabular} \\ \bottomrule
    \end{tabular}
\end{table}

% 并用两三句话写一下扩展点
The differences between this paper and its conference version \cite{my} include
% 对离散指数机制的分析
i) the analysis of the differential privacy properties for the discrete Exponential noise adding mechanism,
% 对数据有效性和数据隐私之间权衡的考虑
ii) the optimization of the discrete noise adding mechanisms, i.e., maximizing the data utility under the differential privacy guarantees, 
% 仿真
iii) the sufficient simulations on the optimal discrete noise adding mechanism.

The main contributions are summarized as follows.
\begin{itemize}
    % Contributions --离散Dp条件+Dp性质(whether)+Dp参数估计
    \item \textit{(Conditions.)}
    We investigate general differential privacy conditions for the discrete noise adding mechanisms, 
    i.e., a sufficient and necessary condition for $\epsilon$-differential privacy, and a sufficient one for $(\epsilon,\delta)$-differential privacy.
    Moreover, we obtain a numerical method to estimate the two privacy parameters $\epsilon$ and $\delta$.
    
    % Contributions --分析现有机制的性质；选择ε-dp的与optimal做utility的对
    \item \textit{(Properties.)}
    We analyze the differential privacy properties and provide the privacy guarantees for the representative discrete noise adding mechanisms with the obtained theories.
    In detail, we investigate the mechanisms under the discrete Gaussian, Laplacian, Staircase-shaped, Uniform, Exponential distributed noises, respectively.
    
    % Contributions --保证dp的最大化utility的最优分布
    \item \textit{(Optimization.)}
    We study the utility-maximization optimization for the $\epsilon$-differential private mechanisms.
    Defining the utility as the Wasserstein distance between the mechanism input and output probability distributions, we derive an optimal discrete Staircase-shaped noise adding mechanism.
    Further, we conduct extensive simulations to verify its optimality.

\end{itemize}

\subsection{Organization}
The remainder of this paper is organized as follows.
The related works are investigated in Section \ref{sec:Related Work}.
Section \ref{sec:Preliminaries} states necessary preliminaries.
In Section \ref{sec:Main Results}, we give theoretical differential privacy conditions and parameters estimation, perform further analysis on the differential privacy properties, and propose a discrete differential private mechanism with the maximum utility.
Section \ref{sec:Simulation} provides evaluations for the mechanism optimality.
Finally, conclusions are given in Section \ref{sec:Conclusion}.

\section{Related Work} \label{sec:Related Work}
% Related Work --Dp定义
Since Dwork\cite{dwork01} first introduced the \textbf{differential privacy definition} in 2006, it has become the flagship data privacy definition.
Shortly after it was proposed, numerous attack models and different scenarios are adapted to the variants and extensions of the differential privacy \cite{wang2016relation, cuff2016differential}.
% Sok
More recently, Desfontaines \textit{et al.} \cite{desfontaines2020sok} gave a systematic taxonomy of the existing differential privacy definitions (approximately 225 kinds).
They compared the definitions from seven dimensions, and showed how the new differential privacy definitions are formed with the combination of different dimensions.
This work allowed new practitioners to have a general idea of the differential privacy research area.

% Related Work --Dp连续情况
% 当前大多是关于连续噪声添加机制的差分隐私研究，以实现对连续数据的匿名保护。
The majority differential privacy researches focus on \textbf{the continuous random noise adding mechanisms}, in a bid to achieve anonymity protection for the continuously distributed data.
% 对于任一连续噪声添加机制的差分隐私分析，[何]
Regarding the differential privacy analysis for the general continuous random noise adding mechanisms,
He \textit{et al.}\cite{he2020differential} proposed a sufficient and necessary differential privacy condition, with the privacy parameters estimation.
The basic theories can be applied to analyze various random noises.
Then, they performed in-depth analysis on the differential privacy properties, and applied obtained theories on consensus algorithms.
% % [何]的理论研究很完备，但是直接用于分析离散噪声添加机制是不可行的。
% The existing theoretical work is well developed, but directly applying it to the discrete random noise adding mechanisms is not feasible.
% % 但是...
% For one reason, adding continuously distributed noise onto the discrete data will destroy the interpretability of the final results, undermining the very basis for differential privacy.
% Besides, the differential privacy properties may no longer be guaranteed in the discrete cases due to the finite-precise representation from the continuous one \cite{mironov2012significance}.
% % 也有许多针对单一机制的分析。
% 除了对普遍机制的差分隐私性分析，也有对各机制的讨论。
Apart from the related analysis for the general continuous random mechanisms, differential privacy is widely discussed under a specific continuous random noise adding mechanism \cite{mcsherry2007mechanism}.
% 高斯
For instance, the continuous Gaussian noise adding mechanism preserves $(\epsilon,\delta)$-differential privacy for the query functions with infinite dimensions and real values \cite{liu2018generalized}.
% 指数/拉普拉斯
Besides, the random mechanism with the continuous Laplacian distributed\cite{dwork01} noise guarantees $\epsilon$-differential privacy.
So far, the differential privacy properties for the continuous random noise adding mechanism have been widely studied, but it is unknown how the results are suitable for the discrete one (adding discrete random noise on discretely distributed data).

% Related Work --Dp离散情况
Recently, researchers have paid attention to \textbf{the discrete random differential private mechanisms}.
% 离散Exponential-dp
For instance, the \textit{Exponential} mechanism is a well-known discrete noise adding mechanism that guarantees $\epsilon$-differential privacy \cite{mcsherry2007mechanism, dong2020optimal}.
% 引入一个打分机制，并根据对该结果打的分数决定其输出的概率。
It aims to protect non-numerical discrete data.
In this mechanism, a scoring function is introduced for the query output, and then the final probability of the output is determined by the score.
% 对该机制的分析比较成熟，但是...不一定适用于我们提出的用于保护数值型离散数据的离散机制。
The analysis of the mechanism is mature, but it does not necessarily apply to the mechanism that protects numerical discrete data.
% 离散Gaussian-dp
Furthermore, Canonne \textit{et al.}\cite{canonne2020discrete} studied the differential privacy properties for the discrete Gaussian noise adding mechanism.
They obtained that the discrete \textit{Gaussian} mechanism guarantees essentially the same level of privacy and accuracy as the continuous one.
% 离散参数估计
Apart from the related properties, Koskela \textit{et al.}\cite{koskela2021tight} proposed a Fourier transform based numerical method to compute the differential privacy parameters for discrete-valued mechanisms.
Specifically, they evaluated the privacy loss for the discrete $(\epsilon,\delta)$-differential private mechanisms, and provided the lower and upper $(\epsilon,\delta)$-differential privacy bounds for the subsampled discrete \textit{Gaussian} mechanism.
% 但是...
Despite the excellent properties, we cannot apply the specific conclusions to general discrete noise adding mechanisms, which makes the analysis of differential privacy in distinct scenarios more difficult.

% Related Work --Tradeoff/最优设计
% 除了对于机制隐私性质的研究，很多学者开始研究机制中差分隐形性和有效性的权衡问题，也是机制的两大重要性质。
In addition to the extensive research on the differential privacy properties, some works further consider the fundamental \textbf{trade-off between the privacy level and utility for the random noise adding mechanisms}, which are the two vital properties for the mechanisms.
% 几何机制
Gupte \textit{et al.}\cite{gupte2010universally} found that  the optimal differential private mechanism is achieved by adding \textit{Geometric} distributed noise, on the basis of a fixed query sensitivity.
% utility model
Based on the decision theory, they took the information loss caused by the random noise uncertainty as the utility measure.
More concretely, the objective function was to minimize the worst case of the noise variance or the expected magnitude.
% 类楼梯机制
In this line of research, \cite{staircase} generalized the fixed sensitivity to an arbitrary value and derived the optimal noise with the \textit{Staircase}-shaped distribution.
The utility metric they adopted was the same as the one in \cite{gupte2010universally}.
% 但是...不能直观确定加噪对原始数据的影响
This utility model is rational and risk-averse, but it is hard to determine how the added noise affects the original data with this widely-used model directly.
% 结合上述相关工作的不足，我们对离散噪声添加机制进行研究。
Based on the related work mentioned above, in this paper, we investigate the discrete differential private noise adding mechanism.

\section{Preliminaries} \label{sec:Preliminaries}
In this section, we mainly introduce the discrete random noise adding mechanisms, the differential privacy definition and our utility metric for the random mechanisms.

\subsection{Preliminaries of Discrete Random Mechanisms} \label{subsec:Preliminaries of Discrete Random Mechanisms}
% Preliminary --离散/离散化
% 首先，我们明确本文所讨论的离散定量数据
First, we specify the discrete quantitative data discussed in this paper, by introducing a set of discrete numbers with interval $\Delta$, which is given by 
\begin{equation*}
    \mathcal{Z}_{\Delta} = \left\{ {k\left| {k = k_0 \Delta ,k_0 \in \mathbb{Z}} \right.} \right\}, \Delta \in \mathbb{R}^+.
\end{equation*}
% 这个集合可以看作是整数集合的扩展，区别仅在于将离散间隔任意化，用delta刻画。
% The new set $\mathcal{Z}_{\Delta}$ is an extension of the set of integers $\mathbb{Z}$, involving an arbitrary discrete interval $\Delta$.
% Z+
Denote $\mathcal{Z}^+_{\Delta}$ as a set of positive numbers in $\mathcal{Z}_\Delta$,
and $\mathcal{Z}^n_{\Delta}$ as a set of $n$-dimensional column vectors $L = \left[ {{x_1},{x_2}, \ldots ,{x_n}} \right]^T$, where $x_i \in \mathcal{Z}_{\Delta}, i\in V=\{1,2,\ldots,n\}$.

% Preliminary --离散机制定义/别名
% 有了离散数据的基本概念，接下来考虑概率意义下的离散噪声添加机制，是实现隐私保护的一个过程。
With the basic concept of the discrete data $\mathcal{Z}_{\Delta}$, we then introduce a discrete random noise adding mechanism, which is utilized to achieve privacy protection.
% 具体而言，这个机制是一个随机函数，输入数据集中的一条记录，加入适当噪声后，返回一个输出。
Specifically, this mechanism is a randomized function that takes the original data as input and returns an output after adding random noises.
% 机制
Let $\Omega, \Theta, \mathcal{S} \subseteq \mathcal{Z}_\Delta^n$ represent the $n$-dimensional input, noise, output space, respectively.
Note that $\mathcal{S} \triangleq \Omega \oplus \Theta$, where $\oplus$ refers to the sum of elements with the same dimension.
Then, the general discrete random noise adding mechanism $\mathcal {A}: \Omega \to \mathcal{S}$ is given by
\begin{equation} \label{eq:mechanism_general}
    \mathcal{A}\left( x \right) = x + h\left( \vartheta  \right),
    ~\forall x \in \Omega, \vartheta \in \mathbb{R}^n, h(\vartheta) \in \Theta,
\end{equation}
where $\vartheta = \left[\vartheta_1, \ldots, \vartheta_n\right]^T$ and the output $\mathcal{A}\left( x \right)$ is a $n$-dimensional random variable.
% 注意到若噪声最初本身可能并不是离散变量，那么，为了确保原始离散数据的有效性和可解释性，我们令h为离散化函数，以保证和输入有相同的测度。
Note that if the added noises are not discretely distributed at initial, then the interpretability and validity of the original discrete data will be destroyed.
To avoid this case, we define the function $h: \mathbb{R}^n \rightarrow \mathcal{Z}_\Delta^n$ as a discretization function, which maps the continuous added noise $\vartheta \in \mathbb{R}^n $ to the discrete one.
% 从概率分布的角度而言，一个一般化的离散化方式可以由下式表示：
In terms of probability distributions, we propose a general discretization method for the variable $\vartheta_i$ in every dimension, which is shown as
\begin{equation} \label{eq:discretize0}
    {p_{\vartheta_i} }\left( k \right) = \int_k^{k + \Delta} {f\left( \vartheta_i \right)d \vartheta_i},~ i\in V,
\end{equation}
where $k\in \mathcal{Z}_{\Delta}$ and $f\left( \vartheta_i \right)$ refers to the probability density function of the original continuous random noise.
The term $p_{\vartheta_i} \left(k\right)$ is the probability of the discretized random variable $\vartheta_i$ when $\vartheta_i=k$.

% 在离散分布的输入上加入离散分布的随机噪声的过程
In summary, the discrete random noise adding mechanism $\mathcal{A}$ represents the process of adding discrete random noise to the discretely distributed data.
% 简称
To make the expression more concise, we abbreviate it as the discrete random mechanism $\mathcal{A}$.
% 为了进一步区分机制，我们根据加入的噪声分布对机制进行分类。
Further, we distinguish the mechanisms based on the added discrete noise distributions.
% 我们定义拉普拉斯机制为加入拉普拉斯噪声的过程。同理，高斯机制，和类楼梯机制。
For instance, we call the approach to perturb the data by adding discrete Gaussian distributed random noise as the \textit{Gaussian} mechanism.
Similarly, we define the \textit{Laplacian} mechanism, the \textit{Staircase} mechanism, and the \textit{Exponential} mechanism, etc.

\subsection{Background on Differential Privacy}
% Background --针对机制
% 有了机制（对数据加噪的过程），我们考虑这个机制的性质——差分隐私性。/缩写
In this subsection, we introduce the \textit{differential privacy (DP)} properties for the discrete random mechanism $\mathcal{A}$.
% 如果这个机制能够实现对相邻数据的隐私保护，我们称它为差分隐私的机制。
In other words, if the mechanism realizes the privacy protection of the numerical discrete data measured by differential privacy, we call it a differential private mechanism.

% Background --邻域定义
% 在解释差分隐私前，我们对要保护的数据进行说明。
First, we adopt the adjacency definition to illustrate the protected data.
% 考虑两个只差了一个记录的n维数据，我们的目标是保证该单一记录的隐私。
Consider two $n$-dimensional data that differ only in one dimension.
Our goal is to preserve the privacy of this single record.
% 即,我们关注于数据的值而非数据的存在性。
That is, we are concerned with the \textit{value} of the record rather than its \textit{presence} in the data.
% 基于此，我们给出邻域定义。
Based on this observation, we give the definition of $m$-adjacency for two discrete vectors by referring to the studies in \cite{han2016differentially, huang2015differentially, nozari2017differentially, he2020differential}.
\begin{definition} [$m$-adjacency] \label{def:adjacent}
    Given $m \!\in\! \mathcal{Z}_{\Delta}^+$, 
    the pair of vectors $x,y \in {\mathcal{Z}_{\Delta}^n}$ is $m$-adjacent, if for a given $i_0 \in V$, we have
        \begin{equation} \label{adjacent}
            \forall i\in V,~
            {\left| {{x_i} - {y_i}} \right|} \le \left\{ {\begin{aligned}
            &m, &&{i = {i_0}}; \\
            &0, &&{i \ne {i_0}}.
            \end{aligned}} \right.
        \end{equation}
\end{definition}\vspace{2pt}
From (\ref{adjacent}), we obtain that the pair of $m$-adjacent vectors $x$ and $y$ has the same size and differs only in one record with the same dimension, where the difference is no more than $m$.

% Background --差分隐私定义/解释
% 明确了待保护的数据对象后，对于机制的差分隐私性，我们给出一种主流的最常用的差分隐私定义。
Next, we present the definition of $(\epsilon,\delta)$-differential privacy for a discrete random mechanism $\mathcal {A}$.
% 第一种
\begin{definition} [$(\epsilon,\delta)$-differential privacy]\label{def:privacy}
    A discrete random mechanism $\mathcal {A}$ is $(\epsilon,\delta)$-DP if for any pairs of $m$-adjacent vectors $x$ and $y$, and for all $\mathcal {O}\subseteq \mathcal{S}$, we have
    \begin{equation} \label{eq:privacy}
        \Pr \left\{ {{\cal A}\left( x \right) \in {\cal O}} \right\} \le {e^\epsilon }\Pr \left\{ {{\cal A}\left( y \right) \in {\cal O}} \right\} + \delta.
    \end{equation}
\end{definition}\vspace{2pt}
% 直观理解
Intuitively speaking, a DP mechanism will not reveal more than a bounded amount of information about the data in the probabilistic perspective.
% 两个参数
Note that $\epsilon$ and $\delta$ are two key DP parameters.
% 参数1
The positive number $\epsilon$ measures the privacy maintained by the discrete random mechanism.
More concretely, the term $e^\epsilon$ quantifies the privacy loss across the mechanism outputs \cite{DN03}.
With the parameter $\epsilon \rightarrow 0$, the mechanism causes less privacy loss, i.e., achieves better degree of privacy protection.
% 参数2
Moreover, for cases where the upper bound $\epsilon$ does not hold (privacy loss larger than $e^\epsilon$), the parameter $\delta$ functions to compensate for outputs by allowing a small probability of error.
% (\epsilon,0)
Specifically, if the strong DP property holds ($\delta=0$), we denote $\epsilon$-DP to replace $(\epsilon,0)$-DP for a simplified expression.
% Background --增加表格/根据Sok七种分类
% 为了更好的分析机制的差分隐私特性，我们从不同维度，对DP进行假设和解释。
By referring to \cite{desfontaines2020sok}, more detailed DP assumptions and explanations are given in Table \ref{table:Preliminaries of DP}.
\begin{table}[ht]
\centering
    \caption{Preliminaries of the \textit{DP} Properties for Mechanisms}
    \label{table:Preliminaries of DP}
    % \begin{tabular}{@{\hspace{1pt}}l@{\hspace{12pt}}|@{\hspace{5pt}}l@{\hspace{1pt}}}
    \begin{tabular}{l|l}
    \toprule
    \textbf{Dimension}
        & \textbf{Explanation} \\ \midrule
    \begin{tabular}[c]{@{}l@{}}
        \textbf{Privacy Cost / Loss}\\
        \quad\quad\quad $\epsilon, \delta$
    \end{tabular}
        & \begin{tabular}[c]{@{}l@{}} Qualified by $\epsilon$ and $e^\epsilon$, respectively,\\ allowing a small probability of error $\delta$.\end{tabular} \\ \midrule
    \begin{tabular}[c]{@{}l@{}}
        \textbf{Adjacency Property}\\
        \quad\quad\quad $m$
    \end{tabular}
        & \begin{tabular}[c]{@{}l@{}}Assume datasets have the same size,\\ and differ only in one record. \\  The difference is no more than $m$. \end{tabular} \\ \midrule
    \begin{tabular}[c]{@{}l@{}}
        \textbf{Privacy Level}\\
        \quad\quad\quad expressed in $x,y$
    \end{tabular}
        & \begin{tabular}[c]{@{}l@{}}Associate the data in each dimension\\ with the same acceptable level of risk.\end{tabular} \\ \midrule
    \begin{tabular}[c]{@{}l@{}}
        \textbf{Randomness:} \\
        \quad\quad\quad $\Pr\! \left\{ {{\cal A}(\cdot) \!\in\! {\cal O}} \right\}$
    \end{tabular}
        & \begin{tabular}[c]{@{}l@{}}
            Only comes from the mechanism itself, (i.e., \\the added random noise $\theta$). The input\\ follows a certain probability distribution.
            \end{tabular} \\ \midrule
    \textbf{Computational Power}
        & Assume infinite for attackers. \\ \bottomrule
    \end{tabular}
\end{table}
% 在下一章节，我们将基于两种定义，对机制的差分隐私性进行分析并给出参数的数值估计方法。
In the following sections, we analyze the DP properties for any given discrete random mechanism (i.e., $\epsilon$-DP, or $(\epsilon,\delta)$-DP), and give numerical estimation methods for the two DP parameters.

\subsection{Wasserstein Distance}
% 在机制实现保护隐私过程中，我们同样关心机制输入输出的统计特性的相似性，即我们之后会提到的utility。
In the privacy-preserving process implemented by the discrete random mechanism $\mathcal {A}$, in addition to the degree of privacy protection, we also focus on another crucial mechanism property, \textit{utility}, which is characterized by the similarity of the statistical properties of the mechanism's input and output.
% 为便于后续理解，我们在这一节对utility测度-WD距离的一般性定义进行介绍，更详细的信息在后续章节。
In this subsection, we present a general definition of the utility measure, Wasserstein distance.

% Wasserstein --一般的
\begin{definition} [$p$-Wasserstein distance]\label{def:p-WD}
    The $p$-Wasserstein distance between two probability measures $u$ and $v$ on $\mathbb{R}^d$ is
    \begin{align*}
        W^p(u,v)
        = \mathop {\inf }\limits_{X\sim u,Y\sim v} {\left( {\mathbb{E} {{\left\| {X - Y} \right\|}^p}} \right)^{\frac{1}p}},~ p\ge 1,
    \end{align*}
    where $X$ and $Y$ are two $d$-dimensional random vectors with marginals $u$ and $v$.
    The infimum is taken over all joint distributions of the random variables $X$ and $Y$, provided that the $p$-th moments exists.
\end{definition}

% WD直观解释
Intuitively, the distance $W_p(u, v)$ is the minimal effort required to reconstruct $u$'s mass distribution into the $v$'s.
The effort is quantified by moving every unit of mass from $x$ to $y$ with the cost ${\Vert x-y \Vert}^p$.
% Wasserstein --一阶离散的
% 本文关注一阶情况，参考文献，我们得到对于两个连续分布的一阶WD距离的闭式解。
In this paper, we focus on the special case of $1$-Wasserstein distance on $\mathbb{R}^{d=1}$.
By referring to \cite{panaretos2019statistical}, the explicit formulae of the Wasserstein distance with $p=1,d=1$ is shown as:
\begin{align} \label{WD-3}
    W^{p=1}(X, Y)
    = \int_{\mathbb{R}}\left|F_{X}(t)-F_{Y}(t)\right| \mathrm{d} t,
\end{align}
where $F_X(\cdot),F_Y(\cdot)$ are the cumulative distribution functions (CDF) of the continuous random variables $X$ and $Y$, respectively.
% 推广到离散情况就是utility model了
Further, we extend (\ref{WD-3}) to the discrete situations as the basis of our utility model, with more detailed information illustrated in Section \ref{subsec:The Optimal DP Mechanism}.

% Preliminary --整个参数表
Table \ref{table:Primary Notations} summarizes several notations in this paper.
\begin{table}[ht]
\centering 
    \caption{Primary Notations}
    \label{table:Primary Notations}
    \begin{tabular}{l|l}  
        \toprule \\[-8pt]
        \textbf{Notation} & \textbf{Description}\\
        \midrule
        % $\epsilon$ & a DP parameter about the privacy loss\\
        % $\delta$ & a DP parameter about the small probability of error\\
        $\Delta$ & The minimum discretization distance\\
        % $m \in \mathcal{Z}_\Delta$ & the adjacency between two vectors in measure of $\Delta$\\
        $x,y \in \mathcal{Z}_\Delta^n$ & A pair of $m$-adjacent vectors\\
        $V$ & A set of dimensions, $=\{1,2,\cdots,n\}$\\
        \midrule
        $\Omega \subseteq \mathcal{Z}_\Delta^n$
            & The set of inputs of random mechanisms\\
        $\mathcal{S} \subseteq \mathcal{Z}_\Delta^n$
            & The set of possible outputs of random mechanisms\\
        $\mathcal{O} \subseteq \mathcal{S}$
            & The subset of possible outputs,\\
            & $\mathcal {O}_i$ is a set of $i$-th column element in $\mathcal{O}$, $i \in V$\\
        $\mathcal {A}: \Omega \to \mathcal{S}$\!\!
            & A discrete random mechanism (probabilistic)\\
        $\mathcal{A}(\cdot)$
            & The output of the mechanism $\mathcal {A}$\\
        \midrule
        $\Theta \subseteq \mathcal{Z}_\Delta^n$
            & The set of noises added to the mechanism inputs\\
        $\theta \in \Theta$
            & The noise added to the mechanism input,\\
            & where $\theta_i$ is the $i$-th element of the noise, $i \in V$\\
        \midrule
        $p_{x / \theta / {x+\theta}}$
            & The input / noise / output probability distribution\\
        $p_{x / \theta / {x+\theta}}(\cdot)$
            & The Probability Mass Function (PMF)\\
        ${p_{\theta_i}}\left(k\right)$
            & The probability value of $\theta_i$ at point $k$,\\
            & which is a simplified expression of $Pr(\theta_i = k)$ \\
        $P_{x / \theta / {x+\theta}}(\cdot)$\!\!
            & The Cumulative Distribution Function (CDF)\\
        \bottomrule
    \end{tabular}
\end{table}

\section{Main Results} \label{sec:Main Results}
In this section, we first propose the DP conditions for the discrete random mechanism $\mathcal{A}$, followed by the estimation methods for the DP parameters $\epsilon$ and $\delta$.
Next, we analyze the DP properties for five representative mechanisms.
Then, we consider the trade-off between the privacy level and utility, deriving a $\epsilon$-DP mechanism with the maximum utility.

% Main Results --简化机制
% 为了简化机制，我们默认所加入的噪声均为离散的，即要么是离散过后的，要么原来就是离散的
In this paper, we consider the added noise is discrete by default, i.e., either it has been discretized by the method shown in (\ref{eq:discretize0}) or it is originally discretely distributed.
The simplified discrete random mechanism $\mathcal{A}$ is rewritten as:
\begin{equation} \label{eq:mechanism}
    {\cal A}\left( x \right) = x + \theta,
\end{equation}
where $x\in \Omega\subseteq \mathcal{Z}^n_{\Delta}$, $\theta \in \Theta\subseteq \mathcal{Z}^n_{\Delta}$.
Here, we use $\theta$ to substitute $h\left( \vartheta  \right)$, a function of continuous random variables
\footnote{
Notice that the random noise $\vartheta$ is continuously distributed and we denote $h\left( \vartheta  \right)$ as a discretization process.
The discretization result is a discretely distributed random variable $\theta \in \Theta$, which is the basis for the implementation of the discrete random mechanism $\mathcal{A}$.
To make the mechanism expression more concise, we replace the term $h\left( \vartheta  \right)$ with $\theta$.
Both of them essentially represent the random variables with discrete distributions.
We can use the simplified mechanism (\ref{eq:mechanism}) to analyze the DP properties.
}
mentioned in the general discrete random mechanism (\ref{eq:mechanism_general}).

\subsection{DP Conditions and Parameters Estimation} \label{subsec:DP Conditions}
In this subsection, a sufficient and necessary condition for the $\epsilon$-DP mechanism and a sufficient condition for the $(\epsilon, \delta)$-DP mechanism are given by Theorem \ref{theorem-epsilon} and Theorem \ref{theorem-epsilon-delta-02}, respectively, with the numerical DP parameters estimation.

% Conditions --充要条件
% 充分必要条件
First of all, we consider the $\epsilon$-DP conditions for the discrete random mechanism $\mathcal {A}$.
\begin{theorem}\label{theorem-epsilon}
    The discrete random mechanism $\mathcal{A}$ satisfies $\epsilon$-DP if and only if (iff) there exists a positive constant $c_b$ such that
    \begin{equation} \label{eq:c2 discrete}
        \underset{\forall m_0\in \left[ -m,m \right],m_0\in \mathcal{Z}_{\Delta}, \forall i\in V}{\mathop{\sup }}\,\frac{{p_{\theta_i}}\left( k - m_0 \right)}{{p_{{\theta_i}}}\left( k \right)}={c_b},
    \end{equation}
    where $k\in \mathcal{Z}_{\Delta}$.
    Moreover, we have that $c_b$ is an increasing function of $m$.
    The privacy parameter $\epsilon$ is estimated by
    \begin{equation} \label{eq:c2 discrete epsilon}
        \epsilon = \log(c_b).
    \end{equation}
\end{theorem}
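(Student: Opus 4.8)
The plan is to reduce the set-based privacy inequality (\ref{eq:privacy}) to a pointwise likelihood-ratio condition, which is legitimate precisely because the output space is discrete, and then to exploit the single-coordinate structure of $m$-adjacency to rewrite that ratio in terms of the marginal noise PMF $p_{\theta_i}$. First I would establish the pointwise equivalence: with $\delta = 0$, the inequality $\Pr\{\mathcal{A}(x) \in \mathcal{O}\} \le e^\epsilon \Pr\{\mathcal{A}(y) \in \mathcal{O}\}$ holding for every $\mathcal{O} \subseteq \mathcal{S}$ is equivalent to $\Pr\{\mathcal{A}(x) = s\} \le e^\epsilon \Pr\{\mathcal{A}(y) = s\}$ holding for every output point $s \in \mathcal{S}$. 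The reverse direction follows by summing the pointwise inequality over $s \in \mathcal{O}$; the forward direction follows by specializing to the singleton $\mathcal{O} = \{s\}$, which carries positive probability mass in the discrete setting.

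Next I would compute the pointwise ratio explicitly. Using $\mathcal{A}(x) = x + \theta$ from (\ref{eq:mechanism}), we have $\Pr\{\mathcal{A}(x) = s\} = p_\theta(s - x)$. By Definition \ref{def:adjacent}, an $m$-adjacent pair $x, y$ agrees in every coordinate except some $i_0 \in V$, where $|x_{i_0} - y_{i_0}| \le m$. Since the noise coordinates are independent with marginals $p_{\theta_i}$, the joint PMF factorizes and every factor with index $i \ne i_0$ is identical in numerator and denominator, so
\begin{equation*}
    \frac{\Pr\{\mathcal{A}(x) = s\}}{\Pr\{\mathcal{A}(y) = s\}} = \frac{p_{\theta_{i_0}}(s_{i_0} - x_{i_0})}{p_{\theta_{i_0}}(s_{i_0} - y_{i_0})} = \frac{p_{\theta_{i_0}}(k - m_0)}{p_{\theta_{i_0}}(k)},
\end{equation*}
where I set $k = s_{i_0} - y_{i_0} \in \mathcal{Z}_\Delta$ and $m_0 = x_{i_0} - y_{i_0}$, so that $|m_0| \le m$ and $m_0 \in \mathcal{Z}_\Delta$.

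The identification of $c_b$ is then immediate. As $i_0$ ranges over $V$, $m_0$ over $[-m, m] \cap \mathcal{Z}_\Delta$, and $k$ over $\mathcal{Z}_\Delta$, the pointwise condition holds uniformly iff $e^\epsilon$ dominates the supremum of this ratio, which is exactly the left-hand side of (\ref{eq:c2 discrete}). Hence $\epsilon$-DP holds iff that supremum is a finite constant $c_b$, and the smallest admissible $\epsilon$ is $\log(c_b)$, establishing (\ref{eq:c2 discrete epsilon}). For the monotonicity claim, enlarging $m$ enlarges the feasible index set $\{m_0 \in \mathcal{Z}_\Delta : |m_0| \le m\}$, so the supremum defining $c_b$ cannot decrease; thus $c_b$ is increasing in $m$.

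I expect the main obstacle to be the rigorous justification of the pointwise reduction and the factorization. The reduction is clean here only because discreteness gives every singleton positive mass — the analogous statement for continuous mechanisms would instead require an essential-supremum argument over densities. The factorization hinges on the independence of the noise across coordinates, consistent with the per-coordinate discretization in (\ref{eq:discretize0}); if independence is not assumed, the cancellation must be replaced by an argument on the conditional PMF of $\theta_{i_0}$ given the remaining coordinates, and one must then verify that the adjacency perturbation confined to coordinate $i_0$ leaves that conditional law governed by the marginal $p_{\theta_{i_0}}$.
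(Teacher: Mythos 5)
Your proposal is correct, but it takes a genuinely different and more streamlined route than the paper. The paper proves necessity by contradiction: assuming the supremum is infinite, it constructs explicit $m_0$-adjacent inputs and interval-shaped output sets $\mathcal{O}_{i_0}$ in three cases (according to the sign and size of $m_0$), bounds the resulting sums by an auxiliary constant $C$, and shows the probability ratio exceeds $M/C$ for arbitrary $M$; sufficiency is then proved separately by decomposing $\Pr\{\mathcal{A}(x)\in\mathcal{O}\}$ as a product over coordinates and summing the bound $p_{\theta_{i_0}}(k-m_0)\le c_b\, p_{\theta_{i_0}}(k)$ over $\mathcal{O}_{i_0}$. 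You instead collapse both directions into a single observation: in the discrete setting the set-level inequality over all $\mathcal{O}\subseteq\mathcal{S}$ is equivalent to the pointwise inequality on singletons, and the singleton likelihood ratio factorizes to exactly the quantity appearing in the supremum. This buys you brevity, a transparent iff, and one genuine technical advantage: a singleton is automatically a product set, and summing the pointwise bound over $s\in\mathcal{O}$ covers arbitrary subsets, whereas the paper's factorization $\Pr\{\mathcal{A}(x)\in\mathcal{O}\}=\prod_i\Pr\{x_i+\theta_i\in\mathcal{O}_i\}$ tacitly requires $\mathcal{O}$ to equal the product of its coordinate projections. What the paper's longer construction buys is an explicit exhibition of the violating output events when the supremum is unbounded. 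Both arguments rest on coordinate-wise independence of the noise, which the paper uses silently and you correctly flag as a hypothesis; your closing caveats about the essential-supremum analogue in the continuous case and about conditional laws under dependence are accurate and do not affect the validity of the proof under the paper's standing assumptions.
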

\begin{proof}
    Please see the proof in the Appendix \ref{proof 01}.
\end{proof}

% m0和参数的关系
We make some explanation about the relationship between the privacy cost $\epsilon$ and the adjacency $m$.
Theorem \ref{theorem-epsilon} shows that the privacy loss $e^\epsilon=c_b$ decreases with smaller adjacency $m$ for the pair of two input vectors.
It is consistent with our intuition that the original data with more similarity (smaller $m$) lead to lower privacy loss ($e^\epsilon$), i.e., guaranteeing better privacy (smaller $\epsilon$).

% $\epsilon$-DP的必要条件
Furthermore, the existence of the least upper bound $c_b$ in (\ref{eq:c2 discrete}) implies that the denominator ${p_{\theta_i}}\left( {k} \right)$ cannot be zero.
For this setup, we obtain a necessary condition for $\epsilon$-DP, i.e.,
\begin{equation} \label{eq:nece}
    \forall i\in V,~ {p_{\theta_i}}\left( {k} \right) > 0,~
    k\in \mathcal{Z}_{\Delta}.
\end{equation}

\begin{remark}
    % 作为连续情况的扩展，我们进一步探究该离散条件和连续条件的异同。
    We further explore the similarities and differences between the discrete DP conditions in Theorem \ref{theorem-epsilon} and the continuous results in \cite{he2020differential}.
    % 本质相同
    First, the criteria for the discrete random $\epsilon$-DP mechanisms (\ref{eq:c2 discrete}) has the same essence as the continuous ones.
    It means that any adjacent probability ratio for the noise probability distribution should have an upper bound $c_b$.
    With the DP parameter estimation (\ref{eq:c2 discrete epsilon}), it implies that the privacy loss $e^\epsilon$ will not be infinite in the process of protecting any distinct data.
    % 计算方法不同
    Meanwhile, the discrete conditions are the simplification of the continuous ones.
    For the continuous random noise distributions, due to the uncountability of the real number set, the potential infinite local maximum and minimum should be considered in any given interval.
    But for the discrete probability distributions, we only need pay attention to whether the probability value at single point is zero (as described in (\ref{eq:nece})).
    % 由于不同，离散化的方式很重要
    The difference shows that the DP parameter $\epsilon$ is highly related to how we discretize a continuous probability distribution.
    It is further explained in Section \ref{subsec:DP Properties}.
\end{remark}

% 与其他工作的比较
In summary, Theorem \ref{theorem-epsilon} allows us to verify whether a given discrete random mechanism $\mathcal{A}$ is $\epsilon$-DP or not, only relying on the properties of the added discrete noise probability distributions.
This idea is distinguished from the existing work \cite{dwork2014algorithmic}, which validates the DP properties for the mechanisms by the original DP definition.

% Conditions --充分条件
% 针对隐私条件无法严格满足的情况
Next, we consider a more relax notion, $(\epsilon,\delta)$-DP, for cases where the $\epsilon$-DP conditions cannot be strictly met.
In detail, we propose a sufficient condition to verify the $(\epsilon,\delta)$-DP properties for the discrete random mechanism $\mathcal {A}$, along with the estimation of the value of DP parameters $\epsilon$ and $\delta$.

% 充分条件
\begin{theorem}\label{theorem-epsilon-delta-02}
    Let $\Theta \subseteq Z^n_\Delta$ be the set of discrete random variable $\theta$.
    Suppose that $\Theta_0$ and $\Theta_1$ are two subsets of $\Theta$, which satisfies 
    $\Theta ={\Theta_0}\bigcup {\Theta_1}$ and ${\Theta _0} \bigcap {\Theta _1} = \emptyset $.
    Assume 
    \begin{equation} \label{eq:sufficient01}
        \sum\nolimits_{\theta \in {\Theta_0}}{{p_{{\theta_i}}}\left( k \right)}\le \delta,
    \end{equation}
    and the condition (\ref{eq:c2 discrete}) holds when $\theta \in {\Theta_1}$, i.e.,
    \begin{align} \label{eq:sufficient02}
        \underset{\forall m_0\in \left[ -m,m \right],m_0\in \mathcal{Z}_{\Delta},\theta \in {\Theta_1}}{\mathop{\sup }}\,\frac{{p_{\theta_i }}\left( k - m_0 \right)}{{p_{{\theta_i}}}\left( k \right)}={c_b},
    \end{align}
    where $\forall i\in V, k\in \mathcal{Z}_{\Delta}$.
    Then the discrete random mechanism $\mathcal{A}$ is $(\epsilon,\delta)$-DP, and the privacy parameter $\epsilon$ is given by
    \begin{equation} \label{eq:sufficient02 epsilon}
        \epsilon = \log \left( {c_b} \right).
    \end{equation}
\end{theorem}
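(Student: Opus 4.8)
The plan is to follow the standard decomposition underlying $(\epsilon,\delta)$-DP: split the contributions to $\Pr\{\mathcal{A}(x)\in\mathcal{O}\}$ into a ``safe'' part on which the likelihood ratio is controlled by $e^\epsilon=c_b$, and a ``leakage'' part whose total probability mass is absorbed into $\delta$. The disjoint partition $\Theta=\Theta_0\cup\Theta_1$ supplies exactly this dichotomy, already carved out in the noise space.

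First I would fix an arbitrary pair of $m$-adjacent vectors $x,y$ that differ only in coordinate $i_0$, with $x_{i_0}-y_{i_0}=m_0$ and $|m_0|\le m$, and an arbitrary output set $\mathcal{O}\subseteq\mathcal{S}$. Writing the probability of $\{\mathcal{A}(x)\in\mathcal{O}\}$ as a sum over $s\in\mathcal{O}$ of the noise probabilities $\Pr\{\theta=s-x\}$, I would split each contribution according to whether the required noise realization $s-x$ lands in $\Theta_0$ or $\Theta_1$:
\begin{equation*}
    \Pr\{\mathcal{A}(x)\in\mathcal{O}\}=\sum_{s\in\mathcal{O},\,s-x\in\Theta_0}\Pr\{\theta=s-x\}+\sum_{s\in\mathcal{O},\,s-x\in\Theta_1}\Pr\{\theta=s-x\}.
\end{equation*}
As in the proof of Theorem \ref{theorem-epsilon}, only the differing coordinate $i_0$ is relevant, so the joint likelihood ratio reduces to the one-dimensional noise law $p_{\theta_{i_0}}$.

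Then I would bound the two sums separately. For the $\Theta_0$ sum, as $s$ ranges over all outputs with $s-x\in\Theta_0$, the noise $s-x$ ranges over all of $\Theta_0$, so the sum is at most $\Pr\{\theta\in\Theta_0\}\le\delta$ by assumption (\ref{eq:sufficient01}). For the $\Theta_1$ sum, the $m$-adjacency gives $s_{i_0}-x_{i_0}=(s_{i_0}-y_{i_0})-m_0$; setting $k=s_{i_0}-y_{i_0}$ and invoking the restricted ratio bound (\ref{eq:sufficient02}) on $\Theta_1$ yields the termwise estimate $\Pr\{\theta=s-x\}\le c_b\,\Pr\{\theta=s-y\}$. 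Enlarging the index set from $\{s\in\mathcal{O}:\,s-x\in\Theta_1\}$ to all of $\mathcal{O}$, which is legitimate since every summand is nonnegative, bounds the $\Theta_1$ sum by $c_b\,\Pr\{\mathcal{A}(y)\in\mathcal{O}\}$. Combining the two estimates with $\epsilon=\log(c_b)$ in (\ref{eq:sufficient02 epsilon}) gives $\Pr\{\mathcal{A}(x)\in\mathcal{O}\}\le e^\epsilon\Pr\{\mathcal{A}(y)\in\mathcal{O}\}+\delta$, which is precisely (\ref{eq:privacy}).

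The main obstacle I anticipate is the bookkeeping between the two spaces: the $\Theta_0/\Theta_1$ split lives in the noise space, whereas the DP inequality is asserted over output sets $\mathcal{O}$. I must verify carefully that (i) the $\Theta_0$ contribution for $x$ genuinely sums to the full noise mass $\Pr\{\theta\in\Theta_0\}$ independently of $\mathcal{O}$, and (ii) the termwise ratio bound on the $\Theta_1$ part survives the enlargement to the full $\mathcal{O}$-sum without sign or double-counting issues. A secondary point is the reduction to the single coordinate $i_0$: I would either assume the noise coordinates are independent, consistent with the per-coordinate notation $p_{\theta_i}$, or reuse the coordinate-wise reduction from the proof of Theorem \ref{theorem-epsilon}, so that the joint ratio collapses to the one-dimensional ratio $p_{\theta_{i_0}}(k-m_0)/p_{\theta_{i_0}}(k)$ appearing in (\ref{eq:sufficient02}).
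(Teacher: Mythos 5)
Your proposal is correct and follows essentially the same route as the paper's proof: the same partition of the noise space into $\Theta_0$ and $\Theta_1$, with the $\Theta_0$ mass absorbed into $\delta$ via (\ref{eq:sufficient01}) and the $\Theta_1$ part bounded termwise by $c_b\,\Pr\{\mathcal{A}(y)\in\mathcal{O}\}$ via (\ref{eq:sufficient02}) after enlarging the index set. The coordinate-wise reduction you flag as a secondary concern is handled in the paper exactly as you anticipate, by factoring the joint probability over coordinates and cancelling the $i\neq i_0$ factors as in the proof of Theorem \ref{theorem-epsilon}.
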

\begin{proof}
    Please see the proof in the Appendix \ref{proof 02}.
\end{proof}

% 极端情况
To further verify the rationality of Theorem \ref{theorem-epsilon-delta-02}, we consider the extreme limitations of the two DP parameters $\epsilon$ and $\delta$, according to (\ref{eq:sufficient01}) and (\ref{eq:sufficient02}), respectively:
\begin{itemize}
    \item $\Theta_1 \rightarrow \Theta$ and $\Theta_0 \rightarrow \emptyset$.
    It evolves into the $\epsilon$-DP since
        \begin{equation*}
        % \mu \left( {\Theta_0} \right)\to \mu \left( \emptyset  \right)
            \delta = \underset{\Theta_0 \rightarrow \emptyset}{\mathop{\lim }}\,\sum\nolimits_{{\theta \in\Theta_0}}{{p_{{\theta_i}}}\left( k \right)}=0,
        \end{equation*}
    i.e., Theorem \ref{theorem-epsilon} is satisfied.

    \item $\Theta_0 \rightarrow \Theta$ and $\Theta_1 \rightarrow \emptyset$. Then we have 
        \begin{equation*}
        % \mu \left( {\Theta_0} \right)\to \mu \left( \Theta  \right)
            \delta = \underset{\Theta_0 \rightarrow \Theta}{\mathop{\lim }}\,\sum\nolimits_{{\theta \in\Theta_0}}{{p_{{\theta_i}}}\left( k \right)}=1
        \end{equation*}
    and
        \begin{equation*}
        % \mu \left( {\Theta_1} \right)\to \emptyset
            c_b = \underset{\Theta_1 \rightarrow \emptyset}{\mathop{\lim }}\,\underset{\forall m_0\in \left[ -m,m \right],m_0\in \mathcal{Z}_{\Delta},\theta \in {\Theta_1}}{\mathop{\sup }}\,\frac{{p_{\theta_i }}\left( k - m_0\right)}{{p_{{\theta_i}}}\left( k \right)}=1,
        \end{equation*}
    thus $\epsilon = \rm{log}( c_b)=0$.
    Substituting $\epsilon$ and $\delta$ into (\ref{eq:privacy}), we have $\Pr \left\{ {{\cal A}\left( x \right) \in {\cal O}} \right\} 
    \le \Pr \left\{ {{\cal A}\left( y \right) \in {\cal O}} \right\} + 1$.
    Then, one implies that any mechanism $\mathcal{A}$ satisfies $\left( 0,1 \right)$-DP.
        
\end{itemize}

% 虽然极限一样，但还是有意义的，it is not interesting
% 然而，仅讨论第二种情况的极限条件是远远不够的，因为他对任意机制都适用，误差在这里就没有意义。
Note that only discussing the general limitations of the DP parameters in the second case is not sufficient, since it can be applied to arbitrary discrete random mechanisms, making the probability error $\delta$ meaningless here.
Thus, it is worth to estimate the tight bound of $\epsilon$ and $\delta$ for every discrete random mechanism, which will be further discussed in Section \ref{subsec:DP Properties}.

% Estimation --参数估计(两个整一起)
% 至此，我们完成了两个条件的阐述，相应的差分隐私参数总结如下表。
Now, we have obtained the conditions for both discrete $\epsilon$-DP and $(\epsilon,\delta)$-DP mechanisms.
The corresponding DP parameters estimation approaches are summarized in Table \ref{table:Summary of the DP Parameters Estimation Methods}.
\begin{table}[ht]
\centering 
    \caption{Summary of the DP Parameters Estimation Methods}
    \label{table:Summary of the DP Parameters Estimation Methods}
    \setlength{\tabcolsep}{4.2mm}
    \begin{tabular}{l|lll}
    \toprule
    \textbf{DP Property}  & \textbf{$c_b$} & \textbf{$\epsilon$} & \textbf{$\delta$} \\ \midrule
    $\epsilon$-DP & Eq. (\ref{eq:c2 discrete}) & Eq. (\ref{eq:c2 discrete epsilon}) & $\delta=0$ \\
    $(\epsilon,\delta)$-DP & Eq. (\ref{eq:sufficient02}) & Eq. (\ref{eq:sufficient02 epsilon}) & Eq. (\ref{eq:sufficient01}) \\ \bottomrule
    \end{tabular}
\end{table}

% % Conditions --优化问题 引出下文
% % 进一步，我们考虑满足差分隐私下的数据可用性
% Based on the two proposed DP conditions for discrete random mechanisms, the issue of the trade-off between the mechanism privacy level and the utility can be further researched.
% Then, a utility-optimization problem subject to the differential privacy is constructed as follows:
% \begin{align}
%     \begin{array}{l}
%         \mathop {\min }\limits_{{p_{{\theta _i}}}\left( k \right)} W(x,x+\theta)\\
%         \text{s.t.}
%         ~~ \epsilon = \log(c_b);
%         ~ \sum\nolimits_k {{p_\theta }(k)} = 1;
%         ~ {p_\theta }\left( k \right) \!>\! 0,
%     \end{array}
% \end{align}
% where $W(x,x+\theta)$ represents the distance between the mechanism input $x$ and the output $x+\theta$.
% % 当前难点: 针对离散的随机变量, 非常规的凸函数优化
% The difficulties lie in that the optimization variables are discrete random variables, and the objective function is nonlinear, which make it an unconventional convex optimization.
% % 幸运的是，我们将原问题转化为了一个可解的等价问题。
% Fortunately, we transform this problem into a solvable equivalence problem.
% The detailed problem formulation and problem solving will be given in the next subsection.

\subsection{DP Properties and Privacy Guarantees} \label{subsec:DP Properties}
In this subsection, we apply the obtained conditions to discuss the DP properties for two kinds of discrete random mechanisms.
% 间接，第一类离散机制是通过离散化得到的
The first kind of mechanism is achieved by adding discrete noises that are discretized from the continuous ones.
Here, four representative mechanisms are selected, i.e., the \textit{Gaussian}, the \textit{Laplacian}, the \textit{Staircase} and the \textit{Uniform} mechanisms.
% 直接
The second one is obtained through adding noises that are originally discretely distributed.
The most commonly adopted mechanism is the \textit{Exponential} mechanism.
For each mechanism, we derive the DP properties ($\epsilon$-DP or $(\epsilon,\delta)$-DP), followed by the esimated DP parameters based on Theorem \ref{theorem-epsilon} and Theorem \ref{theorem-epsilon-delta-02}.

% 离散机制定义
Recalling the mechanism definition in Section \ref{subsec:Preliminaries of Discrete Random Mechanisms}, we denote every mechanism as the abbreviation of a random mechanism under the corresponding discrete probability distribution.
% 即，数据是通过加入离散高斯噪声实现的匿名保护。
In other words, the discrete data are anonymously protected by adding the specific kind of discrete random noise.
% 由于这些噪声大多由连续概率密度函数给出，我们先根据表达式得到他们离散化的表达式。
Since most noises are given by continuous probability density functions (PDF), we first discretize them to obtain the discrete probability mass functions (PMF) based on the proposed discretization methods shown in (\ref{eq:discretize0}).
Note that the following analysis of DP properties is suitable for any discrete random mechanism regardless of the discretization methods.

% Properties --离散得/高斯     +图
\textit{1) The Gaussian mechanism:}
% 高斯图
\begin{figure}[ht]
    \begin{center}
    \includegraphics[width=0.5\textwidth]{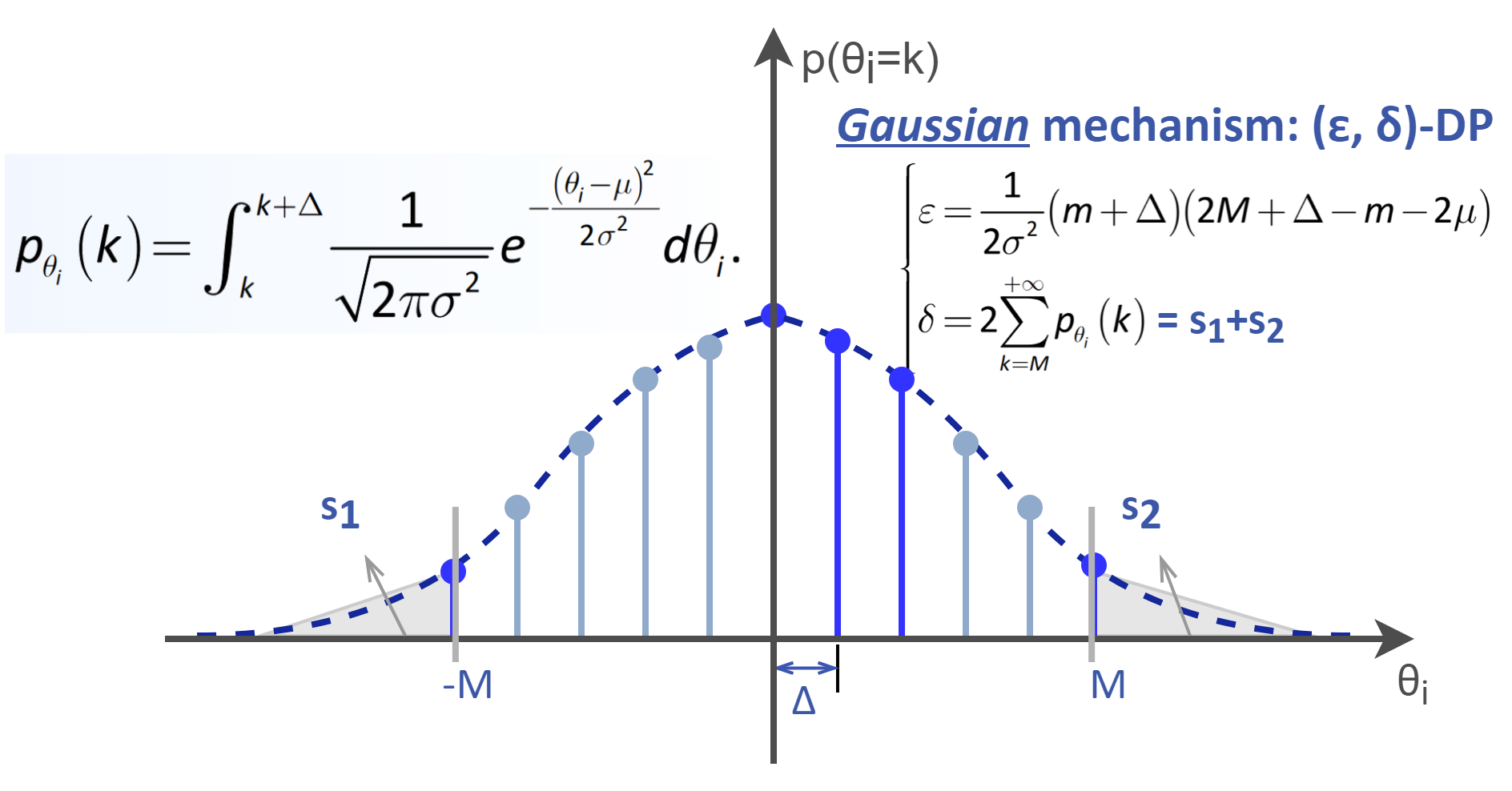}
    \vspace{-10pt}
    \caption{Discrete Gaussian noise distribution.}
    \label{fig:Gaussian noise}
    \end{center}
\end{figure}
This mechanism is realized by adding the discrete Gaussian distributed noise, where the PMF shown in Fig. \ref{fig:Gaussian noise} is given by:
\begin{equation} \label{eq:discrete Gaussian distribution}
    {p_{{\theta _i}}}\left( k \right) = \int_{k}^{k+\Delta} {\frac1{\sqrt {2\pi \sigma^2 }}{e^{ - \frac{{{\left( {\theta_i - \mu} \right)}^2}}{2\sigma^2}}}d\theta_i},~ k\in \mathcal{Z}_{\Delta},
\end{equation}
where the parameters $\mu$ and $\sigma$ are the mean and the standard deviation of the original continuous distribution, respectively.

% 高斯机制定理 高斯DP性质/参数估计
\begin{theorem}\label{th:Gaussian}
    The \textit{Gaussian} mechanism $\mathcal{A}$ is $(\epsilon,\delta)$-DP.
    Given an arbitrary large constant $M$, the two DP parameters $\epsilon$ and $\delta$ are estimated by
    \begin{equation} \label{eq:result-Gaussian-1}
        \epsilon \!\!=\!\!\!\!\! \underset{m_0\in \left[ -m,m \right],m_0\in \mathcal{Z}_{\Delta}}{\mathop{\sup }}\,\!\! \frac1{2{\sigma^2}}\left( {\left| m_0\right| \!+\! \Delta} \right)\left(2M\!+\!\Delta\!-\!m_0\!-\!2\mu \right)\!\!
    \end{equation}
    and
    \begin{equation} \label{eq:result-Gaussian-2}
        \delta = \mathop {\max }\limits_{i \in V} \left( \frac1{\sqrt {2\pi \sigma^2}}\sum\nolimits_{\Phi_i} {\int_{k}^{k+\Delta} {{e^{ - \frac{{{{\left( {\theta_i - \mu} \right)}^2}}}{2{\sigma^2}}}}d\theta_i} } \right),
    \end{equation}
    where $\Phi _i = \left( { - \infty , - M} \right] \cup \left[ {M,\infty } \right) \subseteq \mathcal{Z}_\Delta$.
    % The value $m\in \mathcal{Z}_\Delta$ represents the distance of elements in databases.
\end{theorem}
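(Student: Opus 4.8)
The plan is to obtain the $(\epsilon,\delta)$-DP guarantee as a direct application of the sufficient condition in Theorem~\ref{theorem-epsilon-delta-02}, the only creative input being the choice of the decomposition $\Theta = \Theta_0 \cup \Theta_1$. I would let $\Theta_0$ be the coordinate-wise tail region, i.e.\ the points with $k \in \Phi_i = (-\infty,-M]\cup[M,\infty)$, and let $\Theta_1$ be the complementary bounded ``bulk'' $|k| < M$; these are disjoint and cover $\Theta$. Because the mechanism injects an independent copy of the same discrete Gaussian in each coordinate while the adjacency of Definition~\ref{def:adjacent} alters only one record, it suffices to bound everything per dimension and then take $\max_{i\in V}$, which is exactly the form of (\ref{eq:result-Gaussian-2}).

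Verifying the tail hypothesis (\ref{eq:sufficient01}) is then pure bookkeeping. Summing the discrete Gaussian PMF (\ref{eq:discrete Gaussian distribution}) over $k\in\Phi_i$ and pulling the normalizing constant out of the integrals gives precisely $\tfrac{1}{\sqrt{2\pi\sigma^2}}\sum_{\Phi_i}\int_k^{k+\Delta} e^{-(\theta_i-\mu)^2/(2\sigma^2)}\,\mathrm{d}\theta_i$, so defining $\delta$ as this quantity (maximized over $i$) makes (\ref{eq:sufficient01}) hold with equality and yields (\ref{eq:result-Gaussian-2}).

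The substantive work is estimating $c_b$ in (\ref{eq:sufficient02}) on the bulk. I would write the adjacent ratio $p_{\theta_i}(k-m_0)/p_{\theta_i}(k)$ as a quotient of two integrals of $g(t)=e^{-(t-\mu)^2/(2\sigma^2)}$ over intervals of equal length $\Delta$, and bound it above by $g$ evaluated at the point of $[k-m_0,k-m_0+\Delta]$ nearest $\mu$ divided by $g$ at the point of $[k,k+\Delta]$ farthest from $\mu$. Since $M$ is taken arbitrarily large, on $\Theta_1$ these intervals lie to one side of $\mu$ where $g$ is monotone, so the nearest and farthest points are endpoints and the resulting bound is monotone in $k$; its supremum over the bulk is therefore attained at the truncation boundary. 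Evaluating there (e.g.\ $k=M$ for a positive shift, so the numerator endpoint is $M-m_0$ and the denominator endpoint is $M+\Delta$) and applying the difference-of-squares identity $(M+\Delta-\mu)^2-(M-m_0-\mu)^2 = (m_0+\Delta)(2M+\Delta-m_0-2\mu)$ turns $\log c_b$ into the bracketed expression of (\ref{eq:result-Gaussian-1}); carrying out the mirror-image computation at $k=-M$ for a negative shift and retaining the worse of the two produces the factor $|m_0|+\Delta$, and taking the supremum over $m_0\in[-m,m]$ gives (\ref{eq:result-Gaussian-1}). Finally $\epsilon=\log c_b$ by (\ref{eq:sufficient02 epsilon}).

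I expect the crux to be this last boundary analysis rather than the $\delta$ computation: one must argue carefully that, for each sign of $m_0$ and each position of $\mu$ relative to the bulk, the correct endpoints realize the maximum of the shifted integrand and the minimum of the base integrand, and that the supremum over $k\in\Theta_1$ is genuinely attained at $\pm M$ rather than in the interior. Getting the sign bookkeeping right is what makes the two one-sided estimates combine into the single expression (\ref{eq:result-Gaussian-1}) carrying $|m_0|$; the remainder is the difference-of-squares algebra.
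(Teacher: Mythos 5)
Your proposal is correct and follows essentially the same route as the paper: the same tail/bulk split $\Theta_0=\Phi_i$, $\Theta_1=[-M,M]$ fed into Theorem~\ref{theorem-epsilon-delta-02}, the same direct tail sum for $\delta$, and the same endpoint bound on the integral ratio followed by the difference-of-squares identity for $\log c_b$. The only cosmetic difference is that the paper organizes the "position of $\mu$" case analysis you flag as the crux into three explicit regimes ($k\ge\mu+m_0$, $k\le\mu-\Delta$, and the middle band), yielding bounds $Q,R,S$ with $Q$ at $k=M$ dominating — exactly the boundary evaluation you describe.
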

\begin{proof}
    Please see the proof in the Appendix \ref{proof 03}.
\end{proof}

% 我们的高斯和别人的高斯
\begin{remark}
    % 我们发现将用于判断一般机制差分隐私性的条件用于分析特定的高斯机制，所得到的结论和[11]由异曲同工之妙。
    We find that applying the DP conditions for the general discrete random mechanisms to the analysis of the specific \textit{Gaussian} mechanism yields some similar DP conclusions (Theorem \ref{th:Gaussian}) with those in \cite{canonne2020discrete}.
    % 都是(ε,δ)-dp
    First, \cite{canonne2020discrete} also proved that the discrete \textit{Gaussian} mechanism can only provide $\left(\epsilon,\delta \right)$-DP guarantees despite different discretization methods, where the nonzero DP parameter
    \begin{equation*}
        \delta = \mathop {\Pr} \left[ {\theta_i > \frac{{\epsilon {\sigma ^2}}}{\Delta } - \frac{\Delta }2} \right] - {e^\epsilon } \cdot \mathop {\Pr} \left[ {\theta_i > \frac{{\epsilon {\sigma ^2}}}{\Delta } + \frac{\Delta }2} \right]
    \end{equation*}
    determines that it cannot guarantee pure $\epsilon$-DP.
    % 但得到的结果不同，允许的隐私误差上界
    Then, we consider the similarity of the strict upper bounds on the permissible privacy probability error, $\delta$.
    For instance, when we take the adjacency $m=1$ (the same as the sensitivity in \cite{canonne2020discrete}), we estimate the parameter $\epsilon$ from (\ref{Gaussian epsilon}) by
    \begin{align*}
        \epsilon
        =\!\! \mathop {\max }\limits_{m_0 \in \left\{0,1\right\} } \left[ {\!\frac1{{2{\sigma ^2}}}\left( m_0 + 1 \right)\left( {2M + 1 - {m_0}} \!\right)} \right]
        \!=\! \frac{2M}{\sigma^2}.\!
    \end{align*}
    Substituting it with the estimation approach of $\delta$ proposed in \cite{canonne2020discrete}, i.e., $\delta  \le \frac1{{\sqrt {2\pi {\sigma ^2}} }}{e^{{{ - {{\left\lfloor {\epsilon {\sigma ^2}} \right\rfloor }^2}} \mathord{\left/{\vphantom {{ - {{\left\lfloor {\epsilon {\sigma ^2}} \right\rfloor }^2}} {2{\sigma ^2}}}} \right. \kern-\nulldelimiterspace} {2{\sigma ^2}}}}}$, we have
    \begin{align} \label{his estimation}
        \delta  \le \frac1{{\sqrt {2\pi {\sigma ^2}} }}{e^{ - \frac{2{M^2}}{\sigma ^2}}}.
    \end{align}
    Besides, the estimation value based on our results (\ref{Gaussian delta}) is
    \begin{align} \label{my estimation}
        \delta  \le \frac2{{\sqrt {2\pi {\sigma ^2}} }} \cdot \sum\limits_{k = M}^{ + \infty } \int_{k}^{k+1} {{e^{ - \frac{k^2}{2{\sigma ^2}}}}}.
    \end{align}
    The parameter estimations shown in (\ref{his estimation}) and (\ref{my estimation}) are slightly different, due to the distinct discretization methods.
    Note that the discrete Gaussian distribution in \cite{canonne2020discrete} comes from a natural analogue of the continuous Gaussian, i.e.,
    \begin{equation} \label{eq:discretize_ad}
        p_{\theta_i}(k) =\frac{e^{-(k-\mu)^2 / 2 \sigma^2}}{\sum_{k^{'} \in \mathbb{Z}} e^{-(k^{'}-\mu)^2 / 2 \sigma^2}},
    \end{equation}
    which is a more accurate but complicated discretization method.
    Especially, we find that the DP guarantees for the \textit{Gaussian} mechanisms derived from both discretization methods are almost the same under a large standard deviation $\sigma$.
    % 值得注意的是，我们的差分隐私参数估计方法更具有普遍性，因为它较少依赖特定的噪声分布，多亏了离散化方式。
    Moreover, it is worth mentioning that our DP parameter estimation method (\ref{my estimation}) is more general, as it relies less on the specific noise probability distribution, thanks to the easier discretization approach in (\ref{eq:discretize0}) than (\ref{eq:discretize_ad}).
\end{remark}

% Properties --离散/拉普拉斯 +图
\textit{2) The Laplacian mechanism:}
Next, we analyze the DP properties for the \textit{Laplacian} mechanism.
The Laplacian distribution is discretized under (\ref{eq:discretize0}) from the PDF ($f\left( z \right) = \frac1{{2\lambda }}{e^{ - \frac{{\left| {z - \mu } \right|}}{\lambda }}}$).
With simplification, we have
\begin{equation*}
    {p_{{\theta _i}}}\left( {k } \right) = 
    \left\{ {\begin{aligned}
        & {\frac{{1 - {e^{{{-\Delta} \mathord{\left/
        {\vphantom {{ - \Delta } \lambda}} \right.
        \kern-\nulldelimiterspace} \lambda}}}}}2{e^{\frac{{\mu - k }}\lambda}},} &&{k \ge \mu; }\\
        & {\frac{{{e^{{\Delta \mathord{\left/{\vphantom {\Delta \lambda}} \right.
        \kern-\nulldelimiterspace} \lambda}}} - 1}}2{e^{\frac{k - \mu}{\lambda}}},} &&{k \le \mu-\Delta,}
    \end{aligned}} \right.
\end{equation*}
where $k\in \mathcal{Z}_{\Delta}$, $\mu, \lambda$ are the same position and scale parameters as the continuous distribution, respectively.

% 拉普拉斯DP性质/参数估计
% 拉普拉斯图
From Fig. \ref{fig:Laplacian noise}, 
\begin{figure}[ht]
    \begin{center}
    \includegraphics[width=0.5\textwidth]{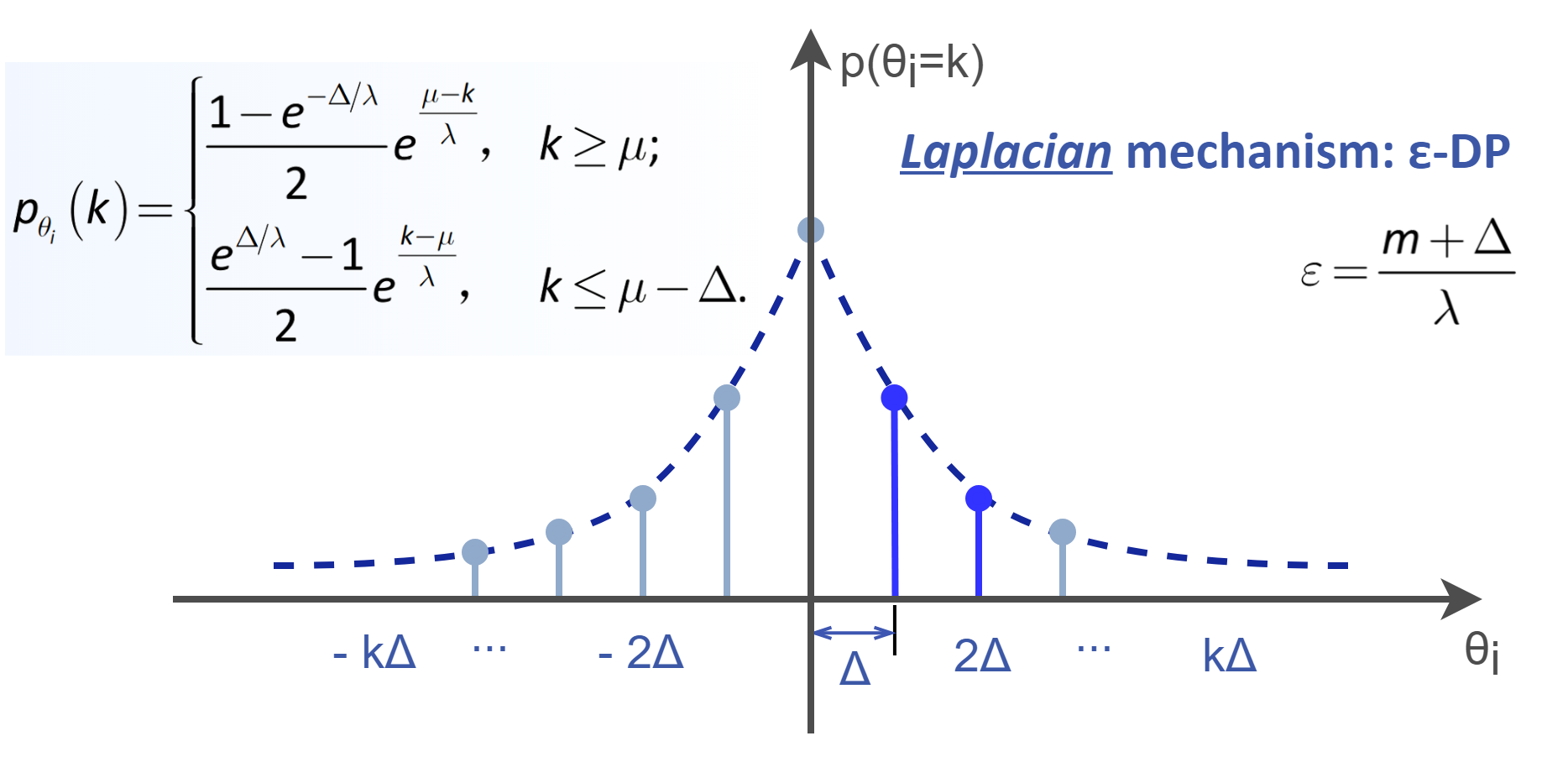}
    \vspace{-10pt}
    \caption{Discrete Laplacian noise distribution.}
    \label{fig:Laplacian noise}
    \end{center}
\end{figure}
it is easy to obtain that the discrete \textit{Laplacian} mechanism guarantees the $\epsilon$-DP precondition in (\ref{eq:nece}).
% 通过定理一的充分必要条件，我们发现
Then, based on the sufficient and necessary conditions in Theorem \ref{theorem-epsilon}, we find that for any $m_0 \in \left[ { - m,m} \right], m_0\in \mathcal{Z}_{\Delta}$, there exists
\begin{align*}
    \left| {\frac{{{p_{{\theta _i} + m_0 }}\left( {k } \right)}}{{{p_{{\theta _i}}}\left( {k } \right)}}} \right| 
    \le& {e^{{\Delta  \mathord{\left/ {\vphantom {\Delta  \lambda}} \right.
    \kern-\nulldelimiterspace} \lambda}}}\frac{{{e^{ - \frac{{\left| {k  - m_0 - \mu} \right|}}\lambda}}}}{{{e^{ - \frac{{\left| {k  - \mu} \right|}}\lambda}}}}
    = {e^{{\Delta  \mathord{\left/{\vphantom {\Delta  \lambda}} \right.\kern-\nulldelimiterspace} \lambda}}}{e^{\frac{{\left| {k  - \mu} \right|}}\lambda - \frac{{\left| {k  - m_0  - \mu} \right|}}\lambda}} \nonumber\\
    \le& {e^{{\Delta  \mathord{\left/{\vphantom {\Delta  \lambda}} \right.
     \kern-\nulldelimiterspace} \lambda}}}{e^{\frac{m}\lambda}}
    = {e^{\frac{m+\Delta}\lambda}}.
\end{align*}
So we conclude that the \textit{Laplacian} mechanism $\mathcal{A}$ is a discrete $\epsilon$-DP mechanism, where the DP parameter $\epsilon$ is estimated by
\begin{align} \label{eq:result-Laplacian}
    \epsilon  = \log {e^{\frac{ m+\Delta }\lambda}} = \frac{ m+\Delta}\lambda,
\end{align}
which is highly related to the scale parameter $\lambda$.
Then, one implies that the mechanisms' DP properties have strong correlations with the parameters and properties of the specific discrete probability distributions.

% Properties --离散的/类楼梯   +图
% 其实还可以有个伪代码hhh
\textit{3) The Staircase mechanism:}
Since the study in \cite{staircase} pointed out that the continuous $\epsilon$-DP \textit{Staircase} mechanism performs best in maintaining the data utility, we are interested in this mechanism and hope to verity its DP properties with our general DP conditions.
% 楼梯图
\begin{figure}[ht]
    \begin{center}
    \includegraphics[width=0.5\textwidth]{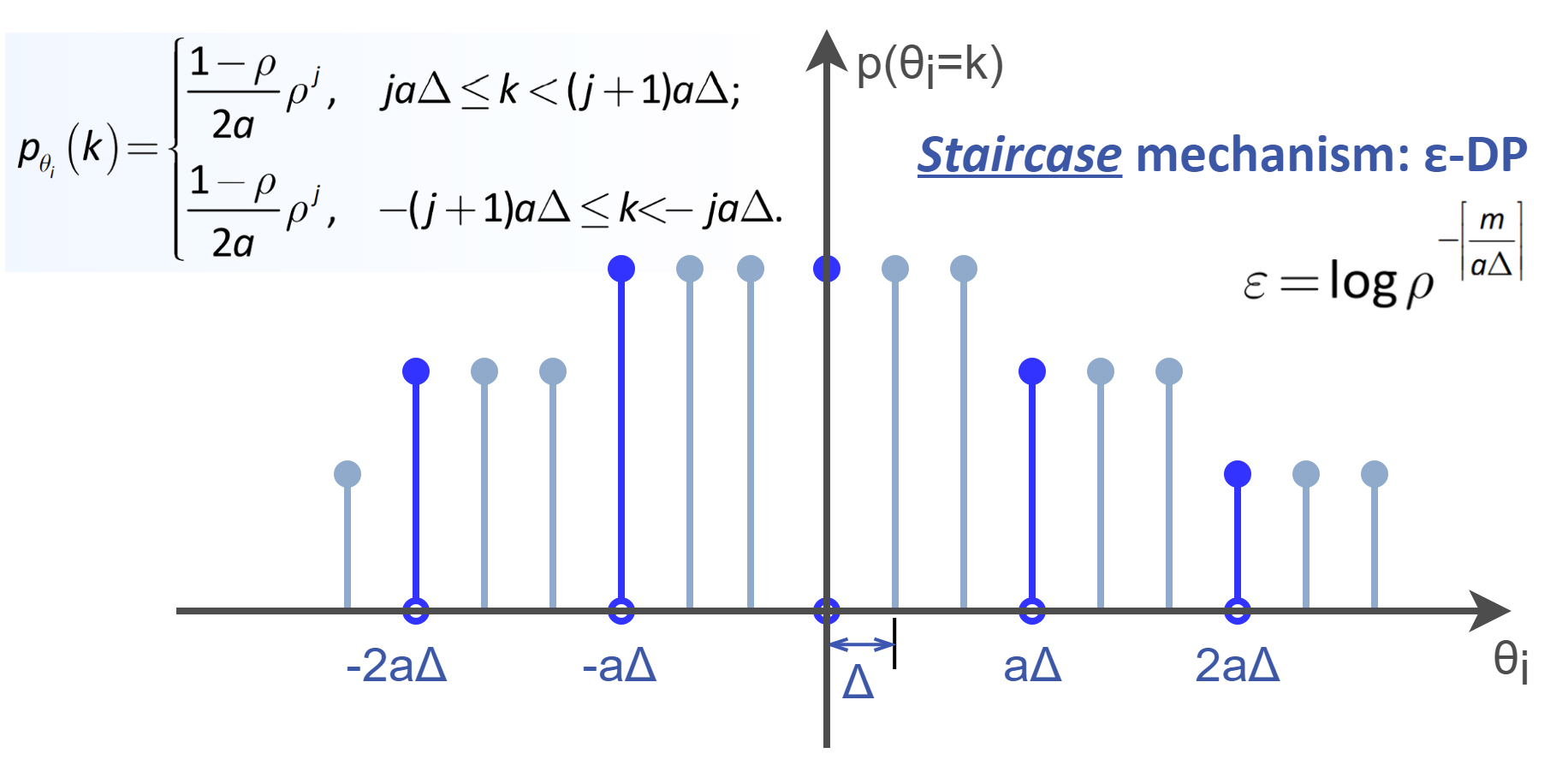}
    \vspace{-10pt}
    \caption{Discrete Staircase-shaped noise distribution.}
    \label{fig:Staircase}
    \end{center}
\end{figure}
% 我们通过离散化它得到这个机制
First, we derive discrete Staircase-shaped distribution by discretizing the continuous one in \cite{he2020differential}.
The probability distribution in Fig. \ref{fig:Staircase} is obtained by
\begin{equation} \label{stair}
    p_{\theta_i}\left( k \right)
    =\left\{ \begin{aligned}
    &\frac{1-\rho }{2a}{{\rho }^{j}}, && ja \Delta \leq k < (j+1)a \Delta; \\
    &\frac{1-\rho }{2a}{{\rho }^{j}}, && -(j+1)a \Delta \leq k \!<\! -ja  \Delta,\!\!  \\
    \end{aligned} \right.
\end{equation}
where $k\in \mathcal{Z}_{\Delta}, a\in {{\mathbb{Z}}^{+}}, \rho \in \left\{ {x\left| {0 < x < 1,x \in \mathbb{R}} \right.} \right\}, j\in \mathbb{N}$. 
Here $a$ and $\rho$ represent the width and height of the Staircase-shaped distribution, respectively.
% 满足总概率之和为1
It is easy to check (\ref{stair}) as a valid PMF, since
\begin{equation*}
    \sum\limits_{k=-\infty }^{+\infty }{p\left( k \right)}
    =2\sum\limits_{j=0}^{+\infty} a\cdot {\frac{1-\rho }{2a}{{\rho }^j}}=1.
\end{equation*}

% 类楼梯DP性质/参数估计
Based on Theorem \ref{theorem-epsilon}, we find that the \textit{Staircase} mechanism is $\epsilon$-DP, the same result as \cite{staircase}.
More concretely, for any given adjacency $m$, where $m\in \left\{ {{m}_0}\left| ca\Delta < m_0\le \left( c+1 \right)a\Delta, a,c,\in {{\mathbb{Z}}^{+}} \right. \right\}$, there exists a corresponding upper bound $c_b$ satisfying:
\begin{equation*}
    \underset{\forall m_0\in \left[ -m,m \right],m_0\in \mathcal{Z}_{\Delta}}{\mathop{\sup }}\,\frac{{p_{{\theta_i} }}\left( k-m_0 \right)}{{p_{{\theta_i}}}\left( k \right)}
    =\frac1{{\rho }^{c}}  = c_b.
\end{equation*}
Then, the corresponding DP parameter $\epsilon$ is shown as:
\begin{align} \label{eq:result-Stair}
    \epsilon = \log(c_b)
    = \log {\rho ^{- \left\lceil {\frac{m}{a\Delta}} \right\rceil }},
\end{align}
where the term $\left\lceil {\frac{m}{a\Delta}} \right\rceil$ represents the smallest integer that is not less than $\frac{m}{a\Delta}$.
% 即，类楼梯机制可以实现对任一给定隐私程度的保护。
Furthermore, we have that the \textit{Staircase} mechanism can preserve any given DP levels, with the design of the stair width $a$ and the stair height $\rho$.

% Properties --离散的/均匀噪声   +图
\textit{4) The Uniform mechanism:}
The DP properties for the discrete Uniform distributed noise adding mechanism can be easily obtained.
The PMF of the discrete Uniform noise follows:
\begin{align*}
    {p_{{\theta _i}}}\left( {k } \right) = \frac{\Delta}{{b - a + \Delta}},
    ~ a \le {k_0} \le b,
    ~ k_0 \in {\mathcal{Z}_\Delta}.
\end{align*}

% 均匀分布DP性质/参数估计
Obviously, the \textit{Uniform} mechanism violates the $\epsilon$-DP precondition in (\ref{eq:nece}), since certain probability values are zero (intuitively shown in Fig. \ref{fig:Laplacian noise}), which will lead to infinity privacy loss.
% 均匀分布图
\begin{figure}[ht]
    \begin{center}
    \includegraphics[width=0.5\textwidth]{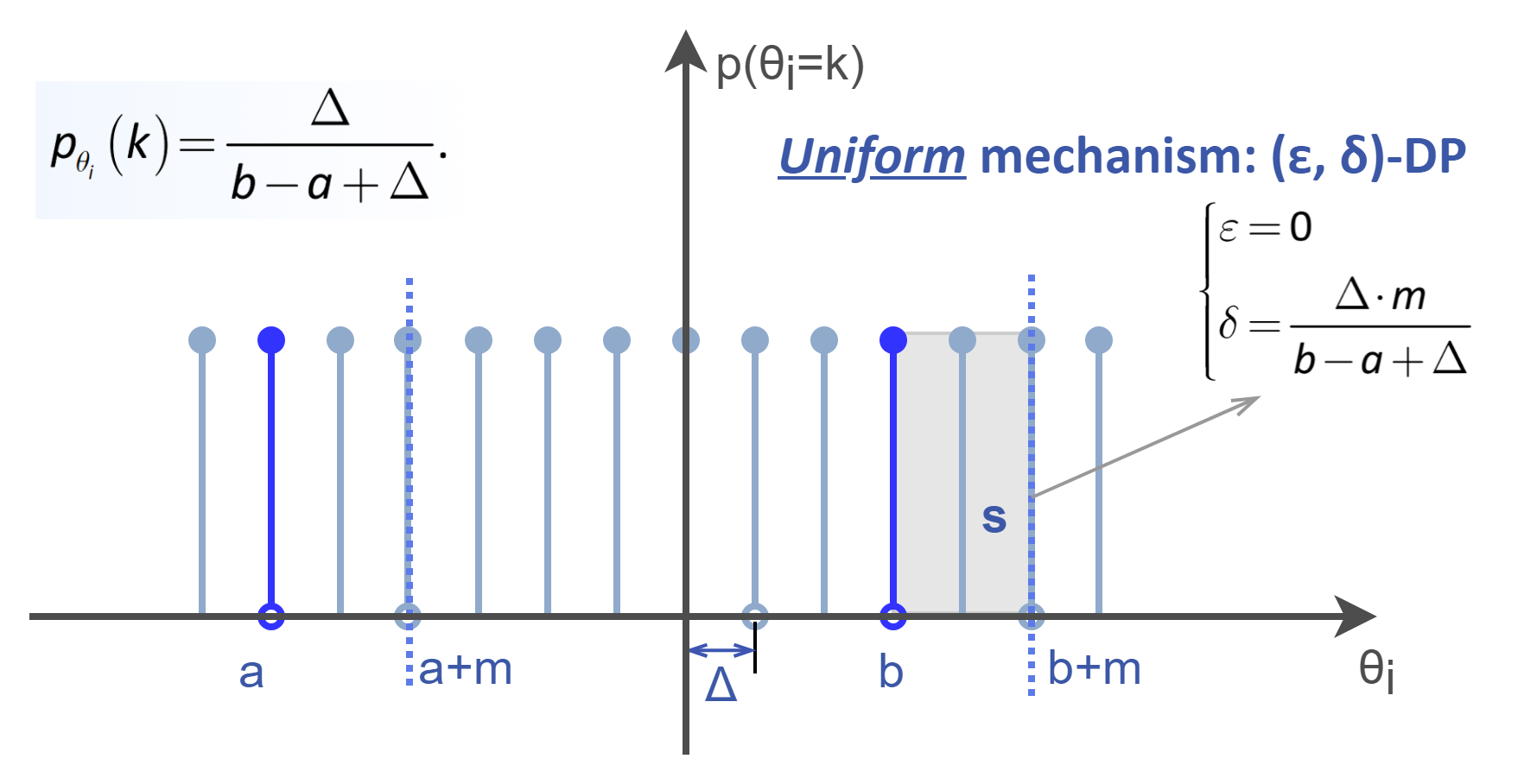}
    \vspace{-10pt}
    \caption{Discrete Unifrom distributed noise distribution.}
    \label{fig:Unifrom}
    \end{center}
\end{figure}
Thus, we have that the \textit{Uniform} mechanism is $(\epsilon,\delta)$-DP.
With the sufficient $(\epsilon,\delta)$-DP condition in Theorem \ref{theorem-epsilon-delta-02}, the DP parameters are given by:
\begin{align} \label{eq:result-Uniform}
    \epsilon=\log(c_b)=\log(1)=0,~ \delta =\frac{m\Delta}{b-a+\Delta}.
\end{align}

% Properties --离散的/指数   +图
\textit{5) The Exponential mechanism:}
% 区别于之前四种机制，该机制添加的指数分布的噪声本就是离散分布，不用离散化。
Different from the above four discrete random mechanisms, the Exponentially distributed noise added in this mechanism is inherently discrete, i.e., the pre-processing of discretization is not needed.
% 指数图
\begin{figure}[ht]
    \begin{center}
    \includegraphics[width=0.4\textwidth]{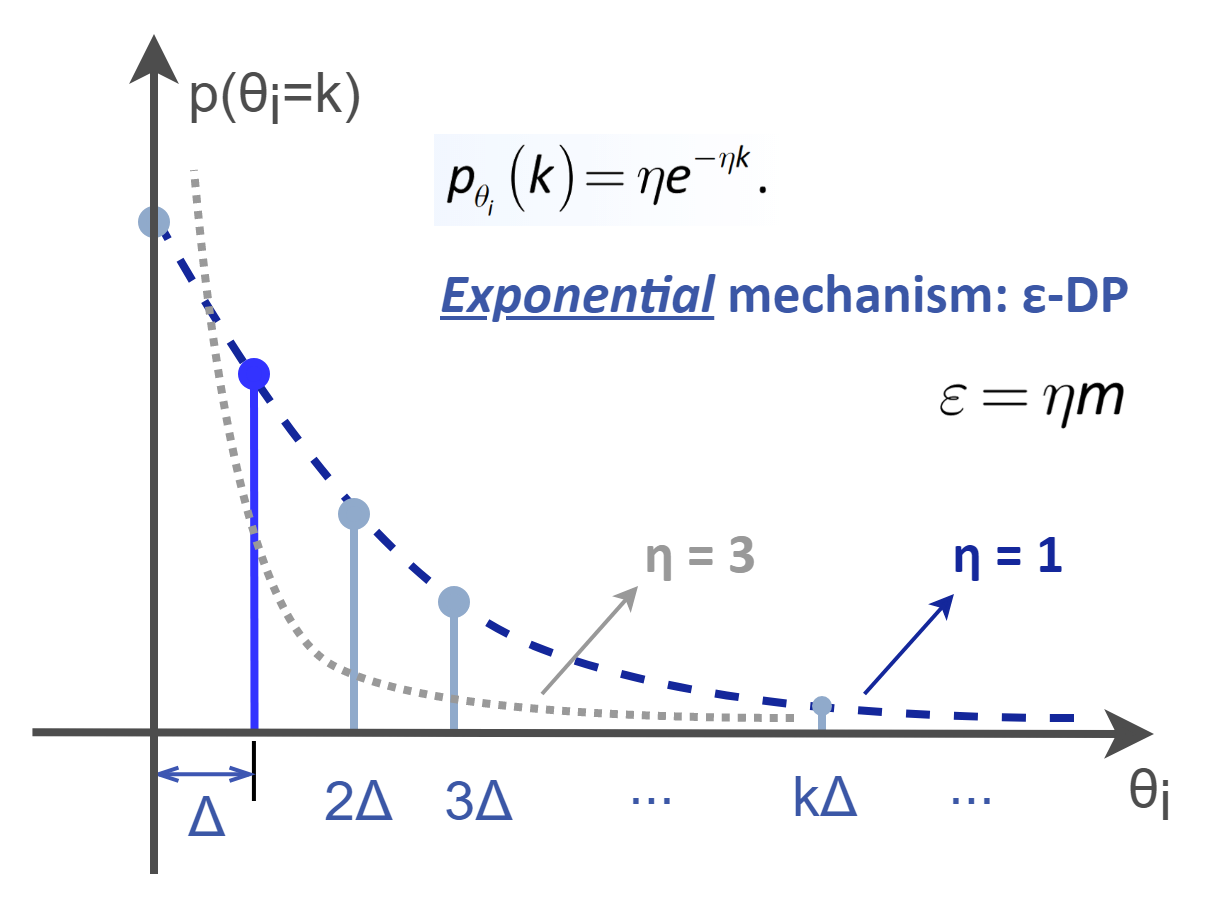}
    \vspace{-10pt}
    \caption{Discrete Exponential noise distribution.}
    \label{fig:Exponential}
    \end{center}
\end{figure}
The PMF of the discrete Exponential noise in Fig. \ref{fig:Exponential} is given by:
\begin{equation} \label{Exponential}
    p_{\theta_i}\left( k \right) = \eta {e^{ - \eta k}},~ k\in \mathcal{Z}_{\Delta}^+,
\end{equation}
where $\eta$ is the rate parameter of the Exponential distribution.

% 指数分布DP性质/参数估计
According to Theorem \ref{theorem-epsilon}, we have that the \textit{Exponential} mechanism is $\epsilon$-DP.
Because for any $m_0 \in \left[{- m,m}\right], m_0\in \mathcal{Z}_{\Delta}$ and $k>m_0$, we obtain that
\begin{align*}
    \left| {\frac{{p_{{\theta _i} + m_0 }}\left( {k } \right)}{{p_{\theta _i}}\left( {k } \right)}} \right| 
    = \frac{\eta {e^{ - \eta \left( {k - m_0} \right)}}}{\eta {e^{ - \eta k}}}
    = {e^{\eta m_0}}
    < {e^{\eta m}},
\end{align*}
i.e., the probability ratio is bounded.
Meanwhile, the estimation of the DP parameter $\epsilon$ is 
\begin{equation} \label{eq:result-Exponential}
    \epsilon = \log {e^{\eta m}} = \eta m.
\end{equation}
Thus, we conclude that the privacy cost $\epsilon$ is proportional to the adjacency $m$, which is consistent with the result in \cite{mcsherry2007mechanism}.

% 关于 指数机制 大大的Remark
\begin{remark}
% https://zhuanlan.zhihu.com/p/144318152
    % 我们这里讨论的指数机制和现有的指数机制有轻微不同，主要原因在于我们所保护的离散数据可以数值表示。
    Despite the similar DP properties with \cite{mcsherry2007mechanism}, the \textit{Exponential} mechanism we discuss here is slightly different from the existing \textit{Exponential} mechanisms.
    The main reason is that the discrete data we are protecting can be represented numerically.
    % 常见的指数机制保护的是非数值型离散数据。
    The common \textit{Exponential} mechanism protects non-numerical discrete data.
    % 为实现DP，这个机制将以一定的概率值返回原先确定的结果x。
    To achieve differential privacy, the mechanism returns the originally determined result $x$ with a certain probability value, which is highly related to a scoring function $u$.
    The function $u$ gives every result $x$ a score, where a higher score means a higher output probability.
    Formally, the \textit{Exponential} mechanism ${\cal A}_{u,\epsilon}$ with the quality score $u(x)$ and the privacy parameter $\epsilon$ is given by:
    \begin{align*}
       {\cal A}_{u,\epsilon }\left( {x,m} \right) \sim e^{\frac{\epsilon u(x)}{2m}}.
    \end{align*}
    % 本文涉及的指数机制保护的是离散分布的数值型数据。
    However, the \textit{Exponential} mechanism ${\cal A}$ discussed in this paper aims to protect discretely distributed numerical data.
    Denote the Exponential distribution in (\ref{Exponential}) as $Exp(\eta)$.
    Then, the \textit{Exponential} mechanism $\cal A$ is shown as:
    \begin{equation*}
        {\cal A}\left( x \right) = x + \theta,~ \theta \sim Exp(\eta),
    \end{equation*}
    % 这一定义符合我们对一般机制的定义。
    which is in line with the definition of a general discrete random mechanism $\cal A$ in (\ref{eq:mechanism}). 
    % 在这一框架下，可以用前一节得到的条件来分析指数机制的DP性质（结果如公式所示）。
    Hence, we can apply the DP conditions in Section \ref{subsec:DP Conditions} to analyze the \textit{Exponential} mechanism.
\end{remark}

% Properties --整个表格
% 总而言之，基于上述对4种机制差分隐私性质的分析以及于现有工作的联系与对比，我们可以验证之前结论的正确性。
In summary, the DP properties for the several typical mechanisms ((\ref{eq:result-Gaussian-1}), (\ref{eq:result-Gaussian-2}), (\ref{eq:result-Laplacian}), (\ref{eq:result-Stair}), (\ref{eq:result-Uniform}), (\ref{eq:result-Exponential}))  are listed in Table \ref{table:discrete differential properties}.
Based on the detailed analysis as well as the comparisons with the existing work, we verify the validity of our conclusions.
% 离散的性质
\begin{table}[ht]
\centering
    \caption{Discrete DP properties and Privacy Guarantees}
    \label{table:discrete differential properties}
    \begin{tabular}{l|l|c|c}
    \toprule
    \textbf{Mechanism}\!\!
        & \!\!\textbf{Property}
        & \textbf{$\epsilon$}
        & \textbf{$\delta$} \\ \midrule
    % \multirow2{*}{\textbf{Mechanisms}}
        % & \multicolumn{1}{c}{\multirow2{*}{\textbf{General}}}
        % & \multicolumn2{c}{\textbf{DP parameters}} \\ \cmidrule(l){3-4} 
        % & \multicolumn{1}{c}{} & \multicolumn{1}{c}{$\epsilon$} & \multicolumn{1}{c}{$\delta$} \\ \midrule
    \textbf{Gaussian}
        & \!\!$(\epsilon, \delta)$-DP 
        & \!\!\!$\frac1{2{\sigma^2}}\left({m\!+\!\Delta} \right)\left(2M\!+\!\Delta\!-\!m\!-2\mu \right)$\!\!\!
        & \!\!$2\!\sum\limits_{k = M}^{+\infty } {p_{\theta _i}\!\left( k \right)}$\\
    \textbf{Laplacian}
        & \!\!$\epsilon$-DP 
        & $\frac{{m+\Delta}}\lambda$
        & $0$ \\
    \textbf{Staircase}
        & \!\!$\epsilon$-DP
        & $\log {\rho ^{- \left\lceil {\frac{m}{{a\Delta}}} \right\rceil }}$
        & $0$ \\
    \textbf{Uniform}
        & \!\!$(\epsilon, \delta)$-DP
        & $0$
        & $\frac{m\Delta}{b-a+\Delta}$ \\ 
    \textbf{Exponential}
        & \!\!$\epsilon$-DP
        & $\eta m$
        & $0$ \\\bottomrule
    \end{tabular}
    
    \begin{itemize}
        \item Note that the DP properties are related with the discrete noise distributions, the discretization interval $\Delta$, as well as the adjacency $m$.
    \end{itemize}
\end{table}
% % 和连续的进行比较
% Further, to compare the differential privacy properties with the continuous random mechanism \cite{he2020differential}, we list the conclusions in Table \ref{table:continuous differential properties}.
% Then, we find that the discrete random mechanism $\mathcal{A}$ provides essentially similar privacy guarantees as the continuous one.
% The differential privacy parameters for the discrete random mechanisms are highly related to discretization methods.
% % 连续的性质
% \begin{table}[ht]
% \centering
%     \caption{Continuous DP properties and Privacy Guarantees}
%     \label{table:continuous differential properties}
%     \begin{tabular}{ l|l|c|c }
%     \toprule
%     \textbf{Mechanism}\!\!
%         & \!\!\textbf{Property}
%         & \textbf{$\epsilon$}
%         & \textbf{$\delta$} \\ \midrule
%     \textbf{Gaussian}
%         & \!\!$(\epsilon, \delta)$-DP 
%         & \!\!\!$\frac1{2{\sigma^2}}m\left(2M\!-\!m\!-2\mu \right)$\!\!\!
%         & \!\!$2\!\int\nolimits_{M}^{+\infty } {f_{\theta _i}\!\left( z \right)}dz$\\
%     \textbf{Laplacian}
%         & \!\!$\epsilon$-DP 
%         & $\frac{m}\lambda$
%         & $0$ \\
%     \textbf{Staircase}
%         & \!\!$\epsilon$-DP
%         & $\log {\rho ^{- \left\lceil {\frac{m}{a}} \right\rceil }}$
%         & $0$ \\
%     \textbf{Uniform}
%         & \!\!$(\epsilon, \delta)$-DP
%         & $0$
%         & $\frac{m}{b-a}$ \\ \bottomrule
%     \end{tabular}
% \end{table}

\subsection{The Optimal DP Mechanism} \label{subsec:The Optimal DP Mechanism}
Based on the two proposed DP conditions for discrete random mechanisms in Section \ref{subsec:DP Conditions}, we further study the trade-off between the privacy level and the utility.
In this subsection, a utility-maximization ( Wasserstein distance-minimization) non-convex optimization problem is formulated, subject to the DP constraints.
Then, we give an equivalent linear programming form of the problem to make it solvable.
At last, we derive an optimal $\epsilon$-DP \textit{Staircase} mechanism.

% Optimal --建模
% 由于我们引入的有效性测度基于机制输入输出的统计特性，我们先对机制的输入输出做一定说明。
Since the utility metric is based on the statistical properties of the mechanism's input and output, we begin with the explanation for the inputs and outputs.
% 离散变量提出
Here, we model the input data as a random variable being generated by a specific probability distribution.
Recalling the simplified discrete random mechanism in (\ref{eq:mechanism}), we denote $x, \theta, x+\theta$ as the random variables of the mechanism input, added noise and output, respectively.
% 我们默认相同数据经过机制后输出概率相同，因此本节仅关心于那一维不同的数据，即n=1。
From the probabilistic perspective, we believe that the mechanism output is consistent for the same input (shown in (\ref{XY})).
Then, for the original $n$-dimensional data $x$, we only focus on the $i_0$-th dimension data ($x_{i_0}, \theta_{i_0}, x_{i_0}+\theta_{i_0}, {i_0}\in V$) mentioned in (\ref{adjacent}), which need to be protected.
In this problem, for more concise expressions, we omit the subscript $i_0$, i.e., all the three random variables are one-dimensional discrete random variables, with the corresponding PMFs as $p_x(\cdot), p_{\theta}(\cdot), p_{x+\theta}(\cdot): \mathcal{Z}_\Delta \rightarrow \mathbb{R}$, respectively.
% 离散随机变量叠加
Moreover, from the probability theory, the mechanism output is regarded as the summation of two random variables.
The corresponding PMF is computed as:
\begin{align} \label{convolution}
    p_{x + \theta}\left(k \right) = \sum\nolimits_i {{p_x}\left( i \right){p_\theta }\left( {k - i} \right)},~ k,i\in \mathcal{Z}_\Delta.
\end{align}

% 有了上述的说明，我们从优化函数和问题约束入手，构建这个问题。
With the above explanation, we construct the utility optimization problem as following.

% P0问题-优化函数
\textbf{$\bullet$ Utility model:}
The utility model in this paper is a minimization framework.
% 我们考虑的是input和putput之间的WD距离
We aim to maximize the mechanism utility (or minimize the utility loss) by maintaining the similarity between input and output probability distributions as much as possible, i.e., reducing their Wasserstein distance.
% 对之前连续的定义进行扩展，
Extending the Wasserstein distance (\ref{WD-3}) based on the continuous distributions, we obtain the discrete Wasserstein distance, a distance function about the discrete random mechanism input and output.
It is formulated as:
\begin{align*}
    W_0(x, x+\theta) = 
    \sum\nolimits_{k \in \mathcal{Z}_\Delta} {\left| {{P_x}\left( k \right) - {P_{x + \theta }}\left( k \right)} \right|},
\end{align*}
where $P(\cdot)$ is the CDF for discrete random variables.
% 优化目标
Thus, the objective is to minimize the Wasserstein distance between the input and output, i.e.,
\begin{align} \label{final-objective}
    \mathop {\min}\limits_{{p_\theta }(k)} W_0(x, x+\theta).
\end{align}
The optimization variable is the noise PMF, $p_\theta(\cdot)$.

% P0问题-约束条件
\textbf{$\bullet$ Constraints:}
The primal constraint comes from the $\epsilon$-DP guarantees of a discrete random mechanism.
Once given the privacy cost $\epsilon$, based on the sufficient and necessary condition in Theorem \ref{theorem-epsilon}, we have
\begin{align} \label{final-constraint-1}
    c_b =\!\! \underset{\forall m_0\in \left[ -m,m \right],m_0\in \mathcal{Z}_{\Delta}}
     {\mathop{\sup }}\!\! \frac{p_{\theta}\left( k-m_0 \right)}{p_{\theta}\left( k \right)} = e^{\epsilon}.
\end{align}
Besides, the existence of the upper bound $c_b$ implies that the probability value at any point should be nonzero.
Combined with the non-negativity of the probability value, we obtain the second constraint:
\begin{align} \label{final-constraint-2}
    {p_\theta }\left( k \right) > 0,\forall k\in \mathcal{Z}_\Delta.
\end{align}
The last constraint is obvious that it should satisfy the basic properties of probability, i.e., the total sum of the probability values should be one:
\begin{align} \label{final-constraint-3}
    \sum\nolimits_{k\in \mathcal{Z}_\Delta} {{p_\theta }(k)} = 1.
\end{align}

% P0问题-最终表达，提出(P0)
\textbf{$\bullet$ Optimization:}
Combining the objective function (\ref{final-objective}) and three constraints (\ref{final-constraint-1})-(\ref{final-constraint-3}), we formulate the following primal optimization problem:
\begin{subequations}\label{P0}
    \begin{align}
        \textbf{P}_\textbf{0}:~~
        \mathop {\min }\limits_{{p_\theta }(k)}~ & W_0(x, x+\theta) \label{P0-a}\\
        \text{s.t.}~~
        & \epsilon \le \log(c_b); \label{P0-b}\\
        & \sum\nolimits_{k\in \mathcal{Z}_\Delta} {{p_\theta }(k)} = 1; \label{P0-c}\\
        & {p_\theta}\left( k \right) \!>\! 0,\forall k\in \mathcal{Z}_\Delta. \label{P0-d}\\\nonumber
    \end{align}
\end{subequations}

% Optimal --等价问题表达
% ----- 引入变量
Further, to make this non-convex optimization problem $\textbf{P}_\textbf{0}$ solvable, we propose an equivalent problem form.
The main idea is to convert the original problem $\textbf{P}_\textbf{0}$ into a conventional convex optimization problem $\textbf{P}_\textbf{1}$.
First, we introduce two column vectors as the probability distributions, i.e., the input probability distribution
\begin{align} \label{new p_x}
    p_x \!=\!
    \left(\! {\begin{array}{*{20}{c}}
        \dots,\!\!\! &p_{x}(k\!-\!\Delta),\!\!\! &p_{x}(k),\!\! &p_{x}(k\!+\!\Delta),\!\! &\dots
    \end{array}} \!\right)^T
\end{align}
and the noise probability distribution
\begin{align} \label{new p_theta}
    p_{\theta} =
    \left(\! {\begin{array}{*{20}{c}}
        \dots,\!\!\! &p_{\theta}(k\!-\!\Delta),\!\!\! &p_{\theta}(k),\!\! &p_{\theta}(k\!+\!\Delta),\!\! &\dots
    \end{array}} \!\right)^T,
\end{align}
where $k \in \mathcal{Z}_\Delta$ and $p_x(k),p_\theta(k)$ denote the probability value at the point $k$ of the input and the noise, respectively.
% 模为1
Both distributions satisfy the basic probability property, i.e., $\left| p_x \right| \!\!=\!\! 1$ and $\left| p_{\theta} \right| \!\!=\!\! 1$.
With these two notions, we give the equivalent problem $\textbf{P}_\textbf{1}$, where the equivalence is proved in Theorem \ref{theorem-LP}.
% 提出(P1)
\begin{align*}
    \textbf{P}_\textbf{1}:~~
    \mathop {\min }\limits_{p_\theta}~ & W_1(x, x+\theta) \\\nonumber
    \text{s.t.}~~
    & A p_{\theta} \le b,
    ~ \left| p_{\theta} \right| = 1,
    ~ p_{\theta} \succ 0,
\end{align*}
where $W_1(x, x+\theta) = \sum\nolimits_k {\left| {{p_x^T} M_k {p_\theta}} \right|}.$
The term $p_x$ is an arbitrarily given input distribution and $p_\theta$ is the noise distribution that we are interested in.
The matrix $M_k$ and $A,b$ are shown in (\ref{proof:M_k}) and (\ref{proof:AX=b}), respectively.
\vspace{3pt}
% 一个定理，说明(P0)和(P1)等价
\begin{theorem} \label{theorem-LP}
    The problem $\textbf{P}_\textbf{0}$ is equivalent to $\textbf{P}_\textbf{1}$, which means that they have the same optimal solutions.
\end{theorem}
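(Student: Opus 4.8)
The plan is to establish the equivalence by proving separately that the two problems share (i) the same feasible region and (ii) the same objective value at every feasible point; once both hold, their sets of minimizers must coincide, which is exactly the assertion that $\textbf{P}_\textbf{0}$ and $\textbf{P}_\textbf{1}$ have the same optimal solutions. First I would dispose of the two easy constraints: the normalization (\ref{P0-c}) is literally $\left| p_\theta \right| = 1$ once $p_\theta$ is written as the coordinate vector (\ref{new p_theta}), and the strict positivity (\ref{P0-d}) is precisely $p_\theta \succ 0$. These require nothing beyond unpacking the vector notation.

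The first substantive step is to linearize the privacy constraint. Because the denominator $p_\theta(k)$ is strictly positive on the feasible set, the supremum condition (\ref{final-constraint-1}) that $c_b = \sup \frac{p_\theta(k - m_0)}{p_\theta(k)} = e^\epsilon$ is equivalent to requiring that every individual ratio obey $p_\theta(k - m_0) \le e^\epsilon p_\theta(k)$, i.e. $p_\theta(k - m_0) - e^\epsilon p_\theta(k) \le 0$, as $k$ ranges over $\mathcal{Z}_\Delta$ and the shift over $m_0 \in [-m,m] \cap \mathcal{Z}_\Delta$. Each such relation is linear in the vector $p_\theta$, so stacking them row by row produces exactly the matrix inequality $A p_\theta \le b$ of $\textbf{P}_\textbf{1}$, with the coefficients $1$ and $-e^\epsilon$ populating $A$ and the constant side collected in $b$. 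I would note that the tightness of the supremum is automatically recovered at the optimum, so no information is lost in replacing the $\sup$-equality by this family of inequalities.

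The second substantive step is the identity $W_0 = W_1$. Starting from the convolution (\ref{convolution}), the output CDF is $P_{x+\theta}(k) = \sum_{j \le k} \sum_i p_x(i) p_\theta(j-i)$, which is already bilinear in $(p_x, p_\theta)$. To cast the input CDF in the same bilinear shape I would exploit the normalization $\sum_i p_\theta(i) = 1$ and write $P_x(k) = \sum_{j \le k} p_x(j) = \sum_{j \le k} \sum_i p_x(j) p_\theta(i)$. Subtracting the two expressions and factoring collects every coefficient into a single matrix $M_k$ with entries in $\{-1,0,1\}$, indexed by the CDF threshold $k$ and the convolution shift, so that $P_x(k) - P_{x+\theta}(k) = p_x^T M_k p_\theta$. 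Summing absolute values over $k$ then converts $W_0 = \sum_k |P_x(k) - P_{x+\theta}(k)|$ into $\sum_k |p_x^T M_k p_\theta| = W_1$, as required.

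The hard part will be the explicit construction and verification of $M_k$: one must track the interaction between the running CDF sum $\sum_{j \le k}$, the convolution shift $j-i$, and the normalization-induced rewriting of $P_x$, and then confirm that the two double sums combine into one clean bilinear form with no leftover terms. A secondary technical point is that $\mathcal{Z}_\Delta$ is countably infinite, so strictly speaking $p_\theta$, $p_x$, and $M_k$ are infinite objects; I would justify truncation to a finite window via the summability of the distribution tails (ensuring the sums converge and the matrices are well defined), or argue directly with infinite vectors. Once $M_k$ is pinned down and the feasible sets are shown to match, the coincidence of objective values on the common domain yields identical minimizers, completing the proof.
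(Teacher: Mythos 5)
Your proposal is correct and follows essentially the same route as the paper's proof: the same linearization of the supremum ratio constraint into the stacked inequalities $Ap_\theta \le b$, the same bilinearization trick of multiplying $P_x(k)$ by the normalization $\sum_i p_\theta(i)=1$ so that both CDFs become $p_x^T M_k^{(\cdot)} p_\theta$, and the same finite truncation of the infinite index set. The only work you defer --- the explicit bookkeeping of the entries of $M_k = M_k^1 - M_k^2$ --- is exactly the computation the paper carries out, so there is no substantive difference in approach.
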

% ----- 定理证明
\begin{proof}
    Please see the proof in the Appendix \ref{proof LP}.
\end{proof}

% ----- 放缩
One difficulty of solving the problem $\textbf{P}_\textbf{1}$ is that the optimization variable $p_\theta$ is still involved in the absolute value.
Next, we try to make the optimization variable $p_\theta(k)$ independent of the calculation with the absolute values, which is at the cost of certain results accuracy.
Based on the absolute value inequality, it is easy to have
\begin{align} \label{eq:W1-W2}
    W_1(x, x+\theta) 
    =& \sum\nolimits_k {\left| {{p_x^T} M_k {p_\theta}} \right|} \\\nonumber
    \le& \sum\nolimits_k {\left| {{p_x^T} M_k} \right|} {p_\theta}
    = W_2(x, x+\theta).
\end{align}
% 提出(P2)
Then, we obtain an approximate optimization problem with the standard linear programming form:
\begin{align*}
    \textbf{P}_\textbf{2}:~~
    \mathop {\min }\limits_{p_\theta}~ & W_2(x, x+\theta) \\\nonumber
    \text{s.t.}~~
    & A p_{\theta} \le b,
    ~ \left| p_{\theta} \right| = 1,
    ~ p_{\theta} \succ 0,
\end{align*}
where the objective function $W_2$ is given in (\ref{eq:W1-W2}) and the constraints are the same as the ones in the problem $\textbf{P}_\textbf{1}$.

% Optimal --结果
Finally, we solve the primal optimization problem with $\textbf{P}_\textbf{2}$.
% 基于问题等价形式，我们利用单纯形法对(P2)的线性规划问题进行求解。
This standard linear programming problem is realized by the \textit{Simplex Method} \cite{nelder1965simplex}.
% 从(P2)发现，该问题的最优解由目标函数中的x和约束条件中的y决定。
From $\textbf{P}_\textbf{2}$, one implies that the optimal solution is determined by the input distribution $p_x$ involved in the objective function, as well as the privacy cost $\epsilon$ and the adjacency $m$ in the constraints.
% ----- 结果图
\begin{figure}[ht]
    \begin{center}
    \includegraphics[width=0.5\textwidth]{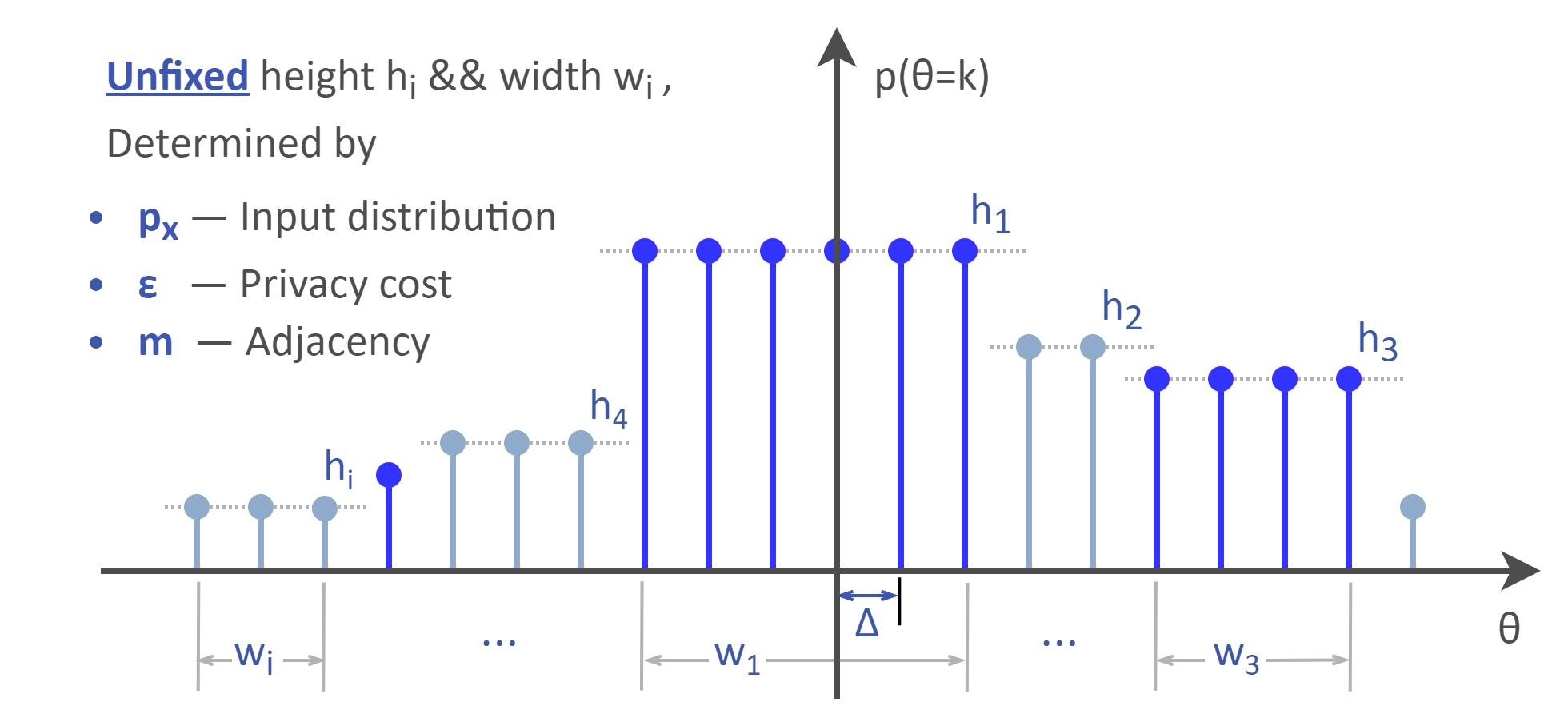}
    \vspace{-5pt}
    \caption{The Staircase-shaped optimal noise distribution.}
    \label{fig:Staircase_opti}
    \end{center}
\end{figure}
% 通过仿真发现，不论这些因素如何改变，最优噪声分布均为类楼梯分布，如图所示，但没有固定的宽度和高度。
Through extensive simulations, we get the optimal discrete random mechanism is realized by the class of \textit{Staircase-shaped} probability distributions, shown in Fig. \ref{fig:Staircase_opti}.
The parameters of the optimal distribution (i.e., the height and the width of the stairs) are unfixed, due to the three factors mentioned above.
The in-depth analysis on how parameters affects the mechanism is provided in Section \ref{sec:Simulation}.

\section{Simulation} \label{sec:Simulation}
In this section, we validate the utility guaranteed by the optimal $\epsilon$-DP \textit{Staircase} mechanism.

% Simulation --说明(Input/*2, Noise/选取epsilon-dp的), 各种参数
\subsection{Simulation Scenario}
% 首先对仿真环境进行简要说明。
First, we give a brief description of the simulation scenario, especially the mechanism inputs and noises.
% 输入说明
% 由于我们将原始离散数据建模为随机变量，机制输入首先以指定的离散分布生成。
In this paper, the original discrete data is modeled as a random variable, so we generate the mechanism input with the designated discrete distribution.
% 根据表格中关于机制随机性的前提假设，在每次实验中我们令输入数据保持不变，即保证随机性仅来自于所加入的噪声。
Based on the assumptions underlying the mechanism randomness in Table \ref{table:Preliminaries of DP}, we keep the input data constant in each simulation, to guarantee that the mechanism randomness comes only from the added noise.
% 噪声说明
% 接下来，考虑各离散机制。
As for the mechanism noise, the probability distribution is determined by the specific discrete random mechanism.
In this subsection, we consider three representative $\epsilon$-DP mechanisms, i.e., the \textit{Laplacian} mechanism, the \textit{Staircase} mechanism with a lower stair width ($a=5$), and the one with a higher stair width ($a=20$).
% 各离散噪声分布的参数能够根据表格唯一确定。
Once we set the DP parameter $\epsilon$ and the adjacency parameter $m$, the parameters of each discrete noise distribution can be uniquely determined according to Table \ref{table:Summary of the DP Parameters Estimation Methods}.
% 其他的...
Besides, for simplification, we set the minimum discrete interval $\Delta=1$ in the following simulations.

% Simulation --三个参数的影响
\subsection{The Effect of Three Factors on the Mechanism Utility}
With the above explanation, next, we discuss how the parameters mentioned in Section \ref{subsec:The Optimal DP Mechanism} (the input distribution $p_x$, the privacy cost $\epsilon$, and the adjacency $m$) affect the utility of the optimal $\epsilon$-DP mechanism, respectively.

% % ----- 仿真结果
% 影响-输入$p_x$
\begin{itemize}
    \item The effect of the input distribution $p_x$
\end{itemize}
\setcounter{subfigure}{0}
\begin{figure}
    \centering
    \subfigure[Optimal distribution for the Gaussian input.]{\label{fig:opti Gaussian}
        \includegraphics[width=0.45\textwidth]{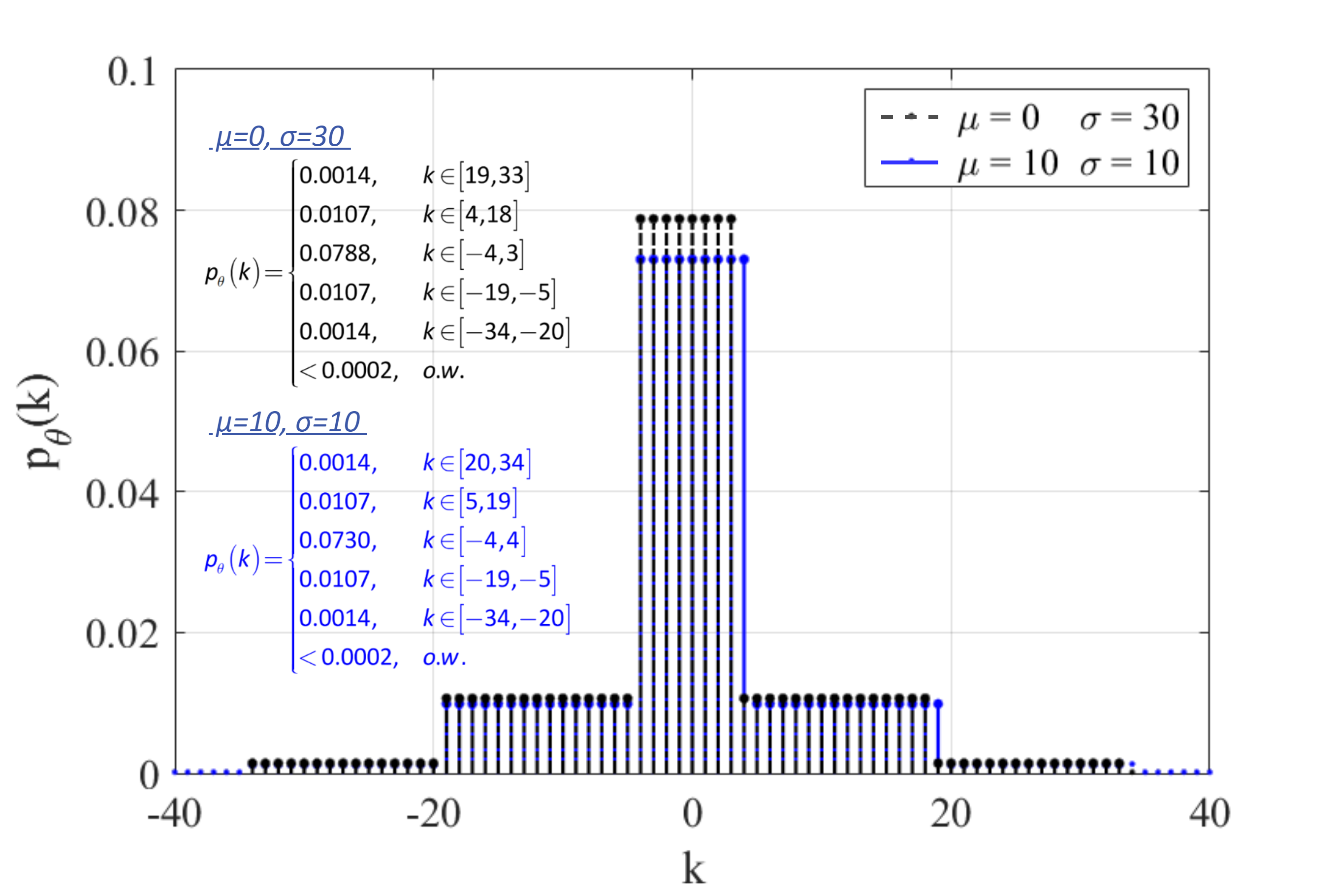}
    }
    \subfigure[Optimal distribution for the Poisson input.]{\label{fig:opti Poisson}
        \includegraphics[width=0.45\textwidth]{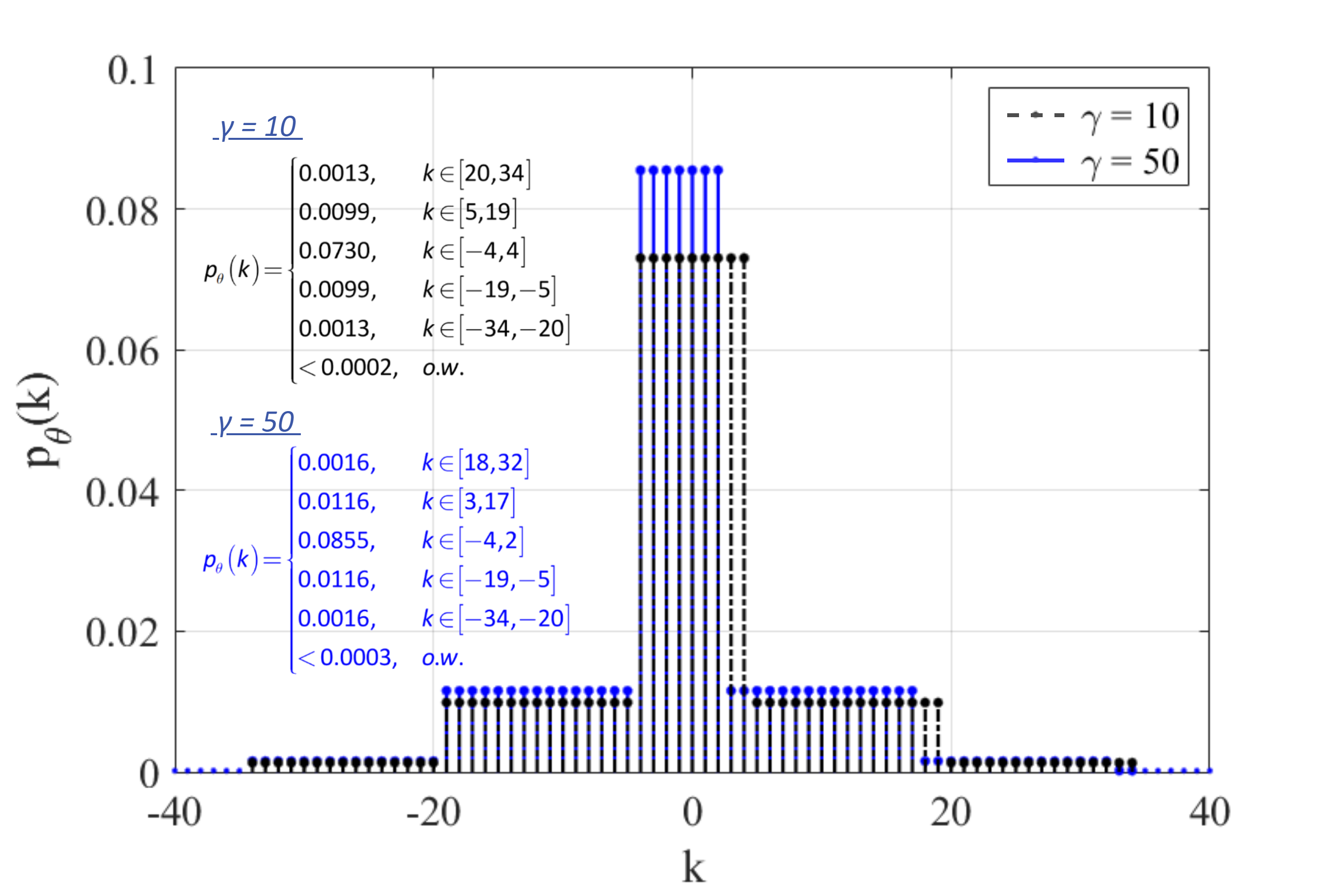}
    }
    \caption{Effect of input distributions on the optimal mechanism.}
    \label{fig:Performance comparison 01}
\end{figure}
% 两个输入 各自的PDF，为什么选这两个
Given that the Gaussian and the Poisson distribution are two commonly used distributions \cite{roy2003discrete, shmueli2005useful}, we select these two as the input instances.
The discrete Gaussian PMF is a discretization result, with the mean parameter $\mu$, and the variance $\sigma^2$:
\begin{align*}
    {p_x}\left( k \right) = \int_k^{k + 1} {\frac{1}{{\sqrt {2\pi } \sigma }}{e^{ - \frac{{{{\left( {x - \mu } \right)}^2}}}{{2{\sigma ^2}}}}}dx},~ k\in \mathcal{Z}_{\Delta=1}.
\end{align*}
and the discrete Poisson PMF with parameter $\gamma$ is given by:
\begin{align*}
    {p_x}\left( k \right) = \frac{\gamma ^k}{k!}{e^{ - \gamma }},~ k\in \mathcal{Z}_{\Delta=1}.
\end{align*}
% 整体Fig说明
% Fig. \ref{fig:opti Gaussian} and \ref{fig:opti Poisson}
Fig. \ref{fig:Performance comparison 01} shows the optimal distributions with the discrete Gaussian input and the Poisson input, respectively, with the DP constraints set by $\epsilon=2$ and $m=15$.
% 整体而言，两个最优楼梯分布的形状是相似的，但楼梯的高度和宽度和预期一样受输入参数影响。
Overall, the two optimal stair distributions are all Staircase-shaped, with the height and width of the stairs influenced by the input distributions, as we have expected.

% 影响-\epsilon
\begin{itemize}
    \item The effect of the privacy cost $\epsilon$
\end{itemize}
\setcounter{subfigure}{0}
\begin{figure}
    \centering
    \subfigure[Input = Gaussian ($\mu\!=\!0, \sigma\!=\!10$).]{\label{fig:dis-epsilon Gaussian}
        \includegraphics[width=0.45\textwidth]{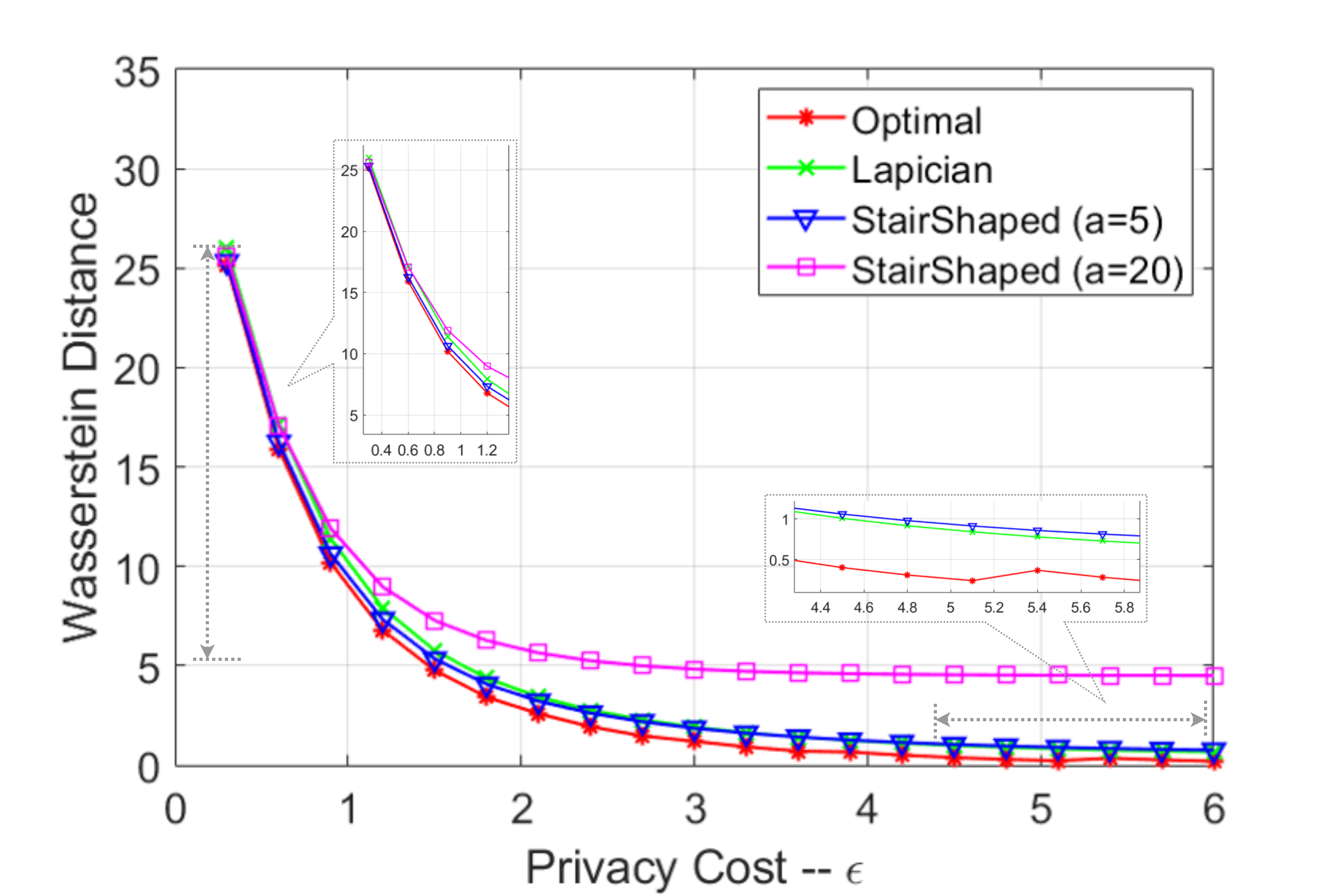}
    }
    \subfigure[Input = Poisson ($\gamma=5$).]{\label{fig:dis-epsilon Poisson}
        \includegraphics[width=0.45\textwidth]{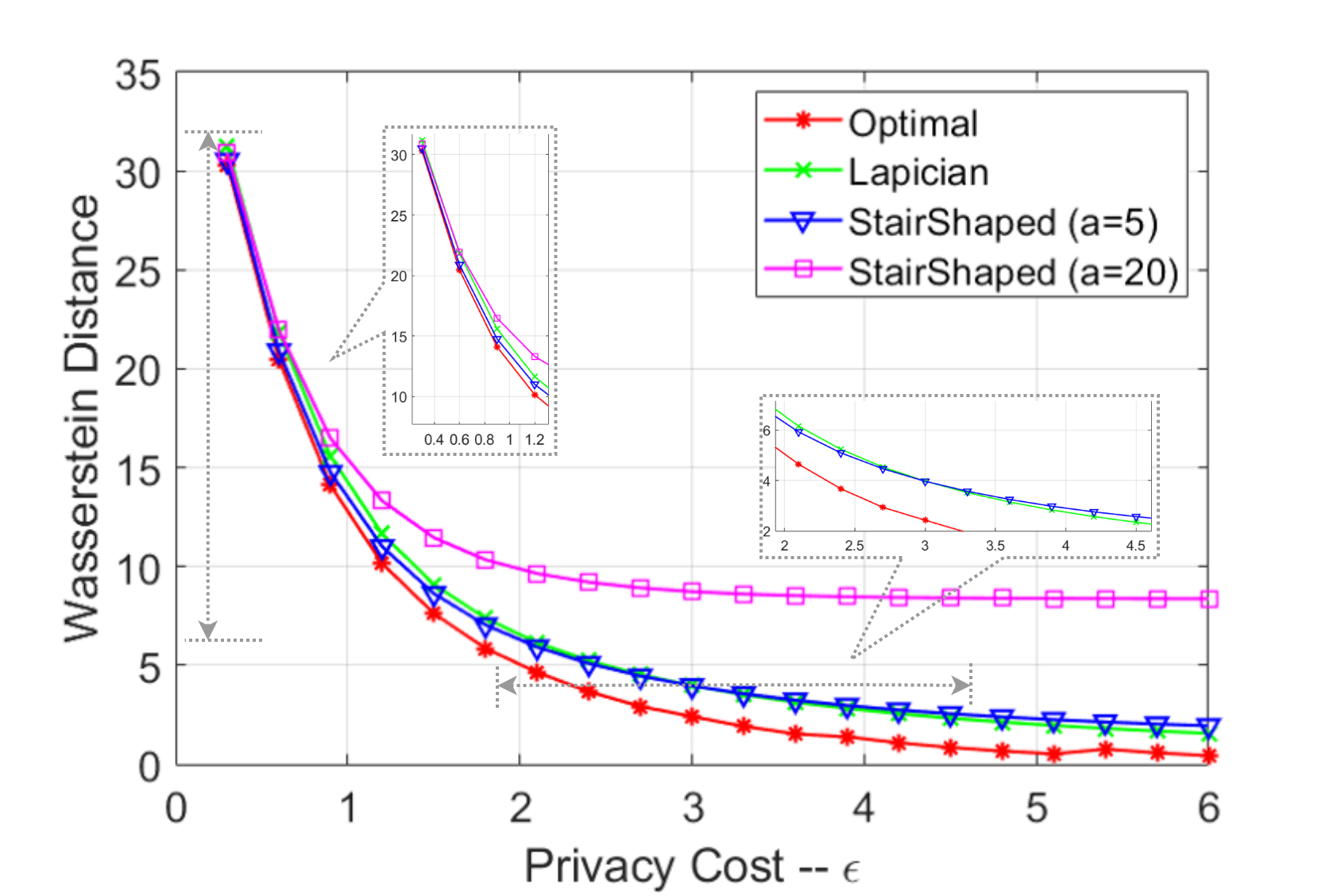}
    }
    \caption{Effect of privacy costs on the mechanism utility with different inputs.}
    \label{fig:Performance comparison 02}
\end{figure}
In this part, we assume a moderate adjacency $m=15$.
% 整体Fig说明
In the trade-off problem, higher privacy cost (a bigger $\epsilon$) implies less utility loss, as reflected in smaller Wasserstein distance.
% 这一趋势可以从图2验证。
% Fig. \ref{fig:dis-epsilon Gaussian} and \ref{fig:dis-epsilon Poisson}
This trend can be verified in Fig. \ref{fig:Performance comparison 02}, with the Gaussian and the Poisson distributed inputs, respectively.
% 比较不同的输入
Both two figures compare the utility performance of four $\epsilon$-DP mechanisms.
% 拉普拉斯和类楼梯很相近
Notice that the result with the Laplacian noise (green cross line) and the Staircase-shaped noise with smaller stair width (blue triangle line) are comparable, especially in the high privacy regime (smaller $\epsilon$).
This is because the two distributions are similar under the parameter settings subject to the same DP constraints.
% epsilon越大，opti的表现越好
Further, as $\epsilon$ increases, the optimality of our mechanism is better represented, with the Wasserstein distance approaching zero, i.e., the statistical properties of the mechanism's input and output can be retained.

% 影响-m
\begin{itemize}
    \item The effect of the adjacency $m$
\end{itemize}

With the other DP parameter set as $\epsilon=2$, the effect of the adjacency on the Wasserstein distance is shown in Fig. \ref{fig:Performance comparison 03}.
% Fig. \ref{fig:dis-m Gaussian} and \ref{fig:dis-m Poisson}
We also select four mechanisms for comparison.
% 邻域和距离正相关
It is observed that the Wasserstein distance and the adjacency are positively correlated.
% 如果数据集中某两个元素差异较大，那么对于该机制而言，即使加上最优噪声，机制有效性的保证也有限。
If the elements in two datasets differ significantly, then even with the optimal $\epsilon$-DP mechanism, the utility guarantee is limited.
% 因为更大的差异需要幅值更大的噪声，而这对机制有效性有很大影响。
Since the larger differences require the noises of greater amplitudes, the mechanism utility is significantly sacrificed.
\setcounter{subfigure}{0}
\begin{figure}[ht]
    \centering
    \subfigure[Input = Gaussian ($\mu\!=\!0, \sigma\!=\!10$).]{\label{fig:dis-m Gaussian}
        \includegraphics[width=0.45\textwidth]{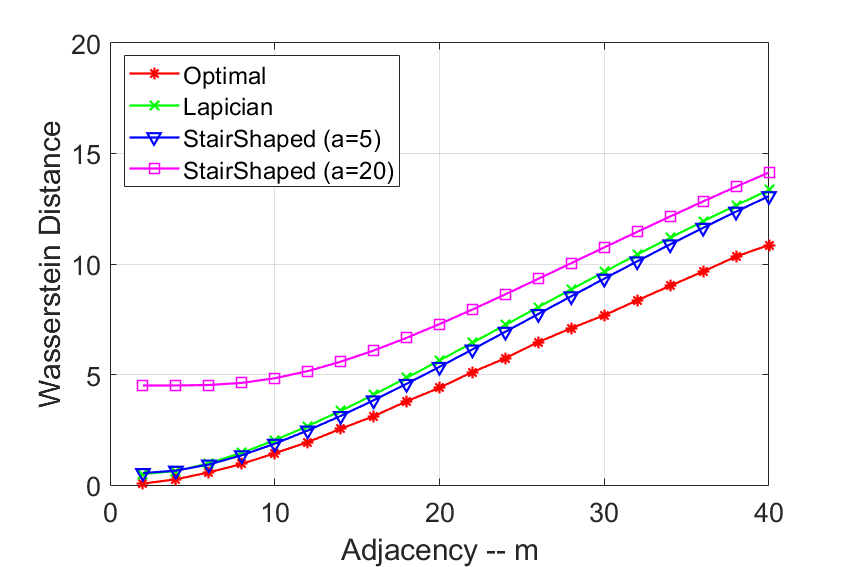}
    }
    \subfigure[Input = Poisson ($\gamma=5$).]{\label{fig:dis-m Poisson}
        \includegraphics[width=0.45\textwidth]{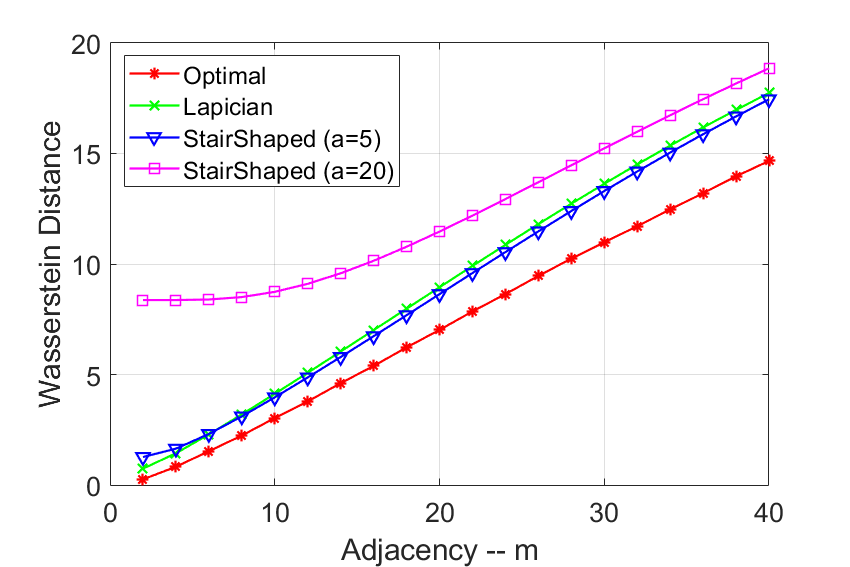}
    }   
    \caption{Effect of adjacencies on the mechanism utility with different inputs.}
    \label{fig:Performance comparison 03}
\end{figure}
% -----讨论为什么没有闭式解
\begin{remark}
    % 在这节仿真中，我们比较了在最优的无固定参数的类楼梯分布和两个标准的有明确参数的类楼梯分布下的的机制有效性。
    In the above simulation, we compare the mechanism utility with the optimal Staircase-shaped distribution with unfixed parameters (red circle line), and two standard Staircase-shaped distributions with explicit parameters (blue triangle line and purple square line).
    % 尽管他们都能满足差分隐私约束，固定参数的类楼梯分布表现稍逊。
    Although they all satisfy the $\epsilon$-DP constraints, the standard fixed \textit{Staircase} mechanism performs slightly worse in guaranteeing the utility.
    % 由于我们定义的有效性测度和机制输入有关，我们将输入作为最优噪声分布的一个影响因素，通过仿真证实这是有效的。
    Since the utility measure we define is related to the mechanism input, we couple the optimal noise probability distribution with the input.
    The effectiveness is confirmed by the simulation.
    % 而由于输入的任意性，我们无法给出最优分布的解析解，取而代之，我们利用问题等价形式，通过线性规划进行求解，得到最优分布。
    Due to the arbitrariness of the inputs, we are unable to give a closed-form expression of the optimal noise distribution independent of the input.
    Instead, we obtain the optimal mechanism with unfixed parameters, by solving an equivalent problem through linear programming.
\end{remark}
% 基于上述仿真说明，我们测量了不同参数下机制的有效性。
% With the above simulation, we measure the utility property for the optimal $\epsilon$-DP mechanism under different parameters.

% Simulation --统计特性
\subsection{Verification of the Mechanism Optimality}
% 为了进一步验证我们所提出机制的最优性，我们将验证在相同隐私保护前提下不同差分隐私机制所保证的有效性。
To further validate the mechanism optimality, we compare the statistical properties of the mechanism utility with three other mechanisms, under the same $\epsilon$-DP guarantees.
% ----- 图说明
% 为消除随机噪声的随机性，我们对其进行100次实验，对有效性（即由WD距离刻画）进行频率统计。
To eliminate the uncertainty of the discrete random noise, we conduct 100 simulation runs for each simulation, and do frequency statistics on the mechanism utility (characterized by the Wasserstein distance).
% 注意到越小的距离意味着越高的有效性。
Notice that the smaller Wasserstein distance implies the higher mechanism utility.

% ----- 结果说明
\begin{figure}[ht]
    \begin{center}
    \includegraphics[width=0.63\textwidth]{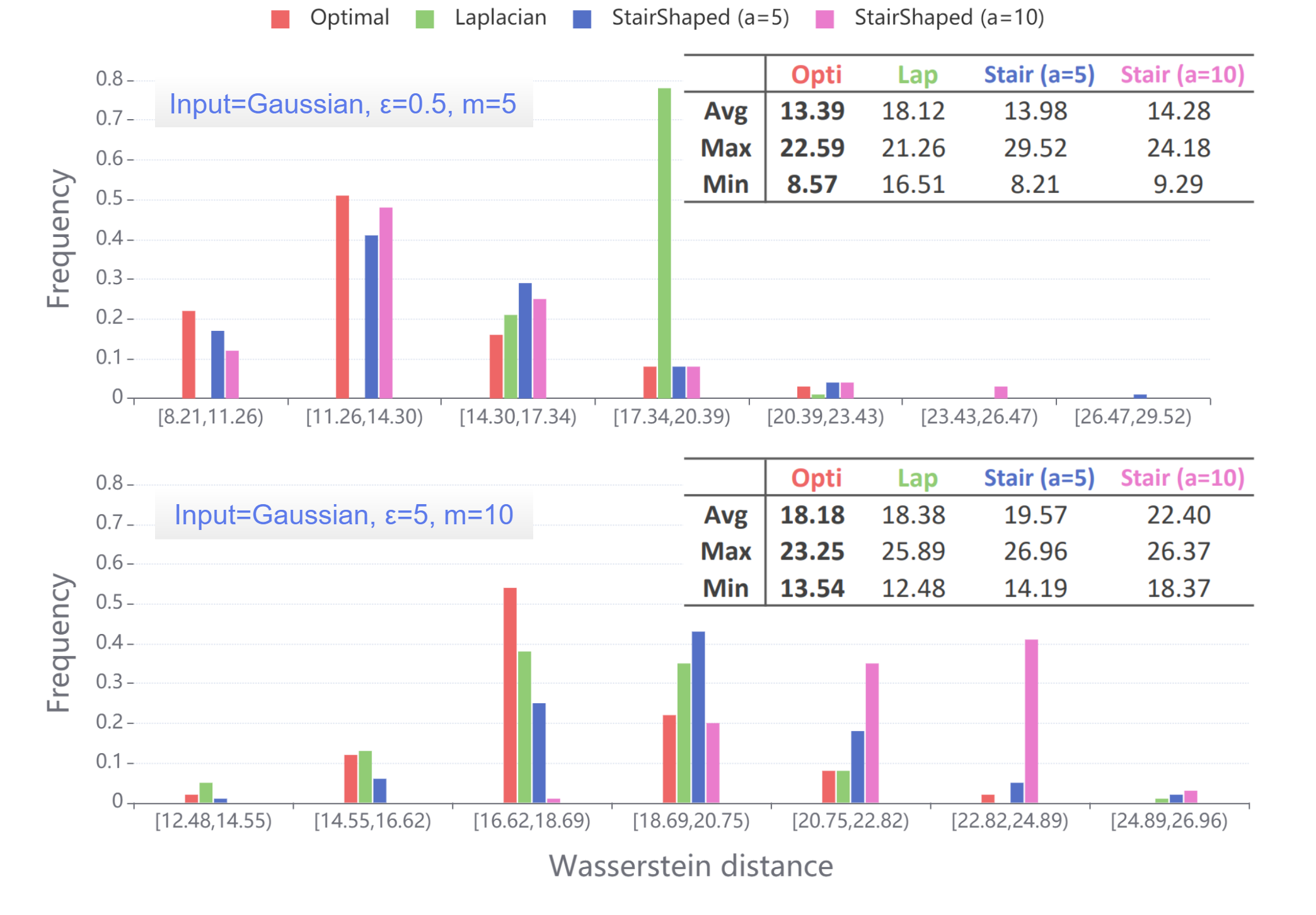}
    \vspace{-15pt}
    \caption{Statistic properties of the Wasserstein Distance.}
    \label{fig:Statistic properties of the Wasserstein Distance}
    \end{center}
\end{figure}
% 这两个图是在不同隐私参数、不同邻域的设定下，各差分隐私机制有效性的比较。
Fig. \ref{fig:Statistic properties of the Wasserstein Distance} shows a comparison of the utility for four $\epsilon$-DP mechanisms, with different privacy level and adjacency parameters settings.
% 整体而言，我们提出的机制有更多的可能性有小的WD距离，即更高的有效性。
Overall, our proposed mechanism has a higher probability of the small Wasserstein distance, i.e., the higher mechanism utility.
% 进一步，为更加直观比较各机制的有效性，我们将结果（平均值、最大值、最小值）统计于下表。
Further, to compare the mechanism utility more clearly, we summarize the results (the average, maximum, minimum Wasserstein distance) in Fig. \ref{fig:Statistic properties of the Wasserstein Distance}.
% 在某些情况下，最优机制的有效性和现有机制相似。
In some cases, the utility guaranteed by the optimal mechanism is similar to the existing mechanism.
% 例如高隐私保护的前提下，它和类楼梯机制相似；低隐私保护时，它和拉普拉斯机制相似。
For example, with higher privacy protection ($\epsilon=0.5$), its utility is similar to the \textit{Staircase} mechanism;
in the low privacy regime, the performance is close to the \textit{Laplacian} mechanism.
% 注意到有时候最优机制的有效性并非最高，很大一部分原因在于噪声原来就充满了不确定性hhh。
Note that the utility for the optimal mechanism is not always the highest, partially due to the uncertainty of the discrete random noise.
% 整体而言，我们提出的差分隐私最优机制能够保证绝大多数情况下机制效用的最优。
In conclusion, the optimal $\epsilon$-DP mechanism ensures the maximum mechanism utility in the vast majority of cases.

\section{Conclusion} \label{sec:Conclusion}
% Conclusion --现在已做的
For the discrete random noise adding mechanisms, we considered the DP conditions, properties and the trade-off between the mechanism utility and privacy level.
For the general DP mechanisms, a sufficient and necessary condition for $\epsilon$-DP and a sufficient condition for $(\epsilon,\delta)$-DP were derived, followed by the DP parameters estimation.
Afterwards, based on the conditions, we analyzed the DP properties for several typical mechanisms.
Furthermore, we took the Wasserstein distance between mechanism inputs and outputs as the utility metric, and built the trade-off issue as a utility-maximization optimization problem.
The proposed optimal mechanism is \textit{Staircase}-shaped, with the parameters depending on the mechanism inputs and the differential privacy requirements.
Extensive simulations were performed to verify its optimality.
% Conclusion --未来待做的
% 对于多个用户的数据，差分隐私
% 将对于保护单一用户隐私的差分隐私结论扩展成保护多个用户的
Future directions include the DP conclusions extension for more individuals, and exploring the correlation between the differential privacy with the homomorphic encryption.

% \appendix
% \subsection{Proof of}

\appendices
% Proof充要条件 --下标
\section{Proof of Theorem \ref{theorem-epsilon}} \label{proof 01}
\begin{proof}
$\Leftarrow$: We prove the necessity by contradiction.
    Assume that
    \begin{equation*}
        \underset{\forall m_0\in \left[ -m,m \right],m_0\in \mathcal{Z}_{\Delta},\forall i\in V}{\mathop{\sup }}\,\frac{{p_{{\theta_i} }}\left( k-m_0 \right)}{{p_{{\theta_i}}}\left( k \right)}=\infty,~
        k\in \mathcal{Z}_{\Delta},
    \end{equation*}
    i.e., for any given large constant $M$, there exists $k_0 \in \mathcal{Z}_{\Delta}$ such that
    \begin{align*}
        \frac{{p_{\theta_i}}\left( k_0- m_0  \right)}{{p_{{\theta_i}}}\left( k_0 \right)}\geq M,
    \end{align*}
    where $m_0\in \mathcal{Z}_{\Delta}\setminus\{0\}$.
    Construct a pair of $m_0$-adjacent state vectors $x_i,y_i \in \mathcal{Z}_{\Delta}$ satisfying
    \begin{equation*}
        \forall i\in V,~
        {\left| {{x_i} - {y_i}} \right|} \le \left\{ {\begin{aligned}
        &m_0, &&{i = {i_0}}; \\
        &0, &&{i \ne {i_0}}.
        \end{aligned}} \right.
    \end{equation*}
    Based on the discrete property of $\mathcal{Z}_\Delta$ and the sign of $m_0$, we divide $m_0 \in \left[-m, m\right]$ into three parts: $m_0 \in \left\{ -\Delta, \Delta \right\}$, $\left\{ j|2\Delta \le j\le m,j\in \mathcal{Z}_{\Delta} \right\}$ and $\left\{ j|-m\le j\le -2\Delta,j\in \mathcal{Z}_{\Delta} \right\}$. 
    Denote $\mathcal{O} \subseteq \mathcal{S}$, where $\mathcal {O}_i$ is a set of the $i$-th column element in $\mathcal{O}$.
    Note that DP is guaranteed if (\ref{eq:privacy}) holds for any given $\mathcal{O}$.
    In the following three parts, we construct the output range $\mathcal{O}_{i_0}$ respectively to derive the contradiction for the necessity proof.
    
    \begin{itemize}
        \item $m_0\in \left\{ -\Delta, \Delta \right\}$
    \end{itemize}
    Define ${{\mathcal{O}}_{i_0}}=\left\{ k|k=y_{i_0}+k_0 \right\}$.
    From (\ref{eq:mechanism}), we have
    \begin{align} \label{11}
        &\frac{\Pr \left\{ \mathcal{A}\left( {{x}_{i_0}} \right)\in {{\mathcal{O}}_{i_0}} \right\}}{\Pr \left\{ \mathcal{A}\left( y_{i_0} \right)\in {{\mathcal{O}}_{i_0}} \right\}}
        =\frac{{p_{{{x}_{i_0}}+{\theta_{i_0}}}}\left( k \right)}{{p_{y_{i_0}+{\theta_{i_0}}}}\left( k \right)}\nonumber\\
        = &\frac{{p_{{{x}_{i_0}}+{\theta_{i_0}}}}\left( y_{i_0}+k_0 \right)}{{p_{y_{i_0}+{\theta_{i_0}}}}\left( y_{i_0}+k_0 \right)}
        =\frac{{p_{{\theta_i}}}\left( k_0- m_0  \right)}{{p_{{\theta_i}}}\left( k_0 \right)}
        = M.
    \end{align}
    
    \begin{itemize}
        \item $m_0\in \left\{ j|2\Delta \le j\le m,j\in \mathcal{Z}_{\Delta} \right\}$
    \end{itemize}
    Define ${{\mathcal{O}}_{i_0}}=\left\{ k|y_{i_0}+k_0\le k\le y_{i_0}+k_0+ m_0-\Delta \right\}$.
    Since $p_{{\theta_{i_0}}}$ is bounded, there exists a constant $C\ge 1$, s.t., 
    \begin{align*}
        \sum\nolimits_{k=k_0}^{k_0+m_0-\Delta}{{p_{{\theta_{i_0}}}}\left( k \right)}
        & ={p_{{\theta_{i_0}}}}\left( k_0 \right)
            +\sum\nolimits_{k=k_0+\Delta }^{k_0+m_0-\Delta}{{p_{{\theta_{i_0}}}}\left( k \right)} \nonumber\\
        & \le C\cdot {p_{{\theta_{i_0}}}}\left( k_0 \right).
    \end{align*}
    Then, one follows that 
    \begin{align} \label{22}
        & \frac{\Pr \left\{ \mathcal{A}\left( {{x}_{i_0}} \right)\in {{\mathcal{O}}_{i_0}} \right\}}{\Pr \left\{ \mathcal{A}\left( y_{i_0} \right)\in {{\mathcal{O}}_{i_0}} \right\}} \nonumber\\ 
        =& \frac
            {\sum\nolimits_{k=y_{i_0}+k_0}^{y_{i_0}+k_0+m_0-\Delta }{{p_{{{x}_{i_0}}+{\theta_{i_0}}}}\left( k \right)}}
            {\sum\nolimits_{{k=y_{i_0}}+k_0}^{y_{i_0}+k_0+m_0-\Delta }{{p_{y_{i_0}+{\theta_{i_0}}}}\left( k \right)}}
        =\frac
            {\sum\nolimits_{k=k_0-m_0}^{k_0-\Delta }{p_{\theta_{i_0}}\left( k \right)}}
            {\sum\nolimits_{k=k_0}^{k_0+m_0-\Delta }{p_{\theta_{i_0}}\left( k \right)}} \nonumber\\ 
        =& \frac
            {{p_{{\theta_{i_0}}}}\left( k_0- m_0  \right)+\sum\nolimits_{k=k_0-m_0+\Delta}^{k_0-\Delta }{p_{\theta_{i_0}}\left( k \right)}}
            {\sum\nolimits_{k=k_0}^{k_0+m_0-\Delta }{p_{\theta_{i_0}}\left( k \right)}} \nonumber\\
        \ge& \frac
            {{p_{{\theta_{i_0}}}}\left( k_0- m_0  \right)}
            {\sum\nolimits_{k=k_0}^{k_0+m_0-\Delta }{p_{\theta_{i_0}}\left( k \right)}}
        \ge \frac
            {{p_{\theta_{i_0}}}\left( k_0- m_0 \right)}
            {C\cdot {p_{{\theta_{i_0}}}}\left( k_0 \right)}
        =\frac{M}{C}.
    \end{align}
    
    \begin{itemize}
        \item $m_0\in \left\{ j|-m\le j\le -2\Delta,j\in \mathcal{Z}_{\Delta} \right\}$
    \end{itemize}
    Define ${{\mathcal{O}}_{i_0}}=\left\{ k|y_{i_0}+k_0+m_0+\Delta \le k\le y_{i_0}+k_0 \right\}$.
    Similarly, we have
    \begin{align}\label{33}
        \!\!\!\! & \frac{\Pr \left\{ \mathcal{A}\left( {{x}_{i_0}} \right)\in {{\mathcal{O}}_{i_0}} \right\}}{\Pr \left\{ \mathcal{A}\left( y_{i_0} \right)\in {{\mathcal{O}}_{i_0}} \right\}} \nonumber\\ 
        =&\frac
            {\sum\nolimits_{k=y_{i_0}+k_0+m_0+\Delta }^{y_{i_0}+k_0}{{p_{{{x}_{i_0}}+{\theta_{i_0}}}}\left( k \right)}}
            {\sum\nolimits_{k=y_{i_0}+k_0+m_0+\Delta }^{y_{i_0}+k_0}{{p_{y_{i_0}+{\theta_{i_0}}}}\left( k \right)}} 
        =\frac
            {\sum\nolimits_{k=k_0+\Delta }^{k_0-m_0 }{p_{\theta_{i_0}}\left( k \right)}}
            {\sum\nolimits_{k=k_0+m_0+\Delta }^{k_0}{p_{\theta_{i_0}}\left( k \right)}}\!\!\! \nonumber\\ 
        =&\frac
            {p_{\theta_{i_0}}\left( k_0-m_0  \right)+\sum\nolimits_{k=k_0+\Delta}^{k_0-m_0-\Delta }{p_{\theta_{i_0}}\left( k \right)}}
            {\sum\nolimits_{k=k_0+m_0+\Delta }^{k_0}{p_{\theta_{i_0}}\left( k \right)}} \nonumber\\ 
        \ge& \frac
            {p_{\theta_{i_0}}\left( k_0-m_0 \right)}
            {C\cdot {p_{\theta_{i_0}}}\left( k_0 \right)}
        =\frac{M}{C}.
    \end{align}
    
    % 注意到若M任意取值，等式就会违背差分隐私定义
    Note that $M$ were to take any value and (\ref{11}), (\ref{22}), (\ref{33}) would violate the $(\epsilon,\delta)$-DP definition ($\delta=0$) in (\ref{eq:privacy}).
    Thus, through the contradictions, we prove that (\ref{eq:c2 discrete}) is a necessary condition for the $\epsilon$-DP mechanism $\mathcal A$.

$\Rightarrow$:  Next, we prove the sufficiency.
    Based on (\ref{eq:mechanism}), we have
    \begin{align} \label{AX}
        & \Pr \{\mathcal{A}(x)\in \mathcal{O}\}
        =\Pr \left\{ x+\theta \in \mathcal{O} \right\} \nonumber\\
        =& \Pr \left\{ {{x}_{i_0}}+{\theta_{i_0}}\in {{\mathcal{O}}_{i_0}} \right\}\prod\limits_{i=1,i\ne i_0}^{n}{\Pr }\left\{ {{x}_i}+{\theta_i}\in {{\mathcal{O}}_i} \right\}
    \end{align}
    and
    \begin{align} \label{AY}
        & \Pr \{\mathcal{A}(y)\in \mathcal{O}\}
        =\Pr \left\{ y+\theta \in \mathcal{O} \right\} \nonumber\\
        =& \Pr \left\{ {y_{i_0}}+{\theta_{i_0}}\in {{\mathcal{O}}_{i_0}} \right\}\prod\limits_{i=1, i\ne i_0}^{n}{\Pr }\left\{ {y_i}+{\theta_i}\in {{\mathcal{O}}_i} \right\}.
    \end{align}
    Due to $x_i=y_i, i\neq i_0$, we have
    \begin{align} \label{XY}
        \prod\limits_{i=1,i\ne i_0}^{n} \!\!\!\!\!  {\Pr }\left\{ {{x}_i}+{\theta_i}\in {{\mathcal{O}}_i} \right\}=\!\!\!\prod\limits_{i=1,i\ne i_0}^{n} \!\!\!\!\! {\Pr }\left\{ {y_i}+{\theta_i}\in {{\mathcal{O}}_i} \right\}.
    \end{align}
    Besides, with the condition in (\ref{eq:c2 discrete}),  it follows that
    \begin{align} \label{x<y}
     & \Pr \left\{ {{x}_{i_0}}+{\theta_{i_0}}\in {{\mathcal{O}}_{i_0}} \right\} 
     =\sum\nolimits_{{{\mathcal{O}}_{i_0}}}{{p_{{{x}_{i_0}}+{\theta_{i_0}}}}\left( k \right)} \nonumber\\ 
     =&  \sum\nolimits_{{{\mathcal{O}}_{i_0}}}{{p_{y_{i_0}+ m_0 +{\theta_{i_0}}}}\left( k \right)} 
     \le \sum\nolimits_{{{\mathcal{O}}_{i_0}}}{{c_b} \cdot {p_{{y_{i_0}}+{\theta_{i_0}}}}\left( k \right)} \nonumber\\ 
     =& {c_b}\Pr \left\{ {y_{i_0}}+{\theta_{i_0}}\in {{\mathcal{O}}_{i_0}} \right\}.
    \end{align}
    Combining (\ref{AX})-(\ref{x<y}), it yields that
    \begin{align} \label{proof 1}
        \Pr \{\mathcal{A}(x)\in \mathcal{O}\}
        & \le {c_b}\Pr \{\mathcal{A}(y)\in \mathcal{O}\} \nonumber\\
        & ={{e}^{\log \left( {c_b} \right)}}\Pr \{\mathcal{A}(y)\in \mathcal{O}\},
    \end{align}
    which satisfies the definition of $\epsilon$-DP. 
    
    Furthermore, comparing (\ref{eq:privacy}) and (\ref{proof 1}), we can easily obtain the estimation of the DP parameter $\epsilon$, i.e.,$\epsilon = \log(c_b)$.
    From (\ref{eq:c2 discrete}), we note that the upper bound $c_b$ relies on the adjacency $m_0$.
    When $m_1 \leq m_2$ holds, we have
    \begin{align*}
        \underset{\forall m_0\in \left[ -m_1,m_1 \right]}{\mathop{\sup }}\,\frac{{p_{{\theta_i}+m_0}}\left( k \right)}{{p_{{\theta_i}}}\left( k \right)}
        \leq \underset{\forall m_0\in \left[ -m_2,m_2 \right]}{\mathop{\sup }}\,\frac{{p_{{\theta_i}+m_0}}\left( k \right)}{{p_{{\theta_i}}}\left( k \right)},
    \end{align*}
    where $m_0,m_1,m_2,k\in \mathcal{Z}_{\Delta}$. Hence, we refer that $c_b$ is an increasing function of $m$.

\end{proof}

% Proof充分条件 --下标
\section{Proof of Theorem \ref{theorem-epsilon-delta-02}} \label{proof 02}
\begin{proof}
    Construct a pair of $m$-adjacent state vectors $x$ and $y$ with ${{x}_{i_0}}={y_{i_0}}+m$ and $x_i=y_i$ (when $i\neq i_0$), where $x_i,y_i \in \mathcal{Z}_{\Delta}$.
    Then, we obtain the following result:
    \begin{align} \label{proof 2}
        & \Pr \{\mathcal{A}(x)\in \mathcal{O}\} 
        =\prod\limits_{i=1}^{n}{\Pr \left\{ \mathcal{A}({x_i})\in {{\mathcal{O}}_i} \right\}}\nonumber\\ 
        =& \Pr \left\{ \mathcal{A}({{x}_{i_0}})\in {{\mathcal{O}}_{i_0}} \right\}\prod\limits_{i=1,i\ne i_0}^{n}{\Pr }\left\{ \mathcal{A}({{x}_i})\in {{\mathcal{O}}_i} \right\} \nonumber\\
        =& \bigg[ \Pr \left\{ \mathcal{A}({{x}_{i_0}})\in {{\mathcal{O}}_{i_0}}\left| \theta \in {\Theta_1} \right. \right\} + \Pr \left\{ \mathcal{A}({{x}_{i_0}})\in {{\mathcal{O}}_{i_0}}\left| \theta \in {\Theta_0} \right. \right\} \bigg] \nonumber\\
            & \quad\times \prod\limits_{i=1,i\ne i_0}^{n}{\Pr }\left\{ \mathcal{A}({{x}_i})\in {{\mathcal{O}}_i} \right\} \nonumber\\
        =& \left[ \sum\nolimits_{{{\mathcal{O}}_{i_0}},\theta \in {\Theta_1}}{{p_{{{x}_{i_0}}+{\theta_{i_0}}}}\left( k \right)}+\sum\nolimits_{{{\mathcal{O}}_{i_0}},\theta \in {\Theta_0}}{{p_{{{x}_{i_0}}+{\theta_{i_0}}}}\left( k \right)} \right] \nonumber\\
            & \quad\times \prod\limits_{i=1,i\ne i_0}^{n}{\Pr }\left\{ \mathcal{A}({x_i})\in {{\mathcal{O}}_i} \right\} \nonumber\\ 
        =& \left[ \sum\nolimits_{{{\mathcal{O}}_{i_0}},\theta \in {\Theta_1}}{{p_{{y_{i_0}}+m+{\theta_{i_0}}}}\left( k \right)}+\sum\nolimits_{{{\mathcal{O}}_{i_0}},\theta \in {\Theta_0}}{{p_{{{x}_{i_0}}+{\theta_{i_0}}}}\left( k \right)} \right] \nonumber\\
            & \quad\times \prod\limits_{i=1,i\ne i_0}^{n}{\Pr }\left\{ \mathcal{A}({y_i})\in {{\mathcal{O}}_i} \right\} \nonumber\\
        \le& {c_b}\sum\nolimits_{{{\mathcal{O}}_{i_0}},\theta \in {\Theta_1}}{{p_{y_{i_0}+ {\theta_{i_0}} }}\left( k \right)}\prod\limits_{i=1,i\ne i_0}^{n}{\Pr }\left\{ \mathcal{A}({y_i})\in {{\mathcal{O}}_i} \right\} \nonumber\\ 
            & \quad+\sum\nolimits_{{\Theta_0}}{{p_{{\theta_{i_0}}}}\left( k \right)}\prod\limits_{i=1,i\ne i_0}^{n}{\Pr }\left\{ \mathcal{A}({y_i})\in {{\mathcal{O}}_i} \right\} \nonumber\\ 
        =& {c_b}\Pr \left\{ \mathcal{A}({y_{i_0}})\in {{\mathcal{O}}_{i_0}} \right\}\prod\limits_{i=1,i\ne i_0}^{n}{\Pr }\left\{ \mathcal{A}({y_i})\in {{\mathcal{O}}_i} \right\} \nonumber\\ 
            & \quad+\sum\nolimits_{{\Theta_0}}{{p_{{\theta_{i_0}}}}\left( k \right)}\prod\limits_{i=1,i\ne i_0}^{n}{\Pr }\left\{ \mathcal{A}({y_i})\in {{\mathcal{O}}_i} \right\} \nonumber\\ 
        \le& {c_b}\Pr \{\mathcal{A}(y)\in \mathcal{O}\}+\delta.
    \end{align}
    Combining (\ref{eq:privacy}) and (\ref{proof 2}), we conclude that
    \begin{align*}
        c_b=e^\epsilon \Rightarrow \epsilon = \log(c_b).
    \end{align*}
    Thus, we have completed the proof.
\end{proof}

% Proof离散高斯 --下标
\section{Proof of Theorem \ref{th:Gaussian}} \label{proof 03}
\begin{proof}
    % 证明不满足 epsilon-dp
    First, we prove that the \textit{Gaussian} mechanism is not $\epsilon$-DP.
    Due to the symmetry of the term $m_0$, we take $m_0\in \left\{ j|0 \le j\le m,j\in \mathcal{Z}_{\Delta} \right\}$ in this proof.
    The following proof can be applied to the situation $m_0<0$ similarly.
    Based on the relationship between the probability point $k$ and the PMF parameter $\mu$, we divide the point $k$ into three parts.
    
    \textit{Case 1:} $k \ge \mu+m_0$.
    Here the PMF is decreasing.
    Based on the PMF in (\ref{eq:discrete Gaussian distribution}), we obtain that
    \begin{align*}
        & \left| \frac
            {p_{\theta_i}\left( k-m_0 \right)}
            {p_{\theta_i}\left( k \right)} \right|
        = \frac
            {{\int_{{k_0}}^{k + \Delta } {{e^{ - \frac{{{{\left( {{\theta _i} - m_0- \mu } \right)}^2}}}{{2{\sigma ^2}}}}}d{\theta _i}} }}
            {{\int_k^{k + \Delta } {{e^{ - \frac{{{{\left( {{\theta _i} - \mu } \right)}^2}}}{{2{\sigma ^2}}}}}d{\theta _i}} }} \\\nonumber
        \le& \frac
            {{\Delta \!\cdot\! {e^{ - \frac{1}{{2{\sigma ^2}}}{{\left( {k - m_0- \mu } \right)}^2}}}}}
            {{\Delta \!\cdot\! {e^{ - \frac{1}{{2{\sigma ^2}}}{{\left( {k + \Delta  - \mu } \right)}^2}}}}} 
        \!=\! {e^{\frac{1}{{2{\sigma ^2}}}\left( {m_0+ \Delta } \right)\left( {2k + \Delta - m_0 - 2\mu } \right)}}
        \!=\! Q. \!\!
    \end{align*}
    
    \textit{Case 2:} $k \le \mu - \Delta$.
    Here the PMF is increasing.
    Similarly, we have
    \begin{align*}
        \left| \frac
            {p_{\theta_i}\left( k-m_0 \right)}
            {p_{\theta_i}\left( k \right)} \right|
        \le& \frac{{\Delta  \cdot {e^{ - \frac{1}{{2{\sigma ^2}}}{{\left( {k + \Delta -m_0- \mu } \right)}^2}}}}}{{\Delta  \cdot {e^{ - \frac{1}{2\sigma ^2}{\left( {k - \mu } \right)}^2}}}} \\\nonumber
        =& {e^{\frac{1}{{2{\sigma ^2}}}\left( {{m_0} - \Delta } \right)\left( {2k + \Delta -m_0- 2\mu } \right)}}
        =R.
    \end{align*}
        
    \textit{Case 3:} $ \mu \le k \le \mu+m_0-\Delta$.
    It shows that
    \begin{align*}
        \left| \frac
            {p_{\theta_i}\left( k-m_0 \right)}
            {p_{\theta_i}\left( k \right)} \right|
        \le& \frac{{\Delta  \cdot {e^{ - \frac{1}{{2{\sigma ^2}}}{{\left( {k + \Delta -m_0- \mu } \right)}^2}}}}}{{\Delta  \cdot {e^{ - \frac{1}{{2{\sigma ^2}}}{{\left( {k + \Delta  - \mu } \right)}^2}}}}} \\\nonumber
        =&  {e^{\frac{1}{{2{\sigma ^2}}}{m_0}\left( {2k + 2\Delta -m_0- 2\mu } \right)}}
        =S.
    \end{align*}
    It shows that with the upper bounds of $k$ in \textit{Case 1,2}, the upper bounds $R$ and $S$ exist.
    However, we have $Q\to \infty$ as $k$ increases.
    According to Theorem \ref{theorem-epsilon}, there does not exist a bounded parameter $c_b$ to guarantee the finite privacy loss.
    Thus, we have that the \textit{Gaussian} mechanism is not $\epsilon$-DP.

    % 证明满足 (epsilon, delta)-dp
    Next, we apply Theorem \ref{theorem-epsilon-delta-02} to prove that the \textit{Gaussian} mechanism guarantees $(\epsilon,\delta)$-DP.
    Given a constant $M \ge m$, for $k \in \left[{ - M,M} \right]$, the DP parameter $\epsilon$ is bounded by
    \begin{align} \label{Gaussian epsilon}
        \epsilon  
        &= \log(c_b)
            \le \log \left( {\max \left\{ {Q,R,S} \right\}} \right) \\ \nonumber
        &\le \mathop {\max }\limits_{m_0 \in \left[ {- m,m} \right]}\! \left[ {\frac1{{2{\sigma ^2}}}\left( {m_0\!+\!\Delta} \right)\left( {2M\! +\! \Delta\! -\! {m_0}\! -\! 2\mu } \right)} \right]\!\!\!\! \\ \nonumber 
        &= \frac{1}{{2{\sigma ^2}}}\left( {m + \Delta } \right)\left( {2M + \Delta  - m - 2\mu } \right).
    \end{align}
    Meanwhile, the probability of error $\delta$ is bounded by
    \begin{align} \label{Gaussian delta}
        \delta  
        & \le \sum\nolimits_{\theta_i \in {\Phi _i}} {{p_{{\theta _i}}}\left( {k } \right)} \nonumber\\
        & = \frac1{\sqrt {2\pi \sigma^2 }}\sum\nolimits_{\theta_i \in {\Phi _i}} {\int_{k }^{k + \Delta} {e^{ - \frac{{\left( {\theta_i - \mu} \right)}^2}{2{\sigma^2}}}d\theta_i} },
    \end{align}
    where $\Phi _i = \left( { - \infty , - M} \right] \cup \left[ {M,\infty } \right) \subseteq \mathcal{Z}_\Delta$ is the $i$-th dimensional noise range of $\mathcal{O}_i$. 
    Finally, one infers that the \textit{Gaussian} mechanism is $({\epsilon,\delta})$-DP, where the DP parameters $\epsilon$ and $\delta$ are estimated by (\ref{Gaussian epsilon}) and (\ref{Gaussian delta}), respectively. 
    
\end{proof}

% Proof等价问题 --具体表达
\section{Proof of Theorem \ref{theorem-LP}} \label{proof LP}
\begin{proof}
    % 关键在于使原问题显式的包含优化变量，为之后转化为优化问题提供基础。
    The key idea of the equivalence proof is to make the optimization variables $p_\theta$ explicitly involved in the problem, which is a basis to convert the primal problem into a conventional convex optimization problem.
    % 假设输入有限
    To make the derivation more clear, we represent the mechanism input probability distribution as a finite one:
    \begin{align} \label{x-simplified}
        \!\! p_x \!=\!
        \left(\!\! {\begin{array}{*{20}{c}}
            p_x(k_0\!+\!\Delta),\!\!\!\!\! &p_x(k_0\!+\!2\Delta),\!\!\!\!\! &\dots,\!\!\!\!\! &p_x(k_0\!+\!n\Delta)\!\!\!
        \end{array}} \right)^T \!\!,
    \end{align}
    where $k_0\in \mathcal{Z}_\Delta$ and $n\in \mathbb{R}$.
    Note that $k_0$ is an arbitrary value and $n\rightarrow \infty$.
    We can have the finite input (\ref{x-simplified}) replace the infinite one (\ref{new p_x}).
    % W_0, W_1
    Our goal is to prove that $W_0=W_1$, where
    \begin{align*}
        W_0 = \sum\nolimits_k {\left| {{P_x}\left( k \right) - {P_{x + \theta }}\left( k \right)} \right|},~
        W_1 = \sum\nolimits_k {\left| {{p_x^T} M_k {p_\theta}} \right|}.
    \end{align*}
    
    % 优化变量的等价性
    First, we discuss the equivalence of \textit{the objective variables}.
    With the introduction of the noise distribution $p_\theta$ in (\ref{new p_theta}), the original objective variable $p_\theta(k)$ aiming at every probability value is contained in this column vector $p_\theta$, which is expressed equivalently but more concisely.
    
    % 目标函数的等价性
    Next, we prove the equivalence of \textit{two objective functions}, $W_0$ and $W_1$.
    We aim to convert the CDF into PMF, which contains the optimization variables more explicitly.
    % SEG 1
    \begin{itemize}
        \item The input cumulative probability at point $k$: ${P_x}\left( k \right)$.
    \end{itemize}
    \vspace{-2pt}
    \begin{align} \label{proof:seg1}
        {P_x} \left( k \right)
         = p_x(-\infty) + \cdots + p_x(k-\Delta) + p_x(k). 
    \end{align}
    Combining (\ref{x-simplified}) and (\ref{proof:seg1}), we easily have
    \begin{align*}
        {P_x}\left( k \right) =\left\{ \begin{aligned}
            &0, && k\le k_0; \\
            &1, && k \ge k_0+n\Delta,  \\
        \end{aligned} \right.
    \end{align*}
    due to the boundedness property of the CDF:
    \begin{align*}
        \mathop {\lim }\limits_{k \to -\infty } {P_x}\left( k \right) = 0,\mathop {\lim }\limits_{k \to +\infty } {P_x}\left( k \right) = 1.
    \end{align*}
    Then, for $k \in \left\{ {{k^{'}}\left| {{k_0} + \Delta  \le {k^{'}} \le {k_0} + n\Delta ,{k^{'}} \in {Z_\Delta }} \right.} \right\}$,
    \begin{align*}
        {P_x}\left( k \right)
        &= p_x({k_0}+\Delta) + p_x({k_0}+2\Delta) + \cdots + p_x(k)\\\nonumber
        &=\!\! \sum\limits_{i=k_0+\Delta }^k \!\!\!{{p_x}\left( i \right)} \!\cdot\!
            \underbrace {\left( { \cdots +  {p_\theta }\left( {k \!-\! \Delta } \right) +   {p_\theta }\left(k \right) + \cdots} \right)}_{ = 1}\\\nonumber
        &= p _x^T\underbrace {\left( {\begin{array}{*{20}{c}}
         - & 1 & - \\
           & \vdots & \\
         - & 1 & - \\
         - & 0 & - \\
           & \vdots &
        \end{array}} \right)}_{M_k^1}{p _\theta}
        = p_x^T\underbrace {\left( {\begin{array}{c}
            {\mathbf{1}_{r}}\\
            {\mathbf{0}_{n-r}}
            \end{array}} \right)}_{M_k^1}{p_\theta },
    \end{align*}
    where the row number $r$ of the matrix $M_k^1$ with all elements $1$ is related to the relationship between $k$ and $k_0$, i.e., 
    \begin{align*}
        r = 1 + {{\left[ {k - \left( {{k_0} + \Delta } \right)} \right]} \mathord{\left/ {\vphantom {{\left[ {k - \left( {{k_0} + \Delta } \right)} \right]} \Delta }} \right. \kern-\nulldelimiterspace} \Delta }.
    \end{align*}
    The column number of the matrix $M_k^1$ depends on the noise distribution $p_\theta$.
    % 解释下M
    Denote the matrix $M_k^1$ as the result derived from the input ${P_x}$ at point $k$, and the matrix $M^2$ as the one from the output ${P_{x+\theta}}$, which will be obtained later.
    % 对k < k_0 和 k > k_0+n\Delta的，
    Specifically, we have $M_k^1 = \mathbf{0}$ for $k<k_0+\Delta$ and $M_k^1 = \mathbf{1}$ for $k>k_0+n\Delta$, where $\mathbf{0}$ and $\mathbf{1}$ are two matrixes full of elements $0$ and $1$, respectively.
    
    % SEG 2
    \begin{itemize}
        \item The output cumulative probability at point $k$: ${P_{x+\theta}}\!\left( k \right)$.
    \end{itemize}
    \vspace{-2pt}
    \begin{align} \label{proof:seg2}
        {P_{x+\theta}}\!\left( k \right)
        \!= p_{x+\theta}(\!-\!\infty\!)\! +\! \cdots + p_{x+\theta\!}(k-\Delta) + p_{x+\theta}(k).\!\!
    \end{align}
    With the finite representation of the input distribution $p_x$ in (\ref{x-simplified}), the output probability at point $k$ in (\ref{convolution}) is reformulated as following, where the upper and the lower bounds of the summation is further clarified:
    \begin{align} \label{proof:convolution}
        p_{x + \theta}\left(k \right) = \sum\nolimits_{i = {k_0} + \Delta }^{{k_0} + n\Delta } {{p_x}\left( i \right){p_\theta }\left( {k - i} \right)},~ k,i\in \mathcal{Z}_\Delta.
    \end{align}
    % 根据卷积一步步展开
    Then, we substitute every element in (\ref{proof:seg2}) with (\ref{proof:convolution}), and make simplification with the goal of $p_x$ and $p_{\theta}$.
    Consequently, we have
    \begin{align*}
        {P_{x+\theta}}\left( k \right)
        =& \sum\limits_{i =  - \infty }^{k - {k_0} - \left( {n + 1} \right)\Delta } {\left( {\sum\limits_{j = {k_0} + \Delta }^{{k_0} + n\Delta } {{p_x}\left( j \right)} } \right){p_\theta }\left( i \right)} \\\nonumber
            &+ \sum\limits_{i = k - {k_0} - n\Delta }^{k - {k_0} - \Delta } {\left( {\sum\limits_{j = {k_0} + \Delta }^{k - i} {{p_x}\left( j \right)} } \right){p_\theta }\left( i \right)} \\\nonumber
        =& p _x^T
            \underbrace {
            \bordermatrix{
            &\!\!       &\!\!\!\!       &\!\!\!\![k_1]      &\!\!\!\!   &\!\!\!\!   &\!\!\!\![k_2]  &\!\!\!\!    &\!\!\!\! \cr
            &\!\!\cdots &\!\!\!\!1      &\!\!\!\!1  &\!\!\!\!\cdots &\!\!\!\!1  &\!\!\!\!1      &\!\!\!\!0  &\!\!\!\!\cdots     \cr
            &\!\!\cdots &\!\!\!\!1      &\!\!\!\!1  &\!\!\!\!\cdots &\!\!\!\!1  &\!\!\!\!0      &\!\!\!\!0  &\!\!\!\!\cdots     \cr
            &\!\!\vdots &\!\!\!\!\vdots &\!\!\!\!\vdots     &\!\!\!\!{\mathinner{\mkern2mu\raise1pt\hbox{.}\mkern2mu
             \raise4pt\hbox{.}\mkern2mu\raise7pt\hbox{.}\mkern1mu}} &\!\!\!\!\cdots &\!\!\!\!\vdots &\!\!\!\!\vdots &\!\!\!\!\vdots \cr
            &\!\!\cdots &\!\!\!\!1      &\!\!\!\!1  &\!\!\!\!0      &\!\!\!\!\cdots &\!\!\!\!0  &\!\!\!\!0  &\!\!\!\!\cdots
            }
            }_{M_k^2}
            {p _\theta},
    \end{align*}
    where the column vector $[k_t]$ corresponds to the element at the corresponding position in the column vector $p_\theta$, i.e., $p_\theta(k_t)$.
    Note that $k_1=k-k_0-n\Delta, k_2=k-k_0-\Delta$ are the two key points.
    Before the column $[k_1]$, all elements in the matrix $M_k^2$ are $1$, and after the column $[k_2]$, all elements are $0$.
    % The columns $[k_1],[k_2]$ correspond to the element at the corresponding position in the column vector $p_\theta$, i.e., $p_\theta(k_1), p_\theta(k_2)$.
    % 整体目标函数等价性 M_1, M_2总结
    So far, we make the two CDFs at the point $k$, ${P_x(k)}$ and ${P_{x+\theta}(k)}$, related to the input distribution $p_x$ and the noise distribution $p_{\theta}$.
    The concrete information about point $k$ is contained in the matrix $M_k$, which only depends on $k$, where
    \begin{align} \label{proof:M_k}
        M_k= M_k^1-M_k^2.
    \end{align}
    Then, we have
    \begin{align}\label{proof:objective}
        W_0
        &= \sum\nolimits_k {\left| {{P_x}\left( k \right) - {P_{x + \theta }}\left( k \right)} \right|}\\\nonumber
        &= \sum\nolimits_k {\left| {p_x^T\left( {M_k^1 - M_k^2} \right){p_\theta }} \right|} = \sum\nolimits_k {\left| {p_x^T{M_k}{p_\theta }} \right|}
        = W_1.
    \end{align}
    
    % 约束的等价性
    Finally, we prove \textit{the constraints} in the problem $\textbf{P}_\textbf{1}$ are equivalent to the ones in $\textbf{P}_\textbf{0}$.
    % 差分隐私约束
    \begin{itemize}
        \item The DP constraint in (\ref{P0-b})
    \end{itemize}
    Based on the necessary DP condition (\ref{eq:nece}), we have that ${p_\theta}\left( {k} \right) \neq 0$.
    Then, we rewrite this constraint as an inequality constraint (detailed expression in (\ref{final-constraint-1})),  i.e.,
    \begin{align} \label{proof:constraint-dp-2}
        \forall k,~ m_0\in \left[ -m,m \right],~
        &\frac{{{p_\theta }\left( {k-m_0} \right)}}{{{p_\theta }\left( k \right)}} \le c_b = {e^\epsilon} \\\nonumber
        \Leftrightarrow~& {{p_\theta}\!\left({k\! -\! m_0}\right)} \!-\! {e^\epsilon}\! \cdot\! {{p_\theta}\!\left( k \right)} \le 0.
    \end{align}
    Further, we transform (\ref{proof:constraint-dp-2}) into a matrix form:
    \begin{align} \label{proof:AX=b}
        \underbrace {
        \bordermatrix{
            &\!\![k_1]  &\!\!\!\!       &\!\!\!\!               &\!\!\!\![k_2]  &                   &                   &\!\!\!\![k_3]      \cr
            &\!\!{-e^\epsilon}  &\!\!\!\!\cdots &\!\!\!\!0      &\!\!\!\!1      &\!\!\!\!0          &\!\!\!\!\cdots     &\!\!\!\!0          \cr
            &\!\!\vdots &\!\!\!\!\ddots         &\!\!\!\!\vdots &\!\!\!\!\vdots &\!\!\!\!\vdots     &\!\!\!\!\ddots     &\!\!\!\!\vdots     \cr
            &\!\!0      &\!\!\!\!\cdots &\!\!\!\!{-e^\epsilon}  &\!\!\!\!1      &\!\!\!\!0          &\!\!\!\!\cdots     &\!\!\!\!0          \cr
            &\!\!0      &\!\!\!\!\cdots &\!\!\!\!0              &\!\!\!\!1      &\!\!\!\!{-e^\epsilon} &\!\!\!\!\cdots  &\!\!\!\!0          \cr
            &\!\!\vdots &\!\!\!\!\ddots &\!\!\!\!\vdots         &\!\!\!\!\vdots &\!\!\!\!\vdots     &\!\!\!\!\ddots     &\!\!\!\!\vdots     \cr
            &\!\!0      &\!\!\!\!\cdots &\!\!\!\!0              &\!\!\!\!1      &\!\!\!\!0          &\!\!\!\!\cdots     &\!\!\!\!{-e^\epsilon}
            }
        }_{A_k}
        \cdot~ {p_\theta}
        \le
        \underbrace {\left(\! {\begin{array}{*{20}{c}}
            0\\
            0\\
            \vdots \\
            0\\
            \vdots \\
            0
        \end{array}} \!\right)}_b,
    \end{align}
    where $k_1=k-m, k_2=k, k_3=k+m$.
    The three columns correspond to the elements $p_\theta(k_1),p_\theta(k_2),p_\theta(k_3)$, respectively.
    Since for every point $k$, we have $(2m+1)$ $\epsilon$-DP constrains, the row numbers of the matrix $A_k$ is $2m+1$.
    The column number relies on the length of the noise distribution $p_\theta$.
    % 将对应的k进行列组合
    Go through all the points $k$, and put $A_k$ together with the corresponding element $k$, we will have the equivalent form of the DP constraint:
    \begin{align} \label{proof:constrain-dp}
        \epsilon = \log(c_b)
        \Leftrightarrow \frac{{{p_\theta }\left( {k-m_0} \right)}}{{{p_\theta }\left( k \right)}} \le {e^{\epsilon}}
        \Leftrightarrow A p_{\theta} \le b,
    \end{align}
    where
    $A\!\!=\!\!{\left(\!\! {\begin{array}{*{20}{c}} \cdots &\!\!{{A_{k = 0}}} &\!\!{{A_{k = \Delta }}} &\!\!{{A_{k = 2\Delta }}} &\!\! \cdots \end{array}} \!\!\right)^T}$ and $b=\mathbf{0}^T$.
    % 和为1的约束
    \begin{itemize}
        \item The total sum constraint in (\ref{P0-c})
    \end{itemize}
    \begin{align} \label{proof:constrain-sum}
        \sum\nolimits_k {{p_\theta }(k)} \!=\! 1
        &\Leftrightarrow \sum\nolimits_k {\left| {{p_\theta }(k)} \right|}  \!=\! 1\\\nonumber
        &\Leftrightarrow {\left\| {{p_\theta }} \right\|_1} = 1
        \Leftrightarrow \left| {{p_\theta }} \right| = 1.
    \end{align}
    % 大于0的约束
    \begin{itemize}
        \item The positive value constraint in (\ref{P0-d})
    \end{itemize}
    \begin{align} \label{proof:constrain-positive}
        \forall k,{p_\theta}\!\left( k \right) \!>\! 0
        &\Leftrightarrow {\left(\!\! {\begin{array}{*{20}{c}}
             \cdots, &\!\!\!\!{{p_\theta}\!\left( \Delta  \right) \!>\! 0}, &\!\!\!{{p_\theta}\!\left( 2\Delta  \right) \!>\! 0}, &\!\!\!\cdots 
            \end{array}} \!\!\right)^T}\!\!\! \\\nonumber
        &\Leftrightarrow {\left(\!\! {\begin{array}{*{20}{c}}
             \cdots, &\!\!\!{{p_\theta }\left( \Delta  \right)}, &\!\!\!{{p_\theta }\left(2\Delta  \right)}, &\!\!\!\cdots 
            \end{array}} \!\!\right)^T} \succ 0 \\\nonumber
        &\Leftrightarrow {p_\theta } \succ 0.
    \end{align}
    % 整体约束等价性总结
    By now, we have given the equivalent form of three constraints in problem $\textbf{P}_\textbf{0}$ with (\ref{proof:constrain-dp}), (\ref{proof:constrain-sum}) and (\ref{proof:constrain-positive}), respectively.
    
    % 整体等价性总结
    Thus, with (\ref{proof:objective}), (\ref{proof:constrain-dp})-(\ref{proof:constrain-positive}), we obtain the whole equivalent form of the original optimization problem $\textbf{P}_\textbf{0}$:
    \begin{align*}
        \mathop {\min }\limits_{p_\theta}~ & W_1(x, x+\theta) = \sum\nolimits_k {\left| {{p_x^T} M_k {p_\theta}} \right|} \\[7pt]
        \text{s.t.}~~
        & A p_{\theta} \le b,
        ~ \left | p_{\theta} \right |  = 1,
        ~ p_{\theta} \succ 0,
    \end{align*}
    where the equivalence is proved by the objective variable, the objective function and the constraints, respectively.
    
\end{proof}

% \section*{Acknowledgment}
% ------ xxx

\balance

% bibliography --格式调整/大小写
% \bibliographystyle{IEEEtran}
\bibliographystyle{ieeetr}
\bibliography{main-v.bib}

\begin{IEEEbiographynophoto}{Shuying Qin} 
(S'22) is currently an undergraduate in the Department of Automation, Shanghai Jiao Tong University, Shanghai, China.
Her research interests include privacy and security in network systems.
\end{IEEEbiographynophoto}

\begin{IEEEbiographynophoto}{Jianping He}
(SM'19) is currently an associate professor in the Department of Automation at Shanghai Jiao Tong University.
He received the Ph.D. degree in control science and engineering from Zhejiang University, Hangzhou, China, in 2013, and had been a research fellow in the Department of Electrical and Computer Engineering at University of Victoria, Canada, from Dec. 2013 to Mar. 2017.
His research interests mainly include the distributed learning, control and optimization, security and privacy in network systems.

Dr. He serves as an Associate Editor for IEEE Trans. on Control of Network Systems, IEEE Open Journal of Vehicular Technology and KSII Trans. Internet and Information Systems.
He was also a Guest Editor of IEEE TAC, IEEE TII, International Journal of Robust and Nonlinear Control, etc.
He was the winner of Outstanding Thesis Award, Chinese Association of Automation, 2015. He received the best paper award from IEEE WCSP'17, the best conference paper award from IEEE PESGM'17, the finalist best student paper award from IEEE ICCA'17, and the finalist best conference paper award from IEEE VTC'20-Fall.
\end{IEEEbiographynophoto}

\begin{IEEEbiographynophoto}{Chongrong Fang}
(M'21)
is currently an Assistant Professor with the Department of Automation, Shanghai Jiao Tong University, Shanghai, China.
He received the B.Sc. degree in automation and the Ph.D. degree in control science and engineering from Zhejiang University, Hangzhou, China, in 2015 and 2020, respectively.
His research interests include anomaly detection and diagnosis in cyber-physical systems and cloud networks.
\end{IEEEbiographynophoto}

\begin{IEEEbiographynophoto}{James Lam}
% https://orcid.org/0000-0002-5557-6213
received a B.Sc. (1st Hons.) degree in Mechanical Engineering from the University of Manchester, and was awarded the Ashbury Scholarship, the A.H. Gibson Prize, and the H. Wright Baker Prize for his academic performance. He obtained the MPhil and Ph.D. degrees from the University of Cambridge. He is a Croucher Scholar, Croucher Fellow, and Distinguished Visiting Fellow of the Royal Academy of Engineering, and Cheung Kong Chair Professor. Prior to joining the University of Hong Kong in 1993 where he is now Chair Professor of Control Engineering, he was a faculty member at the City University of Hong Kong and the University of Melbourne.

Professor Lam is a Chartered Mathematician (CMath), Chartered Scientist (CSci), Chartered Engineer (CEng), Fellow of Institute of Electrical and Electronic Engineers (FIEEE), Fellow of Institution of Engineering and Technology (FIET), Fellow of Institute of Mathematics and Its Applications (FIMA), Fellow of Institution of Mechanical Engineers (FIMechE), and Fellow of Hong Kong Institution of Engineers (FHKIE).

He is Editor-in-Chief of IET Control Theory and Applications, Journal of The Franklin Institute and Proc. IMechE Part I: Journal of Systems and Control Engineering, Subject Editor of Journal of Sound and Vibration, Editor of Asian Journal of Control, Senior Editor of Cogent Engineering, Section Editor of IET Journal of Engineering, Consulting Editor of International Journal of Systems Science, Associate Editor of Automatica and Multidimensional Systems and Signal Processing.

His research interests include model reduction, robust synthesis, delay, singular systems, stochastic systems, multidimensional systems, positive systems, networked control systems and vibration control. He is a Highly Cited Researcher in Engineering (2014, 2015, 2016, 2017, 2018, 2019, 2020) and Computer Science (2015).
\end{IEEEbiographynophoto}

\end{document}